\documentclass[11pt]{article}

\usepackage[margin=1in]{geometry}
\usepackage{amsmath,amssymb,amsthm,mathtools}
\usepackage{bm}
\usepackage{enumitem}
\usepackage{algorithm}
\usepackage{algpseudocode}
\usepackage{graphicx} 
\usepackage{caption}
\usepackage{booktabs}
\usepackage{xcolor}
\usepackage{float}
\usepackage{tikz}
\usetikzlibrary{decorations.pathreplacing}
\usepackage[authoryear,round]{natbib}
\usepackage{hyperref}
\hypersetup{colorlinks=true,linkcolor=blue,citecolor=blue,urlcolor=blue}
\usepackage{cleveref}

\makeatletter
\providecommand{\theHALG@line}{}
\renewcommand{\theHALG@line}{\thealgorithm.\arabic{ALG@line}}
\makeatother

\newtheorem{theorem}{Theorem}
\newtheorem{proposition}{Proposition}[section]
\newtheorem{lemma}{Lemma}[section]

\newtheorem{assumption}{Assumption}
\crefname{assumption}{assumption}{assumptions}
\Crefname{assumption}{Assumption}{Assumptions}
\theoremstyle{plain}
\newtheorem{remark}{Remark}

\newcommand{\R}{\mathbb R}
\newcommand{\E}{\mathbb E}
\newcommand{\Pp}{\mathbb P}
\newcommand{\Sph}{\mathbb S}
\newcommand{\norm}[1]{\left\lVert #1\right\rVert}
\newcommand{\abs}[1]{\left\lvert #1\right\rvert}
\newcommand{\ip}[2]{\left\langle #1,#2\right\rangle}
\newcommand{\op}{\mathrm{op}}
\newcommand{\F}{\mathrm F}
\newcommand{\argmaxsmall}{\mathop{\mathrm{arg\,max}}}

\title{Optimal detection of general moment changes: Simultaneous mean and covariance change detection and beyond}
\author{Xiaokai Luo$^{1,*}$ \and Chenghao Xu$^{2,*}$ \and Haotian Xu$^3$
\and Carlos Misael Madrid Padilla$^4$ \and Daren Wang$^2$}
\date{$^1$Department of Applied and Computational Mathematics and Statistics, University of Notre Dame\\
$^2$Department of Mathematics, University of California, San Diego\\
$^3$Department of Mathematics and Statistics, Auburn University\\
$^4$Department of Statistics and Data Science, Washington University in St. Louis\\[0.3em]
$^*$Equal Contribution.\\[0.8em]
September 26, 2026}

\begin{document}
\maketitle

\begin{abstract}
We study multiple change-point detection in multivariate time series whose
distributions change in a piecewise constant manner. Distributional changes
can manifest across different moment orders, from shifts in the mean and
covariance to changes in higher-order moments. 
Higher-order moments capture increasingly rich distributional features but
become difficult to estimate in high dimensions. Our tensor representation unifies moments of different orders within a common linear algebraic framework, enabling a new method  to detect changes in moments of all orders up to
a prescribed fixed order $p$. The resulting procedure accommodates temporal dependence and allows the dimension of the time series to grow with the sample size. Under suitable regularity conditions, the proposed procedure  achieves a  localization error rate that matches a newly developed minimax lower bound.  We further derive limiting distributions
under both nonvanishing and vanishing moment jumps and construct
asymptotically valid confidence intervals in the vanishing-jump regime.
Numerical experiments and real-data analyses demonstrate the method's
effectiveness in detecting moment changes and a range of distributional
shifts.
\end{abstract}

\section{Introduction}

Multivariate time series analysis is central to learning from dynamic systems
whose components evolve jointly. Machine learning applications range
from clinical
monitoring \citep{HylandEtAl2020} and epidemic growth
rates inference \citep{DehningEtAl2020} to global weather forecasting \citep{LamEtAl2023} and electricity-demand and traffic
prediction \citep{LiuEtAl2024iTransformer}.
Such data can span multiple statistical regime transitions. Fitting a single model across
those changes can obscure their important transition structure and may lead to misleading conclusions. Change-point
analysis partitions observations into several homogeneous segments and
quantifies uncertainty about the change points.

We consider the scenario where  \(X_1,\ldots,X_n\in\mathbb{R}^D\) is a time series whose distribution may change at multiple unknown times. We allow the dimension \(D\) to grow with the sample size \(n\).
Under the sub-Gaussian tail condition in \Cref{ass:data-conditions},
the full collection of joint moments uniquely determines the
distribution, so every distributional change must alter at least one
moment. This observation motivates a unified approach to distributional
change detection through moment changes.

Throughout, $p\geq2$ is a fixed integer that does not grow with $n$. 
 We study changes in the moments
$\E X_t^{\otimes r}$, $r=1,\ldots,p$, where $X_t^{\otimes r}$ denotes
the $r$-fold tensor product of $X_t$. We assume that these moment tensors
are piecewise constant, with an unknown number $K\geq1$ of change points
$0<\eta_1<\cdots<\eta_K<n$. Specifically,
\begin{align}
\E X_t^{\otimes r}\ne\E X_{t+1}^{\otimes r}
\quad\text{for at least one }r\in\{1,\ldots,p\}
\qquad\text{if and only if }\quad
t\in\{\eta_1,\ldots,\eta_K\}
\label{eq:piecewise-x-moment-model}
\end{align}

To detect changes in moments of different orders simultaneously, we
introduce
\begin{align}
Y_t&=\begin{pmatrix}1\\X_t\end{pmatrix}\in\R^{D+1},\qquad t=1,\ldots,n,
\nonumber
\end{align}
and define its order-$p$ moment tensor as
\begin{align}
\mathcal M_t^{(p)}&=\E Y_t^{\otimes p}.
\nonumber
\end{align}
\begin{remark}[Hierarchy of moment tensors]
For $1\leq r\leq p$, write $\mathcal M_t^{(r)}=\E Y_t^{\otimes r}$.
Since $Y_{t,1}=1$,
\begin{align}
\bigl(\mathcal M_t^{(r+1)}\bigr)_{1,i_1,\ldots,i_r}
&=\bigl(\mathcal M_t^{(r)}\bigr)_{i_1,\ldots,i_r},
\qquad 1\leq r<p,
\end{align}
for $i_1,\ldots,i_r\in\{1,\ldots,D+1\}$. Thus, $\mathcal M_t^{(p)}$
contains all lower-order moment tensors of $Y_t$ and all joint moments of
$X_t$ through order $p$. For example, when $p=3$, the first three moment orders appear as
\begin{align}
\text{Mean (first-order moments):}\qquad
\bigl(\mathcal M_t^{(3)}\bigr)_{1,1,i+1}
&=\E X_{t,i},
\nonumber\\[4pt]
\text{Second-order moments:}\qquad
\bigl(\mathcal M_t^{(3)}\bigr)_{1,i+1,j+1}
&=\E(X_{t,i}X_{t,j}),
\nonumber\\[4pt]
\text{Third-order moments:}\qquad
\bigl(\mathcal M_t^{(3)}\bigr)_{i+1,j+1,k+1}
&=\E(X_{t,i}X_{t,j}X_{t,k}),
\nonumber
\end{align}
where $i,j,k\in\{1,\ldots,D\}$.
\end{remark}

The moment-change model \eqref{eq:piecewise-x-moment-model} is therefore
equivalent to
\begin{align}
\mathcal M_t^{(p)}\ne\mathcal M_{t+1}^{(p)}
\quad\text{if and only if}\quad
t\in\{\eta_1,\ldots,\eta_K\},\qquad t=1,\ldots,n-1.
\label{eq:piecewise-moment-model}
\end{align}
The difficulty of detecting a change depends on its magnitude and its
separation from neighboring changes. For $k=1,\ldots,K$, define the jump
tensor
\begin{align}
\Theta_k&=\mathcal M_{\eta_k}^{(p)}-\mathcal M_{\eta_k+1}^{(p)}
\label{eq:moment-jump}
\end{align}
and its magnitude
\begin{align}
\kappa_k&=\norm{\Theta_k}_{\op},\qquad k=1,\ldots,K,
\label{eq:jump-size}
\end{align}
and let $\kappa_{\min}=\min_{1\leq k\leq K}\kappa_k>0$.
Set $\eta_0=0$ and $\eta_{K+1}=n$. The segment lengths are
\begin{align}
\Delta_k&=\eta_{k+1}-\eta_k,\qquad k=0,\ldots,K,
\label{eq:segment-spacing}
\end{align}
and the minimum spacing is $\Delta_{\min}=\min_{0\leq k\leq K}\Delta_k$. Our goal is to estimate $K$ and the
locations $\eta_1,\ldots,\eta_K$, and to perform statistical inference on
each change-point location.
We impose the following assumption to allow temporal dependence in the observed time series.  
\begin{assumption}
\label{ass:data-conditions}
Let $X_1,\ldots,X_n$ be a time series in $\R^D$   satisfying the following conditions.
\begin{enumerate}[label=\textbf{(\Alph*)},ref=(\Alph*),leftmargin=*,itemsep=4pt]
\item \textbf{Sub-Gaussian tails.}
There exists a fixed constant $1\leq\sigma<\infty$, independent of $n$, such that
\begin{align}
\sup_{1\leq t\leq n}\sup_{v\in\Sph^{D-1}}
\norm{v^\top X_t}_{\psi_2}
&\leq\sigma.
\nonumber
\end{align}
\item \textbf{$\alpha$-mixing.}
The $\alpha$-mixing coefficients (see  the definition in \eqref{eq:alpha-mixing}) of $\{X_t\}_{t = 1}^n$ decay at the
generalized-exponential rate
\begin{align}
\alpha_X(\ell)&\leq b_0\exp(-b_1\ell^\gamma),\qquad 1\leq\ell<n,
\label{eq:model-geometric-mixing}
\end{align}
where $b_0,b_1>0$ and $\gamma\in[1,\infty]$ are fixed constants.
\end{enumerate}
\end{assumption}
The case $\gamma=\infty$ corresponds to temporal independence,
characterized by $\alpha_X(\ell)=0$ for every positive lag.
Under \Cref{ass:data-conditions}, $Y_t$ has the same mixing coefficients
as $X_t$ and a sub-Gaussian scale bounded by a universal constant times
$\sigma$.

\subsection{Related literature}
Moment-based methods often target means or covariance matrices.
\citet{WangSamworth2018} use CUSUM projections for multiple sparse mean
changes under independence. For a single change under temporal dependence, their localization bound achieves the
minimax optimality.
\citet{WangYuRinaldo2021Covariance} use independent projections for
multiple covariance changes under independence and achieving minimax optimal localization
error.
\citet{DettePanYang2022} screen covariance entries and prove consistency for a single change under independence, assuming
the mean is constant.  For broader distributional changes,
\citet{KanrarJiangCai2025} use classifier AUC to test for and locate both single change and multiple changes under independence. Their localization error bound additionally requires negligible classifier error. Kernel-based methods for detecting distributional changes include
\citet{ArlotCelisseHarchaoui2019} and \citet{madrid2023change}. When changes
take a parametric form, such as a shift in the mean or covariance, general-purpose
kernel methods can be  less effective than parametric methods.
Indeed, \citet{MadridPadillaYuWangRinaldo2022} establish the minimax
localization rate for kernel density-based change-point detection whose
dependence on the observation dimension is exponential, illustrating the
curse of dimensionality.

A closely related approach is developed by \citet{CuiPanWangZou2025}
for the simultaneous detection of mean and covariance changes.
They construct separate U-statistic scans for mean and covariance, and combine their
$p$-values using Fisher's method. For location estimation, they compute
the Fisher statistic at each candidate split and choose its maximizer.
Their main focus is
testing under independence and a single change point. They establish consistency of the location estimator but no localization rate or location limiting distribution. In contrast, our method addresses
multiple changes in joint moments through order $p$ under temporal
dependence, with minimax-optimal localization and distributional
inference for the change locations.

Change-point location limiting distributions have been established for
multiple univariate mean changes with i.i.d.\ errors, together with
bootstrap confidence intervals \citep{ChoKirch2022Bootstrap}; single
functional mean changes \citep{AueEtAl2009}; and multiple covariance
changes in fragmented functional data \citep{XueXuYu2026}.
Under temporal dependence, change-point location limits are available for sparse high-dimensional regression
\citep{XuWangZhaoYu2024}, functional regression
\citep{KumarEtAl2024Functional}. We derive change-point location limit laws for nonvanishing and vanishing regime under temporal dependence and construct data-driven confidence
intervals with asymptotically valid coverage in the vanishing regime.

\subsection{Contributions}

\begin{enumerate}[leftmargin=*,itemsep=4pt]
\item \textbf{Simultaneous detection across moment orders.}
Distributional changes can affect different moments. We develop a unified procedure that detects
changes in all joint moments through any prescribed fixed order $p$.
The framework can capture  mean and covariance changes jointly and extend 
simultaneous to higher-order changes.

\item \textbf{Joint dimension reduction for moment-based detection.}
Higher-order moments reveal more complex distributional features but become
increasingly difficult to estimate in high dimensions. Our tensor
representation brings different moment orders into a unified linear algebra
framework, allowing us to construct powerful test statistics that combine
information across moment orders to detect changes. The resulting method
enables precise change-point localization at a rate that explicitly reflects
the moment order $p$.

\item \textbf{Minimax-optimal localization.}
We establish the localization rate and prove a newly developed minimax lower bound that matches this rate.   These results reveal the statistical difficulty of locating changes using higher-order moment information.

\item \textbf{Data-driven confidence interval.}
We quantify uncertainty in the estimated change-point
locations by deriving limiting distributions for both vanishing and nonvanishing jumps under
temporal dependence, see \Cref{sec:procedure-inference}. For vanishing regime, we further construct data-driven confidence inter-
vals with asymptotically valid coverage, as established in \Cref{thm:ci-coverage}.
\end{enumerate}

\subsection{Notation}
\label{sec:notation}
\textbf{Intervals.}
For integer endpoints $0\leq s<e\leq n$, $(s,e]$ denotes the integer
indices strictly greater than $s$ and at most $e$, with cardinality $e-s$.
\\
\textbf{Vectors.}
For $v\in\R^d$, let $v_i$ denote its $i$th coordinate and $\norm v_2$
its Euclidean norm. Write $\Sph^{d-1}=\{v\in\R^d:\norm v_2=1\}$.
Throughout the paper, for $u,v\in\Sph^{d-1}$, the distance up to sign is
$d_\pm(u,v)=\min\{\norm{u-v}_2,\norm{u+v}_2\}$.
\\
\textbf{Matrices and tensors.}
Matrix and tensor entries are denoted by $A_{ij}$ and
$\mathcal A_{i_1,\ldots,i_p}$. Write $u_1\otimes\cdots\otimes u_p$
for the outer product of vectors $u_1,\ldots,u_p$, and $u^{\otimes p}$
when all factors equal $u$. The Frobenius inner product
$\ip{\mathcal A}{\mathcal B}_{\F}$ is the sum of products of corresponding
entries, with $\norm{\mathcal A}_{\F}^2=\ip{\mathcal A}{\mathcal A}_{\F}$.
For $\mathcal A\in\R^{d\times\cdots\times d}$, define
\begin{align}
\norm{\mathcal A}_{\op}
&=\sup_{u_1,\ldots,u_p\in\Sph^{d-1}}
\abs{\ip{\mathcal A}{u_1\otimes\cdots\otimes u_p}_{\F}}.
\nonumber
\end{align}
\\
\textbf{Moments and tails.}
For a scalar random variable $W$, let $\norm W_{L^q}=(\E\abs W^q)^{1/q}$,
$q\geq1$. For $\beta>0$, define
\begin{align}
\norm W_{\psi_\beta}
&=\sup_{q\geq1}q^{-1/\beta}\norm W_{L^q}.
\nonumber
\end{align}
\\
\textbf{Temporal dependence.}
For sigma-fields $\mathcal A,\mathcal B$, define
\begin{align}
\alpha(\mathcal A,\mathcal B)
&=\sup_{A\in\mathcal A,\,B\in\mathcal B}
\abs{\Pp(A\cap B)-\Pp(A)\Pp(B)}.
\nonumber
\end{align}
For $W_1,\ldots,W_n$, the $\alpha$-mixing coefficients are
\begin{align}
\alpha_W(\ell)
&=\sup_{1\leq s\leq n-\ell}
\alpha\{\sigma(W_1,\ldots,W_s),
\sigma(W_{s+\ell},\ldots,W_n)\},
\qquad 1\leq\ell<n.
\label{eq:alpha-mixing}
\end{align}
\\
\begin{samepage}
\textbf{Asymptotic notation and constants.}
Throughout the paper, the tail scale $1\leq\sigma<\infty$ in
\Cref{ass:data-conditions} is fixed and does not depend on $n$.
We retain its powers in the bounds to display their scale dependence.
We use the standard asymptotic notation $O$, $o$, $O_{\Pp}$, and
$o_{\Pp}$, and write $\xrightarrow{\Pp}$ and $\xrightarrow{d}$ for
convergence in probability and distribution. Unless otherwise stated,
$C,C_1,C_2,\ldots$ are positive constants independent of $n$ and $D$,
but may depend on $p$ and the mixing parameters $\gamma,b_0,b_1$ in
\Cref{ass:data-conditions}.
\par
\end{samepage}

\section{Change-point localization}
\label{sec:change-point-localization}

We propose a two-stage procedure for change-point localization. The first
stage, discussed in \Cref{sec:preliminary-seeded-detection},
evaluates tensor CUSUM statistics over seeded intervals to estimate
the number of changes and obtain preliminary locations. The second stage, discussed in \Cref{sec:change-point-local-refinement},
uses these preliminary locations as proxies  
to estimate the adjacent moment tensors and their leading jump directions.
These estimates guide a local refinement step that yields more precise
change-point locations.

\subsection{Preliminary seeded detection}
\label{sec:preliminary-seeded-detection}

We begin by defining the empirical moment tensor on an interval $(s,e]$,
\begin{align}
\overline{\mathcal M}_{(s,e]}^{(p)}
&=\frac{1}{e-s}\sum_{t=s+1}^{e}Y_t^{\otimes p}.
\nonumber
\end{align}
For any candidate  $t\in(s,e)$, define the tensor CUSUM statistic
\begin{align}
\mathcal T_{(s,e]}(t)
&=\sqrt{\frac{(t-s)(e-t)}{e-s}}
\left\{
\overline{\mathcal M}_{(s,t]}^{(p)}
-\overline{\mathcal M}_{(t,e]}^{(p)}
\right\}.
\nonumber
\end{align}

When the population moment tensor is constant on $(s,e]$, we have
$\E\mathcal T_{(s,e]}(t)=0$. If the interval contains a single change
point $\eta\in(s,e-1]$, then
$\norm{\E\mathcal T_{(s,e]}(t)}_{\op}$ is uniquely maximized at $t=\eta$.
These properties allow us to use $\norm{\mathcal T_{(s,e]}(t)}_{\op}$ to
measure the signal of a change.

To efficiently locate the change points using the tensor cusum statistics,    we evaluate these
statistics over a deterministic collection of overlapping intervals at
multiple scales, called seeded intervals \citep{KovacsBuhlmannLiMunk2023}.
Define dyadic interval scale set as 
\begin{align}
\mathcal L_n
&=\left\{2^r:r=4,\ldots,\left\lfloor\log_2n\right\rfloor\right\}.
\nonumber
\end{align}
The intervals at scale $\ell\in\mathcal L_n$ are
\begin{align}
\mathcal I_\ell
&=\left\{
\left(\frac{j\ell}{2},\frac{j\ell}{2}+\ell\right]:
j=0,\ldots,\left\lfloor\frac{2(n-\ell)}{\ell}\right\rfloor
\right\}\cup\{(n-\ell,n]\}.
\nonumber
\end{align}
At scale $\ell$, $\mathcal I_\ell$ contains intervals of length $\ell$
starting at equally spaced grid points $0,\ell/2,\ell,\ldots$, together
with the final interval $(n-\ell,n]$.
Taking the union over scales gives the seeded collection
\begin{align}
\mathcal I&=\bigcup_{\ell\in\mathcal L_n}\mathcal I_\ell.
\label{eq:preliminary-seeded-collection}
\end{align}
Seeded intervals have proved effective for univariate mean change detection
\citep{KovacsBuhlmannLiMunk2023}. We adapt this construction to detect
changes in multivariate moment tensors.

\begin{algorithm}[!htb]
\caption{Preliminary seeded detection}
\label{alg:preliminary-seeded-detection}
\begin{algorithmic}[1]
\Require Observations $X_1,\ldots,X_n$, moment order $p$, and parameters $\gamma,\tau>0$.
\State $Y_t\gets (1, X_t^{\top})^{\top}$, $t=1,\ldots,n$.
\State Construct $\mathcal I$ according to \eqref{eq:preliminary-seeded-collection}.
\For{$(s,e]\in\mathcal I$}
  \State Compute
  \begin{align}
  G_{(s,e]}&\gets
  \max_{s+(e-s)/4\leq t\leq e-(e-s)/4}
  \norm{\mathcal T_{(s,e]}(t)}_{\op}.
  \label{eq:preliminary-interval-statistic}
  \end{align}
\EndFor
\State Set
\begin{align}
\mathcal I_{\mathrm{act}}&\gets
\left\{(s,e]\in\mathcal I:
G_{(s,e]}>\tau\left[\sqrt{D+\log n}
+\frac{(D+\log n)^{p/2+1/\gamma}}{\sqrt{e-s}}\right]\right\}.
\label{eq:preliminary-threshold}
\end{align}
\State $\widetilde{\mathcal E}\gets\varnothing$.
\While{$\mathcal I_{\mathrm{act}}\ne\varnothing$}
  \State Choose the shortest $(s,e]\in\mathcal I_{\mathrm{act}}$.
  \State $\widetilde\eta\gets
  \min\operatorname*{arg\,max}_{s+(e-s)/4\leq t\leq e-(e-s)/4}
  \norm{\mathcal T_{(s,e]}(t)}_{\op}$.
  \State $\widetilde{\mathcal E}\gets
  \widetilde{\mathcal E}\cup\{\widetilde\eta\}$.
  \State $\mathcal I_{\mathrm{act}}\gets
  \{(a,b]\in\mathcal I_{\mathrm{act}}:(a,b]\cap(s,e]=\varnothing\}$.
\EndWhile
\State \Return The estimated change points
$\widetilde\eta_1<\cdots<\widetilde\eta_{\widetilde K}$ of
$\widetilde{\mathcal E}$ in increasing order.
\end{algorithmic}
\end{algorithm}

To obtain preliminary change-point estimates,
\Cref{alg:preliminary-seeded-detection} applies narrowest-over-threshold
selection \citep{BaranowskiChenFryzlewicz2019} to the tensor CUSUM
statistics on seeded intervals.
The statistic $G_{(s,e]}$, defined in \eqref{eq:preliminary-interval-statistic},
measures the strongest evidence of a moment
change within the central half of $(s,e]$. This restriction avoids splits
near interval boundaries, where empirical higher-order moments can be
unstable because  of  few observations. The threshold in
\eqref{eq:preliminary-threshold}
adjusts for interval length, moment order, and temporal dependence, with
the tuning parameter $\tau$ controlling detection sensitivity.

Among intervals exceeding the threshold,
\Cref{alg:preliminary-seeded-detection} selects a shortest interval to
favor isolating an individual change. Within the selected interval, we
estimate the change point by the split with the largest test statistic.
We then remove overlapping intervals to avoid detecting the same change. The resulting locations
initialize the second-stage refinement.
We next establish conditions under which the preliminary procedure
consistently estimates the number and locations of the change points.

\begin{assumption}
\label{ass:preliminary-signal}
For a sufficiently large constant $C_{\mathrm I}>0$,
\begin{align}
\frac{\Delta_{\min}\kappa_{\min}^2}{\sigma^{2p}}
&\geq C_{\mathrm I}(D+\log n),
\qquad
\frac{\Delta_{\min}\kappa_{\min}}{\sigma^p}
\geq\sqrt{C_{\mathrm I}}\,(D+\log n)^{p/2+1/\gamma}.
\nonumber
\end{align}
\end{assumption}

Under temporal independence setting that $\gamma=\infty$, if
$\Delta_{\min}\geq(D+\log n)^{p-1}$, then
\Cref{ass:preliminary-signal} reduces to
\begin{align}
\frac{\Delta_{\min}\kappa_{\min}^2}{\sigma^{2p}}
&\geq C_{\mathrm I}(D+\log n).
\label{eq:preliminary-independent-signal}
\end{align}
  For $p=2$,
\eqref{eq:preliminary-independent-signal} up to logarithmic factors, matches the covariance
change-point detection condition of \citet{WangYuRinaldo2021Covariance},
and  is minimax optimal. In addition,
for $p=2$, the requirement $\Delta_{\min}\gtrsim D$ also appears in
\citet{WangYuRinaldo2021Covariance}.

\begin{theorem}
\label{thm:preliminary-localization}
Let $\{ \widetilde \eta_k\}_{k=1}^{\widetilde K}$ be the estimated change points return by  \Cref{alg:preliminary-seeded-detection} with input  $\gamma$ and $\tau=c_\tau\sigma^p$, where $c_\tau>0$ is sufficiently large.
Suppose   
\Cref{ass:data-conditions,ass:preliminary-signal} hold.
 Then, with probability at least $1-n^{-5}$,
\Cref{alg:preliminary-seeded-detection} returns estimates satisfying
\begin{align}
\widetilde K&=K,
\qquad
\max_{1\leq k\leq K}\abs{\widetilde\eta_k-\eta_k}
\leq\frac{3\Delta_{\min}}{64}.
\label{eq:preliminary-localization-bound}
\end{align}
\end{theorem}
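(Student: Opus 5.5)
The proof will be deterministic on a single high-probability event. Define
\begin{align}
\mathcal E=\Bigl\{\sup_{(s,e]\in\mathcal I}\ \max_{s<t<e}\norm{\mathcal T_{(s,e]}(t)-\E\mathcal T_{(s,e]}(t)}_{\op}\leq \tfrac{\tau}{2}\Bigl[\sqrt{D+\log n}+\tfrac{(D+\log n)^{p/2+1/\gamma}}{\sqrt{e-s}}\Bigr]\Bigr\}.
\nonumber
\end{align}
The first step is to show $\Pp(\mathcal E)\geq1-n^{-5}$ with $\tau=c_\tau\sigma^p$.

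For fixed $(s,e]$ and $t$, the deviation is a weighted sum of centered tensors $Y_u^{\otimes p}-\E Y_u^{\otimes p}$. We evaluate it against $u_1\otimes\cdots\otimes u_p$ for the $u_j$ in a $1/4$-net of $\Sph^{D}$, which has $9^{p(D+1)}$ elements. A standard net argument for symmetric-type multilinear forms then gives $\norm{\mathcal A}_{\op}\leq C_p\max_{\text{net}}\abs{\ip{\mathcal A}{u_1\otimes\cdots\otimes u_p}_{\F}}$. Each summand $\prod_j u_j^\top Y_u$ is $\psi_{2/p}$ with norm $\lesssim\sigma^p$. Under $\alpha$-mixing with rate $\exp(-b_1\ell^\gamma)$, a Bernstein inequality for $\psi_\beta$ variables (Merlevède--Peligrad--Rio type) applies with $\beta=2/p$. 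It yields a tail
$$\exp\bigl(-c\,x^2/(m\sigma^{2p})\bigr)+\exp\bigl(-c\,(x/\sigma^p)^{\beta'}\bigr),\qquad 1/\beta'=p/2+1/\gamma,$$
for sums of $m$ terms. Normalizing by $\sqrt{(t-s)(e-t)/(e-s)}$ gives the two terms in the threshold. Taking $x$ so that the exponent is of order $D+\log n$ allows a union bound over the net, the $O(n\log n)$ seeded intervals, and $t$. This concentration step, particularly getting the exponent $p/2+1/\gamma$ to match, is the main technical obstacle. The remainder is deterministic.

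The next step is the population signal. For a change point $\eta_k$, pick a seeded interval $(s,e]$ with $e-s\asymp\Delta_{\min}$ (scale $\ell\in[\Delta_{\min}/8,\Delta_{\min}/4)$, say) that contains only $\eta_k$, with $\eta_k$ in its central half. This is possible because of the half-overlap grid. Then
$$\norm{\E\mathcal T_{(s,e]}(\eta_k)}_{\op}=\sqrt{\tfrac{(\eta_k-s)(e-\eta_k)}{e-s}}\,\kappa_k\gtrsim\sqrt{\Delta_{\min}}\,\kappa_k.$$
On $\mathcal E$, \Cref{ass:preliminary-signal} with $C_{\mathrm I}$ large relative to $c_\tau$ makes this exceed twice the threshold, so some interval near every change is active. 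Conversely, if an interval contains no change point, then $\E\mathcal T=0$ and, on $\mathcal E$, $G\leq$ threshold/2, so it is inactive. Hence every active interval contains a change point.

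Then comes localization within a selected interval. Let $(s,e]$ be the shortest active interval. Its length is at most that of the good interval constructed above, so it is $\lesssim\Delta_{\min}/4$ and contains exactly one change $\eta$. By activity, $\eta$ must lie well inside, since otherwise the signal on the central half is too small. On a single-change interval, $t\mapsto\norm{\E\mathcal T(t)}_{\op}$ equals $\kappa\cdot g(t)$, where $g$ is the explicit tent-shaped CUSUM weight, maximized at $\eta$. The drop $g(\eta)-g(t)\gtrsim\abs{t-\eta}\,g(\eta)/(e-s)$, combined with the deviation bound, gives
$$\abs{\widetilde\eta-\eta}\lesssim(e-s)\cdot\frac{\text{threshold}}{\sqrt{e-s}\,\kappa}\leq \tfrac{3\Delta_{\min}}{64}$$
for $C_{\mathrm I}$ large. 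Also $e-s\leq\Delta_{\min}/4$ directly keeps $\widetilde\eta$ within the interval.

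Finally, counting. Removing intervals overlapping the selected one deletes only intervals containing $\eta$ or near it. Since active intervals have length $<\Delta_{\min}/2$, removed intervals cannot contain any other change with the good interval for that change still present. More precisely, the good intervals for distinct changes are disjoint from each other's selected intervals because the lengths are $\leq\Delta_{\min}/4$ and the spacing is at least $\Delta_{\min}$. Therefore each change is detected exactly once. This gives $\widetilde K=K$, and ordering matches.
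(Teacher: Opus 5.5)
Your architecture matches the paper's: a uniform CUSUM-deviation event, one isolating seeded interval per change, change-free intervals being inactive, and narrowest-over-threshold selection with pruning. However, the localization step does not establish the $3\Delta_{\min}/64$ bound.

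About the selected interval $J=(s,e]$ you only know two things: $|J|\leq|I^*|<\Delta_{\min}/4$, and $J$ contains a single change $\eta$. Your claim that activity forces $\eta$ to lie well inside $J$ fails in the sense your tent argument needs.
\begin{itemize}
\item Suppose $\eta-s=\delta|J|$ with $\delta<1/4$. The population CUSUM over the central half is then maximized at $t^*=s+|J|/4$, with value $\sqrt3\,\delta\sqrt{|J|}\,\kappa$.
\item Since $\sqrt{|J|}\,\kappa\geq\sqrt{C_{\mathrm I}/8}\,\sigma^p\sqrt{D+\log n}$ whenever $|J|\geq\Delta_{\min}/8$, $J$ is active as soon as $\delta\gtrsim c_\tau/\sqrt{C_{\mathrm I}}$. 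This is a vanishing fraction when $C_{\mathrm I}$ is large, so a stronger signal makes the problem worse, not better.
\item Nothing in your argument prevents such a $J$ from being selected. For instance, the grid interval at your scale shifted by $\ell/2$ ties in length with your isolating interval and has $\eta$ in an outer quarter.
\item Your tent-shaped drop inequality then bounds only $|\widetilde\eta-t^*|$, not $|\widetilde\eta-\eta|$. Because $\widetilde\eta$ is confined to the central half, $|\widetilde\eta-\eta|\geq(1/4-\delta)|J|$. With $|J|$ near $\Delta_{\min}/4$, this can approach $\Delta_{\min}/16>3\Delta_{\min}/64$.
\end{itemize}
So your displayed bound $|\widetilde\eta-\eta|\lesssim\sqrt{e-s}\cdot\text{threshold}/\kappa$ is unjustified whenever $\eta$ lies outside the central half.

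The simplest repair is the paper's choice of scale. Take the isolating interval with $\Delta_{\min}/32<|I_k^*|\leq\Delta_{\min}/16$. Because $\eta_k$ lies in its central half, its population signal is at least $\sqrt{|I_k^*|}\,\kappa_k/4$, which is still enough under the signal condition. The selected $J$ then has $|J|\leq\Delta_{\min}/16$. Since $\eta\in J$ and $\widetilde\eta$ lies in the central half of $J$, you get $|\widetilde\eta-\eta|\leq3|J|/4\leq3\Delta_{\min}/64$ directly. This needs no tent argument and no information about where $\eta$ sits in $J$. (Alternatively, you could exploit minimality of $J$ against the next-smaller dyadic scale, but that requires tracking pruned intervals.)

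Two smaller points.
\begin{itemize}
\item Your event $\mathcal E$, taken over all $s<t<e$ with the second term scaled by $(e-s)^{-1/2}$, does not have high probability. Normalizing the two partial sums actually gives a term $(D+\log n)^{p/2+1/\gamma}/\sqrt{\min(t-s,e-t)}$. At $t=s+1$ the deviation contains $Y_{s+1}^{\otimes p}-\E Y_{s+1}^{\otimes p}$, whose operator norm is of order $D^{p/2}$ for, e.g., isotropic Gaussian data; this exceeds your bound on long intervals. Restrict the event to the central half, which is all you use.
\item Your statement that active intervals have length $<\Delta_{\min}/2$ is false, since long intervals spanning several changes are active; only selected intervals are short. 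The counting step should be run as an induction, as in the paper:
\begin{itemize}
\item pruning deletes every active interval containing the matched change in its interior, because each such interval meets $J$ at $\eta$;
\item hence every new selection contains an unmatched change;
\item isolating intervals of unmatched changes survive by your disjointness argument, so the next selection is again short;
\item after $K$ selections no active interval can remain, because each active interval must contain a change.
\end{itemize}
\end{itemize}
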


\begin{remark}
The constant $3/64$ in \Cref{thm:preliminary-localization} is chosen for
convenience. The preliminary estimates need only be sufficiently accurate
relative to the minimum spacing to ensure that there is a one to one correpsondece between the estimate change points and the true ones. 
 \Cref{thm:preliminary-localization} provides sufficient  accuracy
for the subsequent refinement procedure.
\end{remark}

\subsection{Local refinement}
\label{sec:change-point-local-refinement}

To obtain optimal change-point estimates, we use the preliminary estimates
$\{\widetilde \eta _k\}_{k=1}^K$ from
\Cref{alg:preliminary-seeded-detection} to estimate the moment jumps
$\Theta_k$ defined in \eqref{eq:moment-jump}. We then estimate the most
significant projection direction of each jump,
\begin{align}
\nu_k&\in\operatorname*{arg\,max}_{u\in\Sph^D}
\abs{\ip{\Theta_k}{u^{\otimes p}}_{\F}},
\label{eq:population-maximizing-direction}
\end{align}
and use the estimated projection to refine the change-point location.

Set $\widetilde\eta_0=0$ and $\widetilde\eta_{\widetilde K+1}=n$.
To avoid uncertain segment boundaries, we fit moments on the central half
$(a_k,b_k]$ of each estimated segment, where, for
$k=0,\ldots,\widetilde K$,
\begin{align}
a_k&=\left\lceil\frac{3\widetilde\eta_k+\widetilde\eta_{k+1}}{4}\right\rceil-1,
\qquad
b_k=\left\lfloor\frac{\widetilde\eta_k+3\widetilde\eta_{k+1}}{4}\right\rfloor.
\label{eq:central-fitting-interval}
\end{align}
Under \eqref{eq:preliminary-localization-bound}, $\widetilde K =K$ and 
$[a_k,b_k]\subseteq(\eta_k,\eta_{k+1}]$ for $k=0,\ldots,K$.
For $k=1,\ldots, K$, estimate the moment jump by
\begin{align}
\widehat\Theta_k
&=\frac{1}{b_{k-1}-a_{k-1}}
\sum_{t=a_{k-1}+1}^{b_{k-1}}Y_t^{\otimes p}
-\frac{1}{b_k-a_k}\sum_{t=a_k+1}^{b_k}Y_t^{\otimes p}.
\label{eq:estimated-moment-jump}
\end{align}
We then estimate $\nu_k$ in \eqref{eq:population-maximizing-direction} by
\begin{align}
\widehat \nu_k
&\in\operatorname*{arg\,max}_{u\in\Sph^D}
\abs{\ip{\widehat\Theta_k}{u^{\otimes p}}_{\F}}.
\label{eq:estimated-leading-direction}
\end{align}
Under standard identifiability and eigengap-type conditions on $\Theta_k$
(see \Cref{ass:local-refinement-separation} in the appendix),
we establish an error bound for the estimated projection directions in
\Cref{app:direction-setup} of the appendix.
For $k=1,\ldots,  K$, we refine the location within $(s_k,e_k]$,
with endpoints
\begin{align}
s_k&=\left\lfloor\frac{9\widetilde\eta_{k-1}+\widetilde\eta_k}{10}\right\rfloor,
\qquad
e_k=\left\lfloor\frac{\widetilde\eta_k+9\widetilde\eta_{k+1}}{10}\right\rfloor.
\nonumber
\end{align}
Under \eqref{eq:preliminary-localization-bound}, each window
$(s_k,e_k]$ contains exactly one true change point, $\eta_k$.
For a candidate  $t\in(s_k,e_k-1]$, define 
\begin{align}
\widehat Q_k(t)
&=\sum_{i=s_k+1}^{t}
\left\{(\widehat \nu_k^\top Y_i)^p
-\frac{\sum_{j=a_{k-1}+1}^{b_{k-1}}(\widehat  \nu_k^\top Y_j)^p}
{b_{k-1}-a_{k-1}}\right\}^2
\nonumber\\
&\quad+\sum_{i=t+1}^{e_k}
\left\{(\widehat  \nu_k^\top Y_i)^p
-\frac{\sum_{j=a_k+1}^{b_k}(\widehat  \nu_k^\top Y_j)^p}
{b_k-a_k}\right\}^2.
\nonumber
\end{align}
The refined estimate is
\begin{align}
\widehat\eta_k
&=\min\operatorname*{arg\,min}_{t\in(s_k,e_k-1]}\widehat Q_k(t).
\label{eq:local-refined-estimate}
\end{align}

To obtain theoretical guarantees of the refined change point estimators $\{ \widehat \eta_k\}_{k=1}^K $, we
impose the following signal-strength conditions.

\begin{assumption}
\label{ass:local-refinement-signal}
As $n\to\infty$,
\begin{align}
\frac{\Delta_{\min}\kappa_{\min}^2}
{\sigma^{2p}D(D+\log n)}
&\to\infty
\quad\text{and}\quad
\frac{\Delta_{\min}\kappa_{\min}}
{\sigma^p\sqrt D\,(D+\log n)^{p/2+1/\gamma}}
\to\infty.
\nonumber
\end{align}
\end{assumption}

\Cref{ass:local-refinement-signal} strengthens
\Cref{ass:preliminary-signal} to ensure accurate estimation of the
order-$p$ moment tensors relative to the jump magnitudes.  
The conditions allow both a growing dimension and vanishing
jump magnitudes, provided the segments contain sufficiently many
observations to estimate the higher-order moments accurately.

\begin{theorem}
\label{thm:local-refinement}
Suppose the moment-change model \eqref{eq:piecewise-moment-model} holds,
$p\geq2$ is fixed, and
\Cref{ass:data-conditions,ass:local-refinement-separation,ass:local-refinement-signal}
hold. Use \Cref{alg:preliminary-seeded-detection} with
$\tau=c_\tau\sigma^p$ for a sufficiently large fixed constant $c_\tau$.
Then the refined estimators in \eqref{eq:local-refined-estimate} satisfy
\begin{align}
\max_{1\leq k\leq K}
\frac{\kappa_k^2}{\sigma^{2p}}\abs{\widehat\eta_k-\eta_k}
&=O_{\Pp}(1).
\nonumber
\end{align}
\end{theorem}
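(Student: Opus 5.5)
We will work on the event $\mathcal A_n$ of \Cref{thm:preliminary-localization}, which has probability at least $1-n^{-5}$. Since \Cref{ass:local-refinement-signal} implies \Cref{ass:preliminary-signal} for large $n$, on $\mathcal A_n$ we have $\widetilde K=K$ and $\abs{\widetilde\eta_k-\eta_k}\le 3\Delta_{\min}/64$. Consequently each fitting interval satisfies $(a_k,b_k]\subseteq(\eta_k,\eta_{k+1}]$ and has length at least $c\Delta_{\min}$, and each window $(s_k,e_k]$ contains only $\eta_k$, with $\eta_k-s_k$ and $e_k-\eta_k$ both at least $c\Delta_{\min}$. The problem then reduces to $K$ single-change problems that share the same structure. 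We will bound each problem with a deviation bound uniform in $k$; since $K\le n/\Delta_{\min}$, a union bound costs only $\log n$ factors, and these are absorbed by \Cref{ass:local-refinement-signal}.

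\textbf{Step 1: the estimated direction.} Apply a tensor concentration bound to the central-interval averages: under sub-Gaussianity and $\alpha$-mixing, a net argument over $\Sph^D$ combined with a Bernstein inequality for $\psi_{2/p}$ mixing sums gives
\[
\norm{\widehat\Theta_k-\Theta_k}_{\op}\lesssim\sigma^p\Bigl\{\sqrt{(D+\log n)/\Delta_{\min}}+(D+\log n)^{p/2+1/\gamma}/\Delta_{\min}\Bigr\}.
\]
Then invoke the direction perturbation result of \Cref{app:direction-setup} under \Cref{ass:local-refinement-separation}. This gives $d_\pm(\widehat\nu_k,\nu_k)\lesssim\norm{\widehat\Theta_k-\Theta_k}_{\op}/\kappa_k$, and hence $\abs{\ip{\Theta_k}{\widehat\nu_k^{\otimes p}}_{\F}}\ge\kappa_k/2$. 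Write $Z_i=(\widehat\nu_k^\top Y_i)^p$ and let $\widehat\mu_{k-1},\widehat\mu_k$ be the fitted means on the central intervals. The projected jump satisfies $\delta_k:=\abs{\E_{\nu}Z_{\eta_k}-\E_{\nu}Z_{\eta_k+1}}\asymp\kappa_k$, where the expectation is conditional on $\widehat\nu_k$.

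\textbf{Step 2: the basic inequality.} Take $t=\eta_k+m$ with $m>0$; the case $m<0$ is symmetric. Then
\[
\widehat Q_k(t)-\widehat Q_k(\eta_k)=\sum_{i=\eta_k+1}^{t}\Bigl[(Z_i-\widehat\mu_{k-1})^2-(Z_i-\widehat\mu_k)^2\Bigr]=\sum_{i}(\widehat\mu_k-\widehat\mu_{k-1})\bigl(2Z_i-\widehat\mu_{k-1}-\widehat\mu_k\bigr).
\]
Decompose $Z_i=\mu_k+\epsilon_i$, where $\mu_k$ is the projected population mean on segment $k$. This splits the difference into a drift term $m(\widehat\mu_k-\widehat\mu_{k-1})(\mu_k-\mu_{k-1}+\text{err})\gtrsim m\kappa_k^2$ and a noise term $2(\widehat\mu_k-\widehat\mu_{k-1})\sum_{i=\eta_k+1}^{t}\epsilon_i$. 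The drift bound uses $\abs{\widehat\mu_j-\mu_j}=o(\kappa_k)$, which follows from Step 1 and \Cref{ass:local-refinement-signal}; the extra factor $\sqrt D$ there accounts for $\widehat\nu_k$ depending on the data.

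\textbf{Step 3 (main obstacle): the noise term.} The difficulty is that $\widehat\nu_k$ is estimated from the same sample, so $\epsilon_i$ is not centered given the direction. We will handle this by controlling the partial sums uniformly over directions, via
\[
\sup_{u\in\Sph^D}\Bigl|\sum_{i=\eta_k+1}^{\eta_k+m}\bigl\{(u^\top Y_i)^p-\E(u^\top Y_i)^p\bigr\}\Bigr|=\sup_u\abs{\ip{\textstyle\sum_i(Y_i^{\otimes p}-\E Y_i^{\otimes p})}{u^{\otimes p}}_{\F}}.
\]
This would cost a factor $\sqrt D$ and is too lossy for $m$ of order $\sigma^{2p}/\kappa_k^2$. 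We therefore try the following instead. Write $\widehat\nu_k=\nu_k+(\widehat\nu_k-\nu_k)$ and expand the multilinear form. The leading term uses the fixed $\nu_k$, and we control it with a peeling argument over dyadic blocks $m\in[2^j M,2^{j+1}M)$, using a maximal Bernstein inequality for mixing $\psi_{2/p}$ variables. This gives $\sup_{m\ge M}\abs{\sum\epsilon_i}/m\le\sigma^p/\sqrt{M}\cdot O_{\Pp}(1)$. The perturbation terms are bounded by $d_\pm(\widehat\nu_k,\nu_k)$ times the uniform tensor bound $\sigma^p\sqrt{m(D+\log n)}$ (plus the heavy-tail term). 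Using Step 1 and \Cref{ass:local-refinement-signal}, this product is $o(\kappa_k m)$ uniformly for $m\ge\sigma^{2p}/\kappa_k^2$, and this is where the $D(D+\log n)$ in the assumption is used.

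\textbf{Step 4: conclusion.} Combining the steps, for $m\ge M\sigma^{2p}/\kappa_k^2$ we get $\widehat Q_k(\eta_k+m)-\widehat Q_k(\eta_k)\ge c m\kappa_k^2-O_{\Pp}(\kappa_k\sigma^p\sqrt m)>0$ with probability tending to one as $M\to\infty$, uniformly in $k$; the peeling step supplies the uniformity across $k$ via the union bound. Hence the minimizer satisfies $\abs{\widehat\eta_k-\eta_k}\le M\sigma^{2p}/\kappa_k^2$, which proves the claim.
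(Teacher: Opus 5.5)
Your architecture matches the paper's: work on the event giving $\widetilde K=K$ and the moment bounds, use the basic inequality for $\widehat Q_k(\eta_k+r)-\widehat Q_k(\eta_k)$, split $\ip{S_r}{\widehat\nu_k^{\otimes p}}_{\F}$ into a term at the fixed direction $\nu_k$ plus a perturbation, and peel over dyadic blocks. The gap is the perturbation bound in Step~3, which does not follow from \Cref{ass:local-refinement-signal}.

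Write $\epsilon_n\asymp\sigma^p\{\sqrt{(D+\log n)/\Delta_{\min}}+(D+\log n)^{p/2+1/\gamma}/\Delta_{\min}\}$ for your Step~1 radius. You bound $\abs{\ip{S_m}{\widehat\nu_k^{\otimes p}-\nu_k^{\otimes p}}_{\F}}$ by $p\,d_\pm(\widehat\nu_k,\nu_k)\norm{S_m}_{\op}$, using $d_\pm\lesssim\epsilon_n/\kappa_k$ and the uniform bound $\norm{S_m}_{\op}\lesssim\sigma^p\{\sqrt{m(D+\log n)}+(D+\log n)^{p/2+1/\gamma}\}$. Compare this with $\kappa_k m$ at $m\asymp M\sigma^{2p}/\kappa_k^2$:
\begin{itemize}
\item The first part needs $\epsilon_n\sqrt{D+\log n}/\kappa_k=O(1)$, i.e.\ $\Delta_{\min}\kappa_{\min}^2\gtrsim\sigma^{2p}(D+\log n)^2$ rather than $\sigma^{2p}D(D+\log n)$.
\item The heavy-tail part needs $\epsilon_n(D+\log n)^{p/2+1/\gamma}\lesssim M\sigma^p$, i.e.\ roughly $\Delta_{\min}\gtrsim(D+\log n)^{p+1+2/\gamma}$.
\end{itemize}
Neither is implied. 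Take $p=2$, $\gamma=\infty$, $D$ and $\kappa_k$ fixed, and $\Delta_{\min}\asymp(\log n)^{6/5}$: the assumption holds, yet $\epsilon_n\sqrt{\log n}\asymp(\log n)^{2/5}\to\infty$. So your remark that this is where $D(D+\log n)$ enters is not right.

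Switching to a bound at the deterministic start $\eta_k$ removes the $\log n$ but not the main loss. $\norm{S_m}_{\op}$ lets all $p$ slots range over the sphere, and for $m=1$ it can already be of order $\norm{Y_t}_2^p\approx D^{p/2}\sigma^p$. In the perturbation, only the $h$-slots are free.

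The missing idea is the full binomial expansion around $\nu_k$, keeping $\nu_k$ in the remaining slots: $\ip{S_r}{(\nu_k+h)^{\otimes p}-\nu_k^{\otimes p}}_{\F}=\sum_{j=1}^p\binom pj\ip{A_{r,j}}{h^{\otimes j}}_{\F}$. Here $A_{r,j}$ is the centered partial sum of $(\nu_k^\top Y_t)^{p-j}Y_t^{\otimes j}$ over the deterministic block, so it does not involve $\widehat\nu_k$. Then $\sup_{\norm h_2\le\rho_{n,k}}\abs{\ip{A_{r,j}}{h^{\otimes j}}_{\F}}\le\rho_{n,k}^j\norm{A_{r,j}}_{\F}$, with $\rho_{n,k}\asymp\epsilon_n/\kappa_k$ the direction-error radius. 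Applying \Cref{lem:mixing-fourth-maximal} coordinatewise and summing over the $(D+1)^j$ entries gives $\norm{\max_{r\le m}\norm{A_{r,j}}_{\F}}_{L^4}\lesssim\sigma^p\sqrt m\,(D+1)^{j/2}$. This has neither a $\log n$ nor a heavy-tail term (this is \Cref{lem:local-direction-maximal}).

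The perturbation is therefore $O(\sigma^p\sqrt m\sum_j(\rho_{n,k}\sqrt{D+1})^j)$ in $L^4$, the same order as the fixed-direction term. The condition $\rho_{n,k}\sqrt{D+1}\to0$ is exactly what \Cref{ass:local-refinement-signal} provides, and this is where $\sqrt D$ is used. The drift does not need it: it requires only $\epsilon_n=o(\kappa_k)$, since $\abs{\widehat\mu_\ell-\mu_\ell}\le\epsilon_n$ uniformly in the direction. Markov's inequality over dyadic blocks then gives a per-change tail $CM^{-2}$.

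Finally, uniformity in $k$ cannot come from a union bound whose cost is ``absorbed'' by the signal condition. The union bound multiplies the per-change tail, so with growing $K$ it forces $M$ to grow with $K$. The paper takes $K$ fixed, where the union bound is immediate.
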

\Cref{thm:local-refinement} shows that the refined localization error for
each change point is $O_{\Pp}(\sigma^{2p}/\kappa_k^2)$.
The constants and signal requirements depend on the fixed order $p$.
Under additional assumptions, \Cref{sec:procedure-inference}
establishes limiting distributions for the refined estimators and constructs
asymptotically valid confidence intervals for vanishing jumps.

\subsection{Minimax lower bound}
\label{sec:minimax-lower-bound}

We establish a minimax lower bound matching the localization rate in
\Cref{thm:local-refinement}. For positive integers $n,D,\Delta$ with
$2\Delta\leq n$ and $\kappa>0$, let
$\mathcal P(n,D,\Delta,\sigma,\kappa)$ denote the class of joint
distributions of independent observations with the fixed tail scale
$\sigma$ from \Cref{ass:data-conditions}, satisfying
\begin{align}
X_1,\ldots,X_\eta&\overset{\mathrm{i.i.d.}}{\sim}F_L,
\qquad
X_{\eta+1},\ldots,X_n\overset{\mathrm{i.i.d.}}{\sim}F_R,
\nonumber
\end{align}
where $F_L,F_R$ are distributions on $\R^D$ and the unknown change point
satisfies $\Delta\leq\eta\leq n-\Delta$.
The observations satisfy the uniform sub-Gaussian bound
\begin{align}
\sup_{1\leq t\leq n}\sup_{u\in\Sph^{D-1}}
\norm{u^\top X_t}_{\psi_2}
&\leq\sigma.
\nonumber
\end{align}
Define the two population moment tensors by
\begin{align}
\mathcal M_L^{(p)}
&=\E_{X\sim F_L}\left[Y^{\otimes p}\right],
\qquad
\mathcal M_R^{(p)}
=\E_{X\sim F_R}\left[Y^{\otimes p}\right],
\nonumber
\end{align}
where $Y= (1, X^{\top})^{\top}$. Suppose that 
$\norm{\mathcal M_L^{(p)}-\mathcal M_R^{(p)}}_{\op}=\kappa$.

\begin{theorem}
\label{thm:minimax-localization-lower}
Fix an integer $p\geq2$ and a constant $1\leq\sigma<\infty$.
Let $D=D_n$ and $\Delta=\Delta_n$ be positive integers with
$\Delta\leq n/4$, and let $\kappa=\kappa_n>0$. Suppose \Cref{ass:local-refinement-signal} holds with
$\Delta_{\min}=\Delta$, $\kappa_{\min}=\kappa$, and $\gamma=\infty$.
There exist constants $c_0,c_1>0$, depending only on $p$, such that,
whenever $0<\kappa\leq c_0\sigma^p$, the following bound holds for all
sufficiently large $n$:
\begin{align}
\inf_{\check\eta}
\sup_{P\in\mathcal P(n,D,\Delta,\sigma,\kappa)}
\E_P\abs{\check\eta-\eta}
&\geq c_1\frac{\sigma^{2p}}{\kappa^2},
\label{eq:lower-risk-bound}
\end{align}
where the infimum ranges over all estimators based on $X_1,\ldots,X_n$.
\end{theorem}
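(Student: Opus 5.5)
We use Le Cam's two-point method. We build two distributions $P_0,P_1\in\mathcal P(n,D,\Delta,\sigma,\kappa)$ whose change points are $\eta_0=\Delta$ and $\eta_1=\Delta+\delta$, with $\delta\asymp\sigma^{2p}/\kappa^2$. We need $\delta\ge1$ and $\Delta+\delta\le n-\Delta$. The first holds after shrinking $c_0$, since $\kappa\le c_0\sigma^p$. The second holds because \Cref{ass:local-refinement-signal} gives $\Delta\kappa^2/\sigma^{2p}\to\infty$, so $\delta=o(\Delta)$, and because $\Delta\le n/4$. Both hypotheses use the same pair $(F_L,F_R)$. Then $P_0$ and $P_1$ differ only on the $\delta$ observations indexed by $(\Delta,\Delta+\delta]$. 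There the observations follow $F_R$ under $P_0$ and $F_L$ under $P_1$. Le Cam's inequality then gives
\begin{align}
\inf_{\check\eta}\max_{j=0,1}\E_{P_j}\abs{\check\eta-\eta_j}
&\geq\frac{\delta}{2}\bigl\{1-\mathrm{TV}(P_0,P_1)\bigr\},
\nonumber
\end{align}
and by Pinsker's inequality, $\mathrm{TV}(P_0,P_1)\le\sqrt{\delta\,\mathrm{KL}(F_R\|F_L)/2}$. It therefore suffices to build $F_L,F_R$ that are $\sigma$-sub-Gaussian, satisfy $\norm{\mathcal M_L^{(p)}-\mathcal M_R^{(p)}}_{\op}=\kappa$, and have $\mathrm{KL}(F_R\|F_L)\le C\kappa^2/\sigma^{2p}$. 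Taking $\delta=\lfloor c\sigma^{2p}/\kappa^2\rfloor$ with $c$ small then yields \eqref{eq:lower-risk-bound}.

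For the construction, reduce to a one-dimensional change along $e_1$. Take $X=\sigma Z e_1$, where $Z$ is scalar and all other coordinates are zero, or small independent Gaussian noise if nondegeneracy is wanted. Let $Z\sim N(0,1)$ under $F_L$. Under $F_R$, use a small tilt: $Z$ has density $\phi(z)\{1+h\,g(z)\}$, where $g$ is a bounded, mean-zero function with $\E g(Z)Z^p\ne0$, for example a truncated centered Hermite polynomial $H_p$. An even simpler option is a mean shift, $Z\sim N(h,1)$. For $h$ small, $F_R$ remains sub-Gaussian with constant $\lesssim\sigma$. Rescaling $\sigma$ by a constant absorbs this, with the constant folded into $c_0,c_1$. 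We also get $\mathrm{KL}(F_R\|F_L)\asymp h^2$.

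Next we compute the moment-tensor gap. Every entry of $\mathcal M^{(p)}$ is supported on indices in $\{1,2\}$, and the entry with $r$ indices equal to $2$ is $\sigma^r\E Z^r$. The difference of any entry is therefore $\sigma^r\{\E_R Z^r-\E_L Z^r\}=O(\sigma^p h)$ for $r\le p$, using $\sigma\ge1$. The projection onto $u=(0,1,0,\ldots)^{\otimes p}$ gives $\sigma^p\abs{\E_R Z^p-\E_L Z^p}\gtrsim\sigma^p h$. The operator norm is a supremum over a 2-dimensional subspace, so it is $\asymp\sigma^p h$. By continuity in $h$ we can choose $h\asymp\kappa/\sigma^p$, which is at most a small constant because $\kappa\le c_0\sigma^p$, so that the norm equals $\kappa$ exactly. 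This gives $\mathrm{KL}\lesssim h^2\asymp\kappa^2/\sigma^{2p}$, as required.

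The main obstacle is pinning the sub-Gaussian constant and the exact identity $\norm{\cdot}_{\op}=\kappa$ at the same time. The tilt must not raise $\norm{u^\top X}_{\psi_2}$ above $\sigma$, and the lower-order entries must not inflate the norm. The mean-shift choice handles both cleanly. Since $\sigma\ge1$, we can put $X=\sigma' Z e_1$ with $\sigma'$ slightly below $\sigma$ and $h\le c_0$, which leaves room in the $\psi_2$ budget. The map $h\mapsto\norm{\cdot}_{\op}$ is continuous, equals $0$ at $h=0$, and is $\ge c\sigma'^p h$, so the intermediate value theorem gives an $h\asymp\kappa/\sigma^p$ with norm exactly $\kappa$. \Cref{ass:local-refinement-signal} is used only to guarantee that $\delta$ fits inside the admissible range of change points for large $n$.
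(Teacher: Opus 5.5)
The step that fails is the mean shift you endorse in your last paragraph. For $Z\sim N(h,1)$ against $N(0,1)$, the difference $\E(G+h)^r-\E G^r$ is of order $h$ only for odd $r$. For even $r$ it is of order $h^2$.

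So for even $p$, including the covariance case $p=2$, the projection on $e_2^{\otimes p}$ is only $O(\sigma^ph^2)$. The norm $\norm{\Theta}_{\op}\asymp\sigma^{p-1}h+\sigma^ph^2$ is then driven by the order-$(p-1)$ entries. For $p=2$, the restricted jump has mean entries $-\sigma'h$ and second-moment entry $-\sigma'^2h^2$, so its norm is about $\sigma'h$. Your claimed bound $\norm{\Theta}_{\op}\geq c\sigma'^ph$ is therefore false.

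Solving $\norm{\Theta}_{\op}=\kappa$ with $\sigma h\lesssim1$, which covers every vanishing $\kappa$, gives $h\asymp\kappa/\sigma^{p-1}$ and $\operatorname{KL}\asymp\kappa^2/\sigma^{2p-2}$. Le Cam then yields only $\sigma^{2p-2}/\kappa^2$. Since $c_1$ may depend only on $p$ while $\sigma\geq1$ is arbitrary, this loses a factor $\sigma^2$. The mean shift is fine only for odd $p$; for even $p$ a variance change $N(0,1+h)$ would also work.

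The tilt you proposed first does work and should be your construction. The jump is linear in $h$: $\Theta=-h\Theta_1$ with $\sigma^p\abs{\E g(G)G^p}\leq\norm{\Theta_1}_{\op}\leq C_p\sigma^p$. So $h=\kappa/\norm{\Theta_1}_{\op}$ gives the norm exactly, with no intermediate-value argument. Every $L^q$ norm grows by at most the factor $1+h\norm{g}_\infty$. Also $\operatorname{KL}(F_R\,\|\,F_L)\leq h^2\E g(G)^2$, by $\log(1+x)\leq x$ and $\E g(G)=0$. The rest of your argument is the paper's: the Le Cam two-point bound, KL tensorization over the $\delta$ differing indices, Pinsker, and the signal assumption used only to get $\delta=o(\Delta)$.

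The paper builds $F_L,F_R$ differently. It uses a $(p+1)$-component Gaussian mixture with weights $\pi_j(1\pm\theta s_j)$. The identity $\E_Le^{tZ}-\E_Re^{tZ}=2\theta e^{a^2t^2/2}\sinh(at)^p$ forces all moments below order $p$ to coincide, so $\Theta=\kappa e_2^{\otimes p}$ exactly. The KL bound then comes from the log-sum inequality. This buys three things:
\begin{itemize}
\item a hard instance in which only the order-$p$ moment changes;
\item a trivial operator-norm identity;
\item a check of \Cref{ass:local-refinement-separation} with constant $1/2$ and radius $1$.
\end{itemize}
The last point means the instance also lies in the model covered by \Cref{thm:local-refinement}, which is what makes the two rates genuinely match.

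Your tilt is shorter and suffices for the theorem as stated, because $\mathcal P(n,D,\Delta,\sigma,\kappa)$ imposes no separation condition. However, its $\Theta_1$ carries lower-order entries. You would need to verify separation separately to make the matching claim.
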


 \section{Limiting distributions and confidence intervals}
\label{sec:procedure-inference}

We next quantify uncertainty in the refined change-point estimates.
We establish their limiting distributions under nonvanishing and vanishing
moment jumps, and use the vanishing-jump limit to construct data-driven
confidence intervals.

\subsection{The nonvanishing regime}

  We impose the following additional conditions on the observations
around $\eta_k$.

\begin{assumption}
\label{ass:nonvanishing-observation-limit}
\begin{enumerate}[label=\textbf{(\Alph*)},ref=(\Alph*),leftmargin=*,itemsep=4pt]
\item The moment order $p$ is fixed, and, as $n\to\infty$,
\begin{align}
\kappa_k&\to\kappa_k^*>0.
\nonumber
\end{align}
The dimension $D$ may grow with $n$.
\item There exists a real-valued process
$\{\xi_{k,t}\}_{t\in\mathbb Z}$, whose law does not depend on $n$,
such that, for every fixed integer $H\geq1$,
\begin{align}
\bigl(\nu_k^\top Y_{\eta_k+t}\bigr)_{t=-H+1}^{H}
&\xrightarrow{d}
\bigl(\xi_{k,t}\bigr)_{t=-H+1}^{H}.
\nonumber
\end{align}
Here $\nu_k$ is defined  in
\eqref{eq:population-maximizing-direction}, with its sign chosen
consistently. 
\item For every fixed integer $H\geq1$, the vector
$(\xi_{k,-H+1},\ldots,\xi_{k,H})$ has a joint density
with respect to Lebesgue measure on $\R^{2H}$.
\end{enumerate}
\end{assumption}

Condition (A) of \Cref{ass:nonvanishing-observation-limit} follows the
usual formulation of the nonvanishing regime in change-point analysis.
Condition (B) holds, for example, when $D$ is fixed, $\nu_k$ converges,
and the segment
$(X_{\eta_{k-1}+1},\ldots,X_{\eta_k},X_{\eta_k+1},\ldots,X_{\eta_{k+1}})$
admits an infinite extension whose joint law does not depend on $n$. This latter formulation is
commonly used in the change-point literature. Condition (C) is a standard
continuity condition that  ensures uniqueness of the
limiting random-walk minimizer.

Define the population limiting moment jump by
\begin{align}
\rho_k
&=\E\bigl\{\xi_{k,0}^p\bigr\}
-\E\bigl\{\xi_{k,1}^p\bigr\}.
\label{eq:limit moment jump size}
\end{align}
Under \Cref{ass:data-conditions,ass:nonvanishing-observation-limit},
it follows from \eqref{eq:limit moment jump size} that   $\abs{\rho_k}=\kappa_k^*$.
Define the two-sided random walk
\begin{align}
\mathcal L_k(r)&=
\begin{cases}
|r|(\kappa_k^*)^2
+2\rho_k\displaystyle\sum_{t=r+1}^{0}
\left[\xi_{k,t}^p-\E\bigl\{\xi_{k,t}^p\bigr\}\right],&r<0,\\[6pt]
0,&r=0,\\[6pt]
r(\kappa_k^*)^2
-2\rho_k\displaystyle\sum_{t=1}^{r}
\left[\xi_{k,t}^p-\E\bigl\{\xi_{k,t}^p\bigr\}\right],&r>0,
\end{cases}
\qquad r\in\mathbb Z.
\label{eq:nonvanishing-random-walk}
\end{align}

\begin{theorem}
\label{thm:nonvanishing-localization-limit}
Fix $k\in\{1,\ldots,K\}$. Suppose  
\Cref{ass:data-conditions,ass:local-refinement-separation,ass:local-refinement-signal,ass:nonvanishing-observation-limit}
hold. 
Then the refined estimator in
\eqref{eq:local-refined-estimate} satisfies
\begin{align}
\widehat\eta_k-\eta_k
&\xrightarrow{d}\operatorname*{arg\,min}_{r\in\mathbb Z}\mathcal L_k(r).
\nonumber
\end{align}
\end{theorem}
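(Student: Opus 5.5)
The proof will use the standard argmin continuous mapping scheme for change-point estimators. Set $r=t-\eta_k$ and study the centered criterion $\widehat Q_k(\eta_k+r)-\widehat Q_k(\eta_k)$ as a process in $r\in\mathbb Z$.

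\textbf{Step 0: reduction to a bounded window.} By \Cref{thm:local-refinement} and $\kappa_k\to\kappa_k^*>0$, we have $\widehat\eta_k-\eta_k=O_{\Pp}(1)$. It therefore suffices to show two things:
\begin{enumerate}[label=(\roman*),leftmargin=*]
\item for every fixed $H$, the restricted process $\{\widehat Q_k(\eta_k+r)-\widehat Q_k(\eta_k)\}_{\abs r\leq H}$ converges in distribution to $\{\mathcal L_k(r)\}_{\abs r\leq H}$;
\item the limit $\mathcal L_k$ has an almost surely unique minimizer, and $\mathcal L_k(r)\to\infty$ as $\abs r\to\infty$.
\end{enumerate}
Given these, the argmax continuous mapping theorem for $O_{\Pp}(1)$ argmins over $\mathbb Z$ gives the claim. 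On a finite set, uniqueness of the minimizer makes argmin a continuous functional of the vector. The tail approximation is handled by choosing $H$ large so that both $\Pp(\abs{\widehat\eta_k-\eta_k}>H)$ and $\Pp(\abs{\operatorname{argmin}\mathcal L_k}>H)$ are small.

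\textbf{Step 1: algebraic expansion.} Write $Z_i=(\widehat\nu_k^\top Y_i)^p$, and let $\widehat m_L,\widehat m_R$ be the two fitted means on $(a_{k-1},b_{k-1}]$ and $(a_k,b_k]$. For $r>0$ we have
\begin{align}
\widehat Q_k(\eta_k+r)-\widehat Q_k(\eta_k)
=\sum_{i=\eta_k+1}^{\eta_k+r}\left\{(Z_i-\widehat m_L)^2-(Z_i-\widehat m_R)^2\right\},
\nonumber
\end{align}
and each summand equals $(\widehat m_R-\widehat m_L)\{2Z_i-\widehat m_L-\widehat m_R\}$. Let $m_R=\E (\widehat\nu_k^\top Y_i)^p$ evaluated at the fixed direction for $i>\eta_k$, and similarly $m_L$. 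Writing $Z_i-m_R$ plus a drift, each summand becomes
\begin{align}
(\widehat m_L-\widehat m_R)^2-2(\widehat m_L-\widehat m_R)(Z_i-m_R)+\text{errors}.
\nonumber
\end{align}
The case $r<0$ is symmetric.

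\textbf{Step 2: plug-in errors.} We need $\widehat m_L-\widehat m_R\to\rho_k$ in probability, and for fixed $i$ near $\eta_k$ we need $(\widehat\nu_k^\top Y_i)^p-(\nu_k^\top Y_i)^p\to0$ in probability.
\begin{itemize}[leftmargin=*]
\item The direction error bound in \Cref{app:direction-setup} gives $d_\pm(\widehat\nu_k,\nu_k)\to0$ under \Cref{ass:local-refinement-signal}. We fix the sign consistently.
\item $\norm{\widehat\Theta_k-\Theta_k}_{\op}\to0$ by the tensor concentration used in the earlier theorems.
\item Hence $\widehat m_L-\widehat m_R=\ip{\widehat\Theta_k}{\widehat\nu_k^{\otimes p}}_{\F}$ converges to $\ip{\Theta_k}{\nu_k^{\otimes p}}_{\F}$. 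The latter equals $\pm\kappa_k$, and its limit is $\rho_k$ by the convergence in (B) plus uniform integrability, which follows from the sub-Gaussian bound.
\item For a single observation, $\abs{(\widehat\nu_k^\top Y_i)^p-(\nu_k^\top Y_i)^p}\lesssim \norm{\widehat\nu_k-\nu_k}_2\,\sup_{u}\abs{u^\top Y_i}^{p-1}\norm{Y_i}$. This crude bound costs $\sqrt D$ and may be too large when $D$ grows.
\end{itemize}
This is where I expect the main obstacle. The fix is to use that $\widehat\nu_k$ is computed from the fitting intervals $(a_{k-1},b_{k-1}]\cup(a_k,b_k]$. These are separated from $\eta_k\pm H$ by order $\Delta_{\min}$ indices. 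By $\alpha$-mixing with a coupling (Berbee-type) argument, $\widehat\nu_k$ is therefore asymptotically independent of $(Y_{\eta_k+t})_{\abs t\leq H}$. Conditionally, $(\widehat\nu_k-\nu_k)^\top Y_i$ is sub-Gaussian with scale $\sigma\norm{\widehat\nu_k-\nu_k}_2\to0$. This yields $\widehat\nu_k^\top Y_i-\nu_k^\top Y_i=o_{\Pp}(1)$ for each fixed $i$, and hence the same for the $p$-th powers.

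\textbf{Step 3: finite-dimensional limit.} By Step 2, for $\abs r\leq H$ the centered criterion equals a continuous function of $(\nu_k^\top Y_{\eta_k+t})_{t=-H+1}^{H}$ and of constants converging to $\rho_k$ and $m_L,m_R$, up to an $o_{\Pp}(1)$ error. Applying \Cref{ass:nonvanishing-observation-limit}(B) and the continuous mapping theorem gives exactly $\mathcal L_k(r)$, using $\rho_k^2=(\kappa_k^*)^2$. The limiting means $\E\xi_{k,t}^p$ match the constants by uniform integrability.

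\textbf{Step 4: properties of the limit.} For uniqueness, on $\abs r\leq H$ any two values $\mathcal L_k(r)$ and $\mathcal L_k(r')$ differ by a nonconstant polynomial in the $\xi$'s plus a constant. By the joint density in (C), such a difference is almost surely nonzero. For growth, $\mathcal L_k(r)\to\infty$ follows from a law of large numbers for the centered partial sums. This needs only ergodicity or mixing of the limit process, which is inherited from the mixing of $X_t$, and it also follows from the $O_{\Pp}(1)$ bound.
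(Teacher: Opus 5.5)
Your overall architecture is the paper's. The ingredients are: tightness of $\widehat\eta_k-\eta_k$ from \Cref{thm:local-refinement}; finite-dimensional convergence of the centered criterion on a fixed window to $\mathcal L_k$; almost-sure uniqueness via the nonconstant-polynomial argument and the joint density in \Cref{ass:nonvanishing-observation-limit}(C); and the argmin theorem.

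\textbf{The gap is the last bullet of your Step~2}, which is where you place the main obstacle. Your coupling fix does not work as stated, for two reasons.
\begin{enumerate}
\item Berbee's coupling lemma is a $\beta$-mixing (absolute regularity) result. \Cref{ass:data-conditions} only provides $\alpha$-mixing, which does not imply it.
\item $\widehat\nu_k$ is not a function of far-away observations only. The fitting endpoints $a_{k-1},b_{k-1},a_k,b_k$ are built from $\widetilde\eta_{k-1},\widetilde\eta_k,\widetilde\eta_{k+1}$. In particular, $\widetilde\eta_k$ is the CUSUM maximizer on a selected seeded interval that contains $\eta_k$, so it depends on $(Y_{\eta_k+t})_{\abs t\leq H}$ themselves.
\end{enumerate}
So the separation of the fitting intervals from $\eta_k$ does not yield asymptotic independence, and the conditional sub-Gaussian claim for $(\widehat\nu_k-\nu_k)^\top Y_i$ is unsupported.

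\textbf{The missing observation is that no decoupling is needed: a crude bound is affordable.} \Cref{ass:local-refinement-signal} carries the extra factors $D$ and $\sqrt D$ precisely so that $\rho_{n,k}\sqrt{D+1}\to0$. Here $\rho_{n,k}=C_\nu\epsilon_n/\kappa_k$ is a deterministic radius with $\norm{\widehat\nu_k-\nu_k}_2\leq\rho_{n,k}$ on $\mathcal E_{\mathrm{unif}}$ after sign alignment; see \Cref{lem:stage-iii-direction-rate}, \eqref{eq:refinement-improved-direction-rate} and \eqref{eq:inference-deterministic-smallness}.

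Your bound as literally written does not suffice: $\sup_u\abs{u^\top Y_i}^{p-1}=\norm{Y_i}_2^{p-1}$, so it costs $D^{p/2}$ rather than $\sqrt D$. Instead, expand around $\nu_k$ with $h=\widehat\nu_k-\nu_k$:
\[
\abs{(\widehat\nu_k^\top Y_i)^p-(\nu_k^\top Y_i)^p}
\leq\sum_{j=1}^p\binom pj\abs{\nu_k^\top Y_i}^{p-j}\bigl(\rho_{n,k}\norm{Y_i}_2\bigr)^j .
\]
Since $\norm{Y_i}_2=O_{\Pp}(\sqrt{D+1})$ and $\nu_k^\top Y_i=O_{\Pp}(1)$, this is $o_{\Pp}(1)$ for each of the finitely many $i$ with $\abs{i-\eta_k}\leq H$. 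This holds whatever the dependence between $\widehat\nu_k$ and $Y_i$.

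The paper does the same thing uniformly in $r$. It takes the supremum over the deterministic ball $\norm h_2\leq\rho_{n,k}$ in \Cref{lem:local-direction-maximal} and feeds this into \Cref{lem:uniform-oracle-equivalence}. In the nonvanishing regime the window $\abs r\leq M\sigma^{2p}/\kappa_k^2$ used there is bounded, so this matches your fixed-$H$ reduction.

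\textbf{A smaller point in Step~4.} The ergodic law of large numbers is not available, because \Cref{ass:nonvanishing-observation-limit} gives no stationarity of $\{\xi_{k,t}\}$. Your second suggestion does work once written out. For $H'\geq H$, the events $\{\abs{\operatorname*{arg\,min}_{\abs r\leq H'}\mathcal L_k(r)}\leq H\}$ decrease in $H'$. By finite-dimensional convergence and tightness of $\widehat\eta_k-\eta_k$, each has probability at least $1-\varepsilon$ for large $H$. This gives an almost surely finite global minimizer.

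The paper takes a different route there. It transfers the fourth-moment maximal bound \eqref{eq:improved-fixed-maximal} to the limit by bounded truncation, then applies Borel--Cantelli over dyadic blocks to obtain $\mathcal L_k(r)/\abs r\to(\kappa_k^*)^2$.
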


\subsection{The vanishing regime}

We study the regime in which the moment jump size $\kappa_k$, defined in
\eqref{eq:jump-size}, tends to zero. As we show in this subsection, the
rescaled localization error $\kappa_k^2(\widehat\eta_k-\eta_k)$ converges
in distribution to the minimizer of a two-sided Brownian motion with
drift. To establish this result, we impose the following assumptions.

\begin{assumption}
\label{ass:vanishing-projected-process}
\begin{enumerate}[label=\textbf{(\Alph*)},ref=(\Alph*),leftmargin=*,itemsep=4pt]
\item The moment order $p$ is fixed, and, as $n\to\infty$,
\begin{align}
\kappa_k&\to0.
\nonumber
\end{align}
\item For each $n$, the segment
$(X_{\eta_{k-1}+1},\ldots,X_{\eta_{k+1}})$ admits an extension
indexed by $\mathbb Z$ that is strictly stationary separately on
$t\leq\eta_k$ and $t>\eta_k$. The extended time series satisfies the
tail and mixing bounds in \Cref{ass:data-conditions}, with the mixing
bound imposed at every positive lag.
\end{enumerate}
\end{assumption}

Using the population direction $\nu_k$ in
\eqref{eq:population-maximizing-direction}, define  \begin{align}
Z_{k,t}
&=(\nu_k^\top Y_t)^p-\E\bigl\{(\nu_k^\top Y_t)^p\bigr\},
\qquad t=1,\ldots,n.
\label{eq:vanishing-projected-noise}
\end{align}
For the variance limits below, evaluate the same formula on the two-regime
extension in \Cref{ass:vanishing-projected-process}. This defines
$Z_{k,t}$ for every $t\in\mathbb Z$ and agrees with the observed
projection on $(\eta_{k-1},\eta_{k+1}]$.

\begin{assumption}
\label{ass:vanishing-partial-sum-variance}
There exist constants $\omega_{k,L},\omega_{k,R}\in(0,\infty)$ such that,
for every integer sequence $m=m_n\to\infty$, as $n\to\infty$,
\begin{align}
\frac{1}{m}
\operatorname{Var}\left(\sum_{j=1}^{m}Z_{k,\eta_k+j}\right)
\longrightarrow\omega_{k,R}^2
 \quad \text{and} \quad 
\frac{1}{m}
\operatorname{Var}\left(\sum_{j=1}^{m}Z_{k,\eta_k-j+1}\right)
\longrightarrow\omega_{k,L}^2. 
\nonumber
\end{align}
 
\end{assumption}
Let $B_{k,L}$ and $B_{k,R}$ be independent standard Brownian motions.
Define the two-sided Brownian motion with drift by
\begin{align}
\mathcal G_k(r)&=
\begin{cases}
|r|+2\omega_{k,L}B_{k,L}(-r),&r<0,\\
0,&r=0,\\
r+2\omega_{k,R}B_{k,R}(r),&r>0,
\end{cases}
\qquad r\in\R.
\label{eq:vanishing-brownian-process}
\end{align}

\begin{theorem}
\label{thm:vanishing-localization-limit}
Fix $k\in\{1,\ldots,K\}$. Suppose
\Cref{ass:data-conditions,ass:local-refinement-separation,ass:local-refinement-signal,ass:vanishing-projected-process,ass:vanishing-partial-sum-variance}
hold. Then   the refined estimator in
\eqref{eq:local-refined-estimate} satisfies
\begin{align}
\kappa_k^2(\widehat\eta_k-\eta_k)
&\xrightarrow{d}\operatorname*{arg\,min}_{r\in\R}\mathcal G_k(r).
\nonumber
\end{align}
\end{theorem}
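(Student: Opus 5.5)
We prove \Cref{thm:vanishing-localization-limit} with the standard argmin continuous-mapping argument. The process we work with is the centered objective $\widehat Q_k(\eta_k+r)-\widehat Q_k(\eta_k)$, viewed on the rescaled time axis $r=\lfloor u/\kappa_k^2\rfloor$, $u\in\R$.

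\textbf{Step 1: restriction to a good event.} We work on the event $\mathcal E_n$ where three things hold.
\begin{itemize}
\item The preliminary estimates satisfy \eqref{eq:preliminary-localization-bound}, which holds with probability tending to one by \Cref{thm:preliminary-localization}. On this event the fitting intervals $(a_{k-1},b_{k-1}]$ and $(a_k,b_k]$ lie inside the two true segments, each of length of order $\Delta_{\min}$.
\item The direction error $d_\pm(\widehat\nu_k,\nu_k)$ obeys the appendix bound. Under \Cref{ass:local-refinement-separation,ass:local-refinement-signal} this bound is $o(1)$ at the rate $\sigma^p\sqrt{(D+\log n)/\Delta_{\min}}/\kappa_k$ together with the higher-order term.
\item $\kappa_k^2|\widehat\eta_k-\eta_k|=O_{\Pp}(\sigma^{2p})$, from \Cref{thm:local-refinement}.
\end{itemize}
Tightness of $\kappa_k^2(\widehat\eta_k-\eta_k)$ is then already available. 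It remains to identify the limit on each compact set $u\in[-M,M]$.

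\textbf{Step 2: expansion of the objective.} Fix the sign so that $\ip{\Theta_k}{\nu_k^{\otimes p}}_{\F}=\rho_k$ with $|\rho_k|=\kappa_k$. Write $W_i=(\widehat\nu_k^\top Y_i)^p$, and let $\widehat m_L,\widehat m_R$ denote its fitted means on the two central intervals.

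For $r>0$,
\begin{align*}
\widehat Q_k(\eta_k+r)-\widehat Q_k(\eta_k)
&=\sum_{i=\eta_k+1}^{\eta_k+r}\bigl\{(W_i-\widehat m_L)^2-(W_i-\widehat m_R)^2\bigr\}\\
&=\sum_{i=\eta_k+1}^{\eta_k+r}(\widehat m_R-\widehat m_L)\bigl(2W_i-\widehat m_L-\widehat m_R\bigr).
\end{align*}
Substitute $W_i=\E W_i+(\text{noise})$ and replace $\widehat\nu_k$ by $\nu_k$. The leading term is then $r\kappa_k^2-2\rho_k\sum_{i=\eta_k+1}^{\eta_k+r}Z_{k,i}$.

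The errors come from three sources: (i) $\widehat m_L-\E W$ and $\widehat m_R-\E W$, of order $\sigma^p\sqrt{D/\Delta_{\min}}$ once we use the fact that $\widehat\nu_k$ is estimated from the fitting intervals; (ii) the bias $\ip{\Theta_k}{\widehat\nu_k^{\otimes p}-\nu_k^{\otimes p}}_{\F}$, which by maximality of $\nu_k$ is second order, i.e.\ $\kappa_k d_\pm^2$; (iii) the noise difference $\sum_i\{(\widehat\nu_k^\top Y_i)^p-(\nu_k^\top Y_i)^p-\E\}$. For (iii) we use a uniform-over-directions tensor concentration bound on windows of length $r\lesssim\kappa_k^{-2}$, which costs a factor of $\sqrt{D+\log n}$.

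Multiplied by $\kappa_k$ and by $|r|\asymp\kappa_k^{-2}$, each of these errors is $o_{\Pp}(\kappa_k^2\cdot\kappa_k^{-2})=o_{\Pp}(1)$ uniformly on compacts. This is precisely what \Cref{ass:local-refinement-signal} buys, because it gives $\kappa_k\sqrt{\Delta_{\min}}/\sqrt{D(D+\log n)}\to\infty$. The case $r<0$ is symmetric. The dependence of $\widehat\nu_k$ on the data is handled by the sup over directions in the concentration bound.

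\textbf{Step 3: functional CLT.} Under \Cref{ass:vanishing-projected-process}, $Z_{k,t}$ is stationary on each side and $\alpha$-mixing at a generalized exponential rate, with $\psi_{2/p}$ tails. Together with \Cref{ass:vanishing-partial-sum-variance}, an invariance principle for triangular arrays of mixing sequences (e.g.\ via Bernstein blocks and Lindeberg) gives
\[
\kappa_k\sum_{i=1}^{\lfloor u/\kappa_k^2\rfloor}Z_{k,\eta_k+i}\Rightarrow\omega_{k,R}B_{k,R}(u),
\]
and the analogous statement on the left side. The two sides are asymptotically independent because mixing decouples blocks separated by a slowly growing gap.

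Since $\rho_k/\kappa_k=\pm1$ and $-B$ has the same law as $B$, the rescaled objective converges weakly in $\ell^\infty([-M,M])$ to $\mathcal G_k$ from \eqref{eq:vanishing-brownian-process}. The minimizer of $\mathcal G_k$ is almost surely unique, which is standard for Brownian motion with drift. Combining this with tightness from Step 1, the argmax continuous mapping theorem gives the claim. The $\min$ tie-breaking rule and the integer grid affect only $o(1)$ after rescaling.

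\textbf{Main obstacle.} The hardest step is the uniform control in Step 2(iii) of the direction-estimation error inside the noise sum. Replacing $\widehat\nu_k$ by $\nu_k$ requires a bound on
\[
\sup_{|r|\le M\kappa_k^{-2}}\Bigl|\sum_{i}\bigl[(\widehat\nu_k^\top Y_i)^p-(\nu_k^\top Y_i)^p-\E\bigr]\Bigr|
\]
that is $o(\kappa_k^{-1})$. My first attempt is to write the difference as a telescoping sum of mixed tensor products. Each term is bounded by $d_\pm(\widehat\nu_k,\nu_k)$ times the operator norm of a centered empirical tensor over the window, and a maximal inequality gives that operator norm as $O(\sqrt{r(D+\log n)}+(D+\log n)^{p/2+1/\gamma})$. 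Then I check the product against \Cref{ass:local-refinement-signal}.
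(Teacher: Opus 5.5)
Your overall architecture matches the paper's: an oracle comparison uniformly over $|r|\le M\kappa_k^{-2}$, with the data-dependence of $\widehat\nu_k$ absorbed by a supremum over a deterministic set of directions. It then uses a two-sided invariance principle (Bernstein blocks instead of the paper's Withers CLT plus fourth-moment equicontinuity is an inessential difference), almost sure uniqueness of the minimizer of $\mathcal G_k$, and the argmax theorem with tightness from \Cref{thm:local-refinement}.

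The gap is in Step~2(iii), which you correctly flag as the crux: the telescoping bound you propose does not vanish under \Cref{ass:local-refinement-signal}. On $\mathcal E_{\mathrm{unif}}$ the direction error is only known to satisfy $\|\widehat\nu_k-\nu_k\|_2\le\rho_{n,k}=C_\nu\epsilon_n/\kappa_k$. Combine $p\rho_{n,k}\|S_r\|_{\op}$ with $\|S_r\|_{\op}\lesssim\sigma^p\{\sqrt{r(D+\log n)}+(D+\log n)^{p/2+1/\gamma}\}$, multiply by the fitted jump $\approx\kappa_k$, and take $r\le M\kappa_k^{-2}$. Up to powers of $\sigma$, this leaves an error of order
\[
\rho_{n,k}\sqrt{M(D+\log n)}+\epsilon_n(D+\log n)^{p/2+1/\gamma}.
\]
The assumption only yields $\rho_{n,k}\sqrt{D+1}\to0$.
\begin{itemize}
\item The first term needs at least $\kappa_k\sqrt{\Delta_{\min}}/(D+\log n)\to\infty$, not $\kappa_k\sqrt{\Delta_{\min}}/\sqrt{D(D+\log n)}\to\infty$. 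It therefore fails when $\log n\gg D$.
\item The second term comes from the large-deviation part of an operator-norm bound over a short window. For growing $D$ this is intrinsic, since a single summand already has $\|Y_t^{\otimes p}\|_{\op}=\|Y_t\|_2^p$. It needs roughly $\Delta_{\min}\gg(D+\log n)^{p+1+2/\gamma}$, which the assumption does not give. For example, $p=2$, $\gamma=1$, $D=1$, $\kappa_k=1/\log\log n$, $\Delta_{\min}=(\log n)^3$ satisfies \Cref{ass:local-refinement-signal}, yet $\epsilon_n\asymp(\log n)^{-1}$ and $\epsilon_n(\log n)^2\to\infty$.
\end{itemize}

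The missing idea is to avoid the order-$p$ operator norm and expand around the fixed direction. For $\|h\|_2\le\rho_{n,k}$,
\[
\langle S_r,(\nu_k+h)^{\otimes p}-\nu_k^{\otimes p}\rangle_{\F}
=\sum_{j=1}^p\binom pj\langle A_{r,j},h^{\otimes j}\rangle_{\F},
\qquad
A_{r,j}=\sum_t\bigl\{(\nu_k^\top Y_t)^{p-j}Y_t^{\otimes j}-\E[(\nu_k^\top Y_t)^{p-j}Y_t^{\otimes j}]\bigr\}.
\]
Then bound $\max_{r\le m}\|A_{r,j}\|_{\F}$ in $L^4$ by $C\sigma^p\sqrt m\,(D+1)^{j/2}$, coordinatewise, via the fourth-moment maximal inequality for mixing sequences. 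That inequality needs only bounded eighth moments and involves no union bound over windows and no sub-Weibull tail term.

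This is \Cref{lem:local-direction-maximal}. Uniformly over the deterministic ball, it gives an error of order $\sum_{j=1}^p(\rho_{n,k}\sqrt{D+1})^j$, because the leading $j=1$ term involves only the $(D+1)$-vector $A_{r,1}$. That is exactly the quantity the extra $\sqrt D$ in \Cref{ass:local-refinement-signal} is designed to control. With it, \Cref{lem:uniform-oracle-equivalence} and the rest of your argument go through.
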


\subsection{Confidence intervals}

We propose a data-driven procedure with theoretical guarantees for
constructing confidence intervals for $\eta_k$ based on the refined
estimator $\widehat\eta_k$. To do so, we first estimate the jump
magnitude $\kappa_k$.
Recall that $(a_k,b_k]$, with endpoints defined in
\eqref{eq:central-fitting-interval}, lies within the homogeneous segment
$(\eta_k,\eta_{k+1}]$ with high probability under the conditions of
\Cref{thm:preliminary-localization}.
Using the estimated leading direction $\widehat\nu_k$ defined in
\eqref{eq:estimated-leading-direction}, we estimate $\kappa_k$ by
comparing the projected moment averages on the two adjacent fitting
intervals:
\begin{align}
\widehat\kappa_k
&=\left|
\frac{1}{b_{k-1}-a_{k-1}}
\sum_{t=a_{k-1}+1}^{b_{k-1}}(\widehat\nu_k^\top Y_t)^p
-\frac{1}{b_k-a_k}
\sum_{t=a_k+1}^{b_k}(\widehat\nu_k^\top Y_t)^p
\right|.
\nonumber
\end{align}

Note that by \eqref{eq:estimated-moment-jump}, \begin{align}
\widehat\kappa_k
&=\abs{\ip{\widehat\Theta_k}{\widehat\nu_k^{\otimes p}}_{\F}}
\approx\abs{\ip{\Theta_k}{\nu_k^{\otimes p}}_{\F}}
=\kappa_k,
\nonumber
\end{align}
which motivates $\widehat\kappa_k$ as an estimator of $\kappa_k$.

We next estimate the long-run variances of $Z_{k,t}$, defined in
\eqref{eq:vanishing-projected-noise}, before and after the change at
$\eta_k$.
Denote \begin{align}
\ell_n&=\left\lfloor(\log n)^{(1+1/\gamma)/2}\right\rfloor.
\nonumber
\end{align}
The Bartlett estimators \citep{NeweyWest1987} on the left and right
fitting intervals are, respectively,
\begin{align}
\widehat\omega_{k,L}^2
&=\frac{1}{b_{k-1}-a_{k-1}}
\sum_{\substack{a_{k-1}<t,r\leq b_{k-1}\\|t-r|\leq\ell_n}}
\left(1-\frac{|t-r|}{\ell_n+1}\right)\nonumber\\
&\quad
\times \left\{(\widehat\nu_k^\top Y_t)^p
-\frac{1}{b_{k-1}-a_{k-1}}
\sum_{j=a_{k-1}+1}^{b_{k-1}}(\widehat\nu_k^\top Y_j)^p\right\}
 \left\{(\widehat\nu_k^\top Y_r)^p
-\frac{1}{b_{k-1}-a_{k-1}}
\sum_{j=a_{k-1}+1}^{b_{k-1}}(\widehat\nu_k^\top Y_j)^p\right\}
\label{eq:ci-left-long-run-variance}
\end{align}
and
\begin{align}
\widehat\omega_{k,R}^2
&=\frac{1}{b_k-a_k}
\sum_{\substack{a_k<t,r\leq b_k\\|t-r|\leq\ell_n}}
\left(1-\frac{|t-r|}{\ell_n+1}\right)
\nonumber\\
&\quad\times\left\{(\widehat\nu_k^\top Y_t)^p
-\frac{1}{b_k-a_k}
\sum_{j=a_k+1}^{b_k}(\widehat\nu_k^\top Y_j)^p\right\}
 \left\{(\widehat\nu_k^\top Y_r)^p
-\frac{1}{b_k-a_k}
\sum_{j=a_k+1}^{b_k}(\widehat\nu_k^\top Y_j)^p\right\}.
\label{eq:ci-right-long-run-variance}
\end{align}
In \eqref{eq:ci-left-long-run-variance} and
\eqref{eq:ci-right-long-run-variance}, the summands with $t=r$ give the
sample variance of $(\widehat\nu_k^\top Y_t)^p$ on the corresponding
fitting interval, while the summands with $t\ne r$ account for temporal
dependence through weighted sample autocovariances.
The estimators $\widehat\omega_{k,L}^2$ and $\widehat\omega_{k,R}^2$
estimate $\omega_{k,L}^2$ and $\omega_{k,R}^2$ in
\Cref{ass:vanishing-partial-sum-variance}. Let $\widehat\omega_{k,L}$
and $\widehat\omega_{k,R}$ denote their nonnegative square roots.
Let $B_{k,L}$ and $B_{k,R}$ be independent standard Brownian motions,
independent of the observations. Define
\begin{align}
\widehat{\mathcal H}_k(u)
&=\begin{cases}
|u|+2\widehat\omega_{k,L}B_{k,L}(-u),&u<0,\\
0,&u=0,\\
u+2\widehat\omega_{k,R}B_{k,R}(u),&u>0,
\end{cases}
\qquad u\in\R.
\label{eq:ci-simulated-process}
\end{align}

For $a\in(0,1)$, let $\widehat q_{k,a}$ be the conditional $a$-quantile,
given the observations, of the minimizer of
\eqref{eq:ci-simulated-process}. These quantiles can be approximated by
simulating independent paths and taking empirical quantiles of their
minimizers. For $\alpha\in(0,1)$, define the confidence interval by
\begin{align}
\mathcal J_{k,\alpha}
&=\left[
\widehat\eta_k-\frac{\widehat q_{k,1-\alpha/2}}{\widehat\kappa_k^2},\,
\widehat\eta_k-\frac{\widehat q_{k,\alpha/2}}{\widehat\kappa_k^2}
\right] .
\label{eq:ci-confidence-interval}
\end{align}

\begin{theorem}
\label{thm:ci-coverage}
Fix $k\in\{1,\ldots,K\}$ and suppose the conditions of
\Cref{thm:vanishing-localization-limit} hold. Then for every fixed $\alpha\in(0,1)$, the confidence interval
in \eqref{eq:ci-confidence-interval} satisfies
\begin{align}
\Pp\bigl(\eta_k\in\mathcal J_{k,\alpha}\bigr)
&\longrightarrow1-\alpha.
\nonumber
\end{align}
\end{theorem}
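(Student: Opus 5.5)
The plan is to reduce coverage to a Slutsky-type argument built on \Cref{thm:vanishing-localization-limit}, plus consistency of the plug-in quantities. Write $\xi=\operatorname*{arg\,min}_{r\in\R}\mathcal G_k(r)$ and let $q_a$ denote its $a$-quantile. The event $\eta_k\in\mathcal J_{k,\alpha}$ is equivalent to
\begin{align}
\widehat q_{k,\alpha/2}\le\widehat\kappa_k^2(\widehat\eta_k-\eta_k)\le\widehat q_{k,1-\alpha/2}.
\nonumber
\end{align}
It therefore suffices to establish three things: (i) $\widehat\kappa_k/\kappa_k\xrightarrow{\Pp}1$; (ii) $\widehat\omega_{k,L}^2\xrightarrow{\Pp}\omega_{k,L}^2$ and $\widehat\omega_{k,R}^2\xrightarrow{\Pp}\omega_{k,R}^2$; and (iii) continuity of the quantile map $(\omega_L,\omega_R,a)\mapsto q_a(\omega_L,\omega_R)$ at the true values, together with continuity of the distribution function of $\xi$. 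Given these, (i) and \Cref{thm:vanishing-localization-limit} give $\widehat\kappa_k^2(\widehat\eta_k-\eta_k)=(\widehat\kappa_k/\kappa_k)^2\kappa_k^2(\widehat\eta_k-\eta_k)\xrightarrow{d}\xi$. From (ii) and (iii), $\widehat q_{k,a}\xrightarrow{\Pp}q_a$ for $a\in\{\alpha/2,1-\alpha/2\}$. Slutsky's lemma then yields $\Pp(\eta_k\in\mathcal J_{k,\alpha})\to\Pp(q_{\alpha/2}\le\xi\le q_{1-\alpha/2})=1-\alpha$.

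For (i), decompose
\begin{align}
\widehat\kappa_k-\kappa_k=\bigl(\abs{\ip{\widehat\Theta_k}{\widehat\nu_k^{\otimes p}}_\F}-\abs{\ip{\Theta_k}{\widehat\nu_k^{\otimes p}}_\F}\bigr)+\bigl(\abs{\ip{\Theta_k}{\widehat\nu_k^{\otimes p}}_\F}-\kappa_k\bigr).
\nonumber
\end{align}
On the high-probability event of \Cref{thm:preliminary-localization}, the fitting intervals are homogeneous and have length of order $\Delta_{\min}$. So the first term is bounded by $\norm{\widehat\Theta_k-\Theta_k}_{\op}$, which by the tensor concentration bound used for the preliminary step is $O_\Pp(\sigma^p[\sqrt{(D+\log n)/\Delta_{\min}}+(D+\log n)^{p/2+1/\gamma}/\Delta_{\min}])$. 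By \Cref{ass:local-refinement-signal} this is $o(\kappa_k)$. The second term is nonpositive and is bounded in absolute value by the same quantity, since by optimality of $\widehat\nu_k$ we have $\abs{\ip{\widehat\Theta_k}{\widehat\nu_k^{\otimes p}}}\ge\abs{\ip{\widehat\Theta_k}{\nu_k^{\otimes p}}}$. I will identify $\kappa_k$ with $\abs{\ip{\Theta_k}{\nu_k^{\otimes p}}}$, which holds for symmetric tensors by Banach's theorem.

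Step (ii) is the main obstacle. I would first replace $\widehat\nu_k$ by $\nu_k$. The direction error bound from \Cref{app:direction-setup} gives $d_\pm(\widehat\nu_k,\nu_k)=O_\Pp(\sqrt{D}\,\times$ noise$/\kappa_k)$; because $p$ is even or the sign enters squared, the sign is irrelevant for the variance estimator. Then $\abs{(\widehat\nu_k^\top Y_t)^p-(\nu_k^\top Y_t)^p}\lesssim d_\pm(\widehat\nu_k,\nu_k)\,\norm{Y_t}_2^{p}$ in the worst case. This crude bound is too large, so instead I will write the difference as $\sum_j\ip{Y_t^{\otimes p}}{\cdot}$ multilinear terms and control the weighted sums uniformly over directions via a net argument. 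This gives an extra error of order $\ell_n\,d_\pm\,\sigma^{2p}\sqrt{D}$-type, which vanishes under \Cref{ass:local-refinement-signal}; this is where the factor $D$ in that assumption is used. Independence of $\widehat\nu_k$ from the fitting-interval data is not available, which is why the uniform-over-sphere control is needed.

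With $\nu_k$ fixed, the Bartlett estimator on a stationary, geometrically $\alpha$-mixing, sub-Weibull ($\psi_{2/p}$) sequence $Z_{k,t}$ is handled by standard arguments. The bias $\sum_{|h|\le\ell_n}(|h|/(\ell_n+1))\gamma(h)+\sum_{|h|>\ell_n}\gamma(h)\to0$ by summable covariances from Davydov's inequality. The variance is $O(\ell_n^2/\Delta_{\min})\to0$ via fourth-cumulant summability under mixing. Finally, the limit equals $\omega_{k,R}^2$ by \Cref{ass:vanishing-partial-sum-variance}. The effect of centering with the sample mean is $O_\Pp(\ell_n/\Delta_{\min})$.

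For (iii), rescale by Brownian scaling to show that $\xi$ has a continuous, strictly increasing distribution function. This follows from known results on the argmin of two-sided drifted Brownian motion, whose density exists. Continuity of its law in $(\omega_L,\omega_R)$ follows from a.s. uniqueness of the argmin and the argmax continuous mapping theorem applied to coupled paths. Quantile convergence then follows from strict monotonicity.
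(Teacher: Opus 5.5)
You have the right overall architecture, and it is the paper's: (i) $\widehat\kappa_k/\kappa_k\xrightarrow{\Pp}1$, (ii) consistency of the Bartlett estimators, (iii) continuity and strict monotonicity of the argmin law, then Slutsky. Steps (i) and (iii) are sound.

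\textbf{The gap is in (ii)}, at exactly the step you flag as the main obstacle. You bound the effect of replacing $\nu_k$ by $\widehat\nu_k$ by an error of order $\ell_n\,d_\pm(\widehat\nu_k,\nu_k)\sqrt D$, and you assert that this vanishes under \Cref{ass:local-refinement-signal}. It need not. The direction bound is $d_\pm(\widehat\nu_k,\nu_k)\le\rho_{n,k}=C_\nu\epsilon_n/\kappa_k$, with no extra $\sqrt D$. Under this bound, \Cref{ass:local-refinement-signal} is essentially equivalent to $\rho_{n,k}\sqrt{D+1}\to0$, at no specified rate. Meanwhile $\ell_n=\lfloor(\log n)^{(1+1/\gamma)/2}\rfloor\to\infty$. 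So any bound carrying a positive power of $\ell_n$ would need a strictly stronger signal condition. Bounding the Bartlett double sum termwise, $\sum_{\abs{s-t}\le\ell_n}\abs{\delta_s\delta_t}\le(2\ell_n+1)\sum_t\delta_t^2$, produces exactly such a power.

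A net and union-bound argument over the $\rho_{n,k}$-ball does not repair this. Executed in the natural way, it controls the window tensors' operator norms by terms of size $\sqrt{(\ell_n+1)(D+\log n)}+(D+\log n)^{p/2+1/\gamma}$. After normalization this leaves $\rho_{n,k}\sqrt{D+\log n}$ and $\rho_{n,k}(D+\log n)^{p/2+1/\gamma}/\sqrt{\ell_n}$. The assumption does not force these to zero: for bounded $D$ it only gives $\Delta_{\min}\kappa_{\min}^2\gg\log n$, whereas $\rho_{n,k}\sqrt{\log n}\to0$ needs $\Delta_{\min}\kappa_{\min}^2\gg\log^2 n$.

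\textbf{The missing idea} is to use the positive-semidefinite window structure of the Bartlett estimator, so that no $\ell_n$ factor appears at all.
\begin{itemize}
\item Write $\widehat\omega_{k,L}^2=(N/\abs{\widehat I})\,H_N\bigl([z(\widehat\nu_k)-\bar z_{\widehat I}(\widehat\nu_k)]\mathbf 1_{\widehat I}\bigr)$, where $H_N(x)=\{N(\ell_n+1)\}^{-1}\sum_r\bigl(\sum_{t=r-\ell_n}^{r}x_t\bigr)^2$. Then $H_N^{1/2}$ is a seminorm, and the direction error enters only through $H_N$ of the perturbation vector.
\item Each window sum of the perturbation is a \emph{centered} sum of $\ell_n+1$ terms. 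Expand it into the terms $\binom pj\ip{A_{r,j}}{h^{\otimes j}}_{\F}$, apply Cauchy--Schwarz in Frobenius norm, and use coordinatewise fourth-moment maximal bounds under mixing (\Cref{lem:local-direction-maximal}). This bounds each window sum by $\sqrt{\ell_n+1}\sum_{j}(\rho_{n,k}\sqrt{D+1})^j$ in $L^4$, uniformly over $\norm h_2\le\rho_{n,k}$.
\item The normalization $(\ell_n+1)^{-1}$ cancels the window length. Hence $\E\sup_h H_N(\cdot)\lesssim\{\sum_j(\rho_{n,k}\sqrt{D+1})^j\}^2\to0$. No union bound is needed, because expectations of suprema are simply summed over windows.
\item The seminorm inequality $\abs{H_N(x)-H_N(y)}\le\{2H_N(y)^{1/2}+H_N(x-y)^{1/2}\}H_N(x-y)^{1/2}$ then transfers this bound to the estimator.
\end{itemize}
This is where the $\sqrt D$ in \Cref{ass:local-refinement-signal} is actually used.

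\textbf{A smaller omission.} The fitting interval $\widehat I=(a_{k-1},b_{k-1}]$ is itself data-dependent. So oracle Bartlett consistency cannot simply be quoted for a fixed window. It must hold uniformly over intervals $I$ inside the true segment with $\abs I\ge 7\Delta_{k-1}/16$. The paper obtains this from maximal bounds on the window sums, together with control of the at most $2\ell_n$ windows that cross the endpoints of $I$.
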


\section{Simulation and real data example}

In this section, we study the numerical properties of our method and compare
it with state-of-the-art methods targeting mean, covariance, and general
distributional changes. In particular, we compare our method with CPWZ
\citep{CuiPanWangZou2025}, changeAUC
\citep{KanrarJiangCai2025}, a mean-change binary segmentation method, and MNSBS
\citep{madrid2023change} across various
simulated settings and a real-data example.
CPWZ jointly detects changes in the mean
and covariance by combining statistics for the two types of changes.
ChangeAUC detects general distributional changes by training a classifier and
using its area under the receiver operating characteristic curve as the change
statistic. The mean-change method applies binary segmentation to multivariate
mean CUSUM statistics. MNSBS detects general multivariate distributional changes
under
temporal dependence through nonparametric density estimation and seeded binary
segmentation.

\subsection{Tuning parameter selection}\label{sec:tuning-parameters}

For the competing methods, we use their default or paper-recommended tuning
parameters and their associated data-driven selection procedures.

For our method, we set $\gamma=1$ in
\Cref{alg:preliminary-seeded-detection} and jointly select the moment order
$p\in\{2,3\}$ and the threshold parameter
$
\tau\in 
 \{1,1.25,1.5,1.75,2,2.25,2.5,3,4,6,8,12,16\}.
$ For each candidate pair
$(p,\tau)$, we split the time series into its odd- and even-indexed
subsequences, denoted by $X_t^{\mathrm{odd}}=X_{2t-1}$ and
$X_t^{\mathrm{even}}=X_{2t}$. We apply
\Cref{alg:preliminary-seeded-detection}, with the threshold in
\eqref{eq:preliminary-threshold}, to the odd-indexed time series and
subsequently apply the local refinement procedure in
\Cref{sec:change-point-local-refinement}. The resulting change-point estimates
produce the estimated segments
$\widehat{\mathcal I}_1,\ldots,\widehat{\mathcal I}_{\widehat K}$. To select
the candidate pair in a data-driven manner, we evaluate these segments on the
even-indexed observations using an empirical kernel score
\citep{SteinwartZiegel2021}. Let $\mathcal H$ be the reproducing kernel Hilbert
space associated with a bounded Gaussian kernel, and let $\phi$ be its induced
feature map. The score is
\begin{align}
\mathcal S(p,\tau)
&=
\sum_{k=1}^{\widehat K}
\sum_{t\in\widehat{\mathcal I}_k}
\left\|
\phi\left(X_t^{\mathrm{even}}\right)
-\frac{1}{|\widehat{\mathcal I}_k|}
\sum_{s\in\widehat{\mathcal I}_k}
\phi\left(X_s^{\mathrm{odd}}\right)
\right\|_{\mathcal H}^2.
\nonumber
\end{align}
We select the tuning parameters as
$(\widehat p,\widehat\tau)=\arg\min_{p,\tau}\mathcal S(p,\tau)$ and then apply
\Cref{alg:preliminary-seeded-detection} to the complete time series using the
selected parameters.

\subsection{Hausdorff distance}
\label{app:hausdorff-distance}

For nonempty change-point sets $A,B\subseteq\{1,\ldots,n-1\}$, define
\begin{align*}
d_H(A,B)
&=\max\left\{
\max_{a\in A}\min_{b\in B}|a-b|,
\max_{b\in B}\min_{a\in A}|b-a|
\right\}.
\end{align*}
We apply this distance to the true set $A=\{\eta_1,\ldots,\eta_K\}$ and
estimated set $B=\{\widehat\eta_1,\ldots,\widehat\eta_{\widehat K}\}$.
It compares both sets without requiring $\widehat K=K$ and is reported in
observation-index units, without division by $n$.
For empty outputs, the simulation implementation uses the finite convention
\begin{align*}
d_H(A,\varnothing)=d_H(\varnothing,A)
&=\max_{a\in A}\min\{a,n-a\},\qquad A\ne\varnothing,\\
d_H(\varnothing,\varnothing)&=0.
\end{align*}

\subsection{Simulation studies}

We use $D=100$ and 100 independent repetitions per setting. With independent
standard normal initial values $Z_{1j}$ and innovations $\varepsilon_{tj}$,
generate stationary AR(1) sequences
\begin{align*}
Z_{tj}=\rho Z_{t-1,j}+\sqrt{1-\rho^2}\,\varepsilon_{tj},
\qquad t=2,\ldots,n,\quad j=1,\ldots,D.
\end{align*}
\noindent\textbf{Case 1: a single change point.}
Settings R1--R3 have $n=1600$ and a single change at $\eta=880$.

\noindent\textbf{R1 (variance and skewness changes).} With $\rho=0.6$, set
$X_{tj}=Z_{tj}+0.25(Z_{tj}^2-1)$ for $t>\eta$ and $j\leq10$, and
$X_{tj}=Z_{tj}$ otherwise. The mean is unchanged.

\noindent\textbf{R2 (skewness changes).} With $\rho=0.4$, let $W$ be an independent
copy of $Z$. For $j\leq8$, define
\begin{align*}
B_{tj}=\mathbf 1\{W_{tj}<\Phi^{-1}(0.2)\},\qquad
Y_{tj}=\frac{B_{tj}-0.2}{\sqrt{0.2(1-0.2)}}+0.3Z_{tj},
\end{align*}
and set $X_{tj}=Y_{tj}$ for $t\leq\eta$ and $X_{tj}=-Y_{tj}$ for $t>\eta$.
For $j>8$, set $X_{tj}=Z_{tj}$. Reversing sign changes skewness but preserves
the mean vector and covariance matrix.

\noindent\textbf{R3 (sparse mean and variance changes).} With $\rho=0.4$, set
$X_{t1}=1.75Z_{t1}+0.3$ for $t>\eta$; all other observations satisfy
$X_{tj}=Z_{tj}$. Only one coordinate changes.

\medskip
\noindent\textbf{Case 2: multiple change points.}
Settings R4--R5 have $n=1200$ and two changes. We omit CPWZ, which targets
a single change point.

\noindent\textbf{R4 (unequal jumps).} Set $\rho=0.2$ and
$(\eta_1,\eta_2)=(400,800)$. For $j\leq10$, let
\begin{align*}
X_{tj}=
\begin{cases}
Z_{tj}, & t\leq\eta_1,\\
0.65+1.3Z_{tj}, & \eta_1<t\leq\eta_2,\\
1.55+1.8Z_{tj}, & t>\eta_2.
\end{cases}
\end{align*}
For $j>10$, set $X_{tj}=Z_{tj}$. Both mean and variance increase, with unequal
jump sizes at equally spaced locations.

\noindent\textbf{R5 (unequal spacings).} With $\rho=0.1$ and
$(\eta_1,\eta_2)=(450,950)$, let $H$ be an independent copy of $Z$.
For $j\leq30$, define
\begin{align*}
U_{tj}=\frac{2\cdot\mathbf 1\{H_{tj}<0\}-1+0.8Z_{tj}}{\sqrt{1.64}},
\qquad
X_{tj}=
\begin{cases}
0.35+2U_{tj}, & \eta_1<t\leq\eta_2,\\
U_{tj}, & \text{otherwise}.
\end{cases}
\end{align*}
For $j>30$, set $X_{tj}=Z_{tj}$. The Gaussian mixture undergoes a temporary
mean and scale shift, yielding equal jump sizes and unequal segment lengths.

\begin{table}[tbp]
\centering
\setlength{\abovecaptionskip}{\baselineskip}
\setlength{\belowcaptionskip}{\baselineskip}
\caption{Mean \hyperref[app:hausdorff-distance]{Hausdorff distance} (SD) and percentages
$[\widehat K<K,\widehat K=K,\widehat K>K]$ over 100 repetitions.
Tensor reports refined estimates; lowest means are bold. Dashes: not applicable.}
\label{tab:simulation-results}
\setlength{\tabcolsep}{2pt}
\renewcommand{\arraystretch}{1}
\newcommand{\simresult}[3]{\begin{tabular}[c]{@{}c@{}}#1 (#2)\\$[#3]$\end{tabular}}
\begin{tabular}{@{}lccccc@{}}
\toprule
\textbf{Setting} & \textbf{Moment Tensor} & \textbf{CPWZ}
& \textbf{changeAUC} & \textbf{Mean-change} & \textbf{MNSBS} \\
\midrule
\textbf{R1}
& \simresult{\textbf{227.81}}{180.77}{0,100,0}
& \simresult{304.43}{151.49}{0,100,0}
& \simresult{611.18}{152.48}{61,39,0}
& \simresult{510.53}{236.89}{0,100,0}
& \simresult{391.14}{219.32}{0,100,0} \\
\addlinespace[1pt]
\textbf{R2}
& \simresult{\textbf{7.15}}{8.27}{0,100,0}
& \simresult{343.62}{152.84}{0,100,0}
& \simresult{665.37}{114.75}{79,21,0}
& \simresult{549.07}{228.91}{0,100,0}
& \simresult{416.62}{231.35}{2,98,0} \\
\addlinespace[1pt]
\textbf{R3}
& \simresult{\textbf{10.95}}{19.33}{0,100,0}
& \simresult{49.95}{79.17}{0,100,0}
& \simresult{585.57}{181.78}{58,42,0}
& \simresult{356.99}{268.00}{0,100,0}
& \simresult{410.56}{228.78}{1,99,0} \\
\addlinespace[1pt]
\textbf{R4}
& \simresult{\textbf{1.82}}{2.25}{0,100,0}
& ---
& \simresult{17.09}{24.09}{0,99,1}
& \simresult{12.19}{56.40}{0,97,3}
& \simresult{295.60}{103.20}{5,11,84} \\
\addlinespace[1pt]
\textbf{R5}
& \simresult{\textbf{1.00}}{1.53}{0,100,0}
& ---
& \simresult{23.69}{33.24}{0,94,6}
& \simresult{57.53}{97.82}{3,65,32}
& \simresult{134.26}{154.40}{0,50,50} \\
\bottomrule
\end{tabular}
\end{table}

\Cref{tab:simulation-results} shows that Tensor recovers the correct number
of changes in all 100 repetitions of each setting, and its refined estimates
attain the lowest mean \hyperref[app:hausdorff-distance]{Hausdorff distance}. In R2 and R3, the mean distances
are 7.15 and 10.95, compared with 343.62 and 49.95 for CPWZ, the closest
competitor. R1 remains harder to localize, as
the moment signal is weaker relative to its variability: on each affected
coordinate, the variance and third-moment jumps are 0.125 and 1.625, whereas
the third-moment jump in R2 is 3. These differences make R1 harder to
localize even though Tensor detects the change in every repetition.
For the two-change settings R4 and R5, Tensor attains mean distances of 1.82
and 1.00; competing methods have larger localization errors and less reliable
estimates of the number of changes.
The simulation-calibrated 95\% intervals have empirical coverage of 95\% in
R4 and 96\% in R5 at each change point, conditional on $\widehat K=K=2$
(\Cref{tab:multiple-change-coverage}).

\begin{table}[tbp]
\centering
\caption{Empirical coverage of simulation-calibrated 95\% Tensor intervals,
conditional on $\widehat K=K=2$. Entries give covered/eligible repetitions.}
\label{tab:multiple-change-coverage}
\small
\renewcommand{\arraystretch}{1.25}
\begin{tabular}{@{}lcc@{}}
\toprule
\textbf{Setting} & \textbf{First change point}
& \textbf{Second change point} \\
\midrule
\textbf{R4} & 95/100 & 95/100 \\
\textbf{R5} & 96/100 & 96/100 \\
\bottomrule
\end{tabular}
\end{table}
\Cref{fig:r4-tensor-ci,fig:r5-tensor-ci} show the intervals from the first ten
repetitions.

\begin{figure}[H]
\centering
\includegraphics[width=0.75\textwidth]{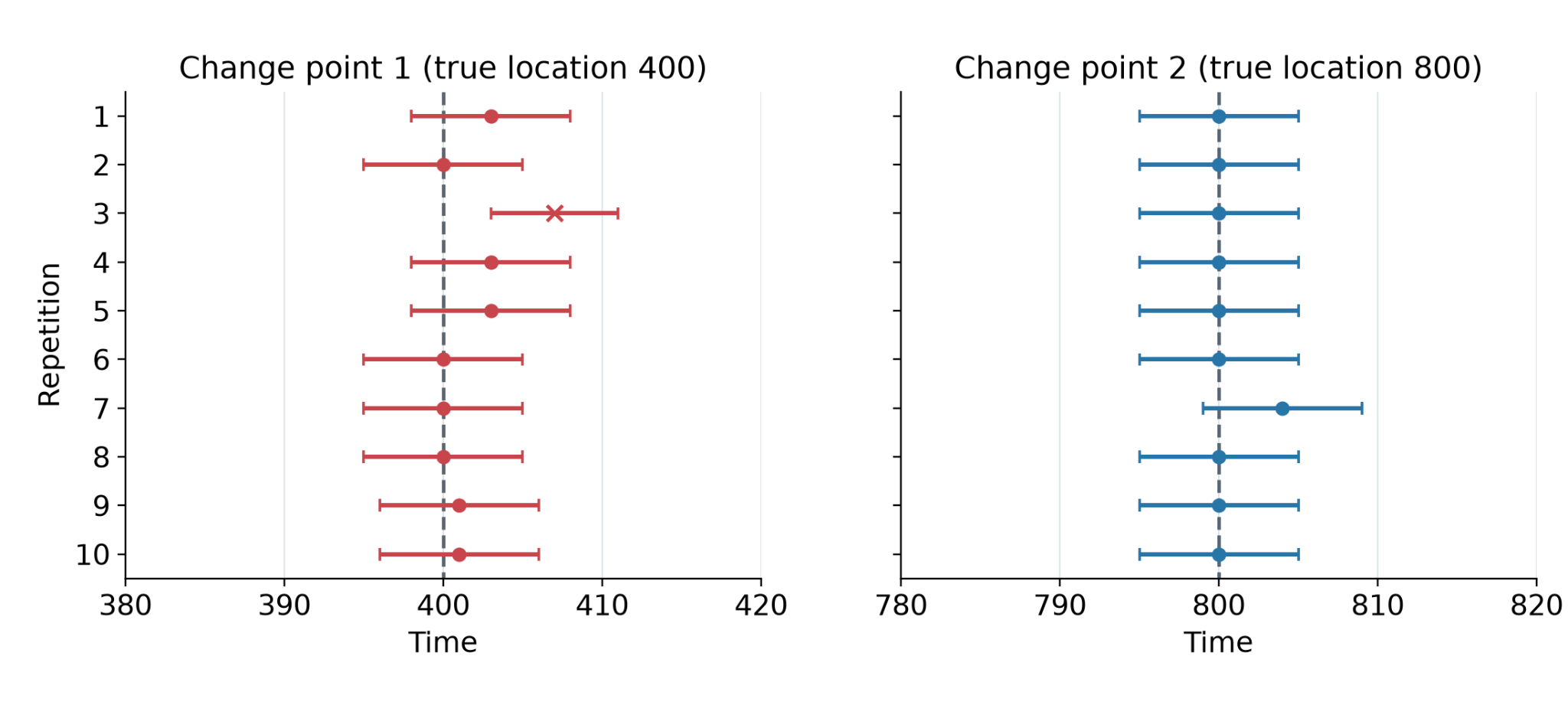}
\caption{Simulation-calibrated 95\% confidence intervals from the Tensor
method for the first ten repetitions in R4.  }
\label{fig:r4-tensor-ci}
\end{figure}

\begin{figure}[H]
\centering
\includegraphics[width=0.75\textwidth]{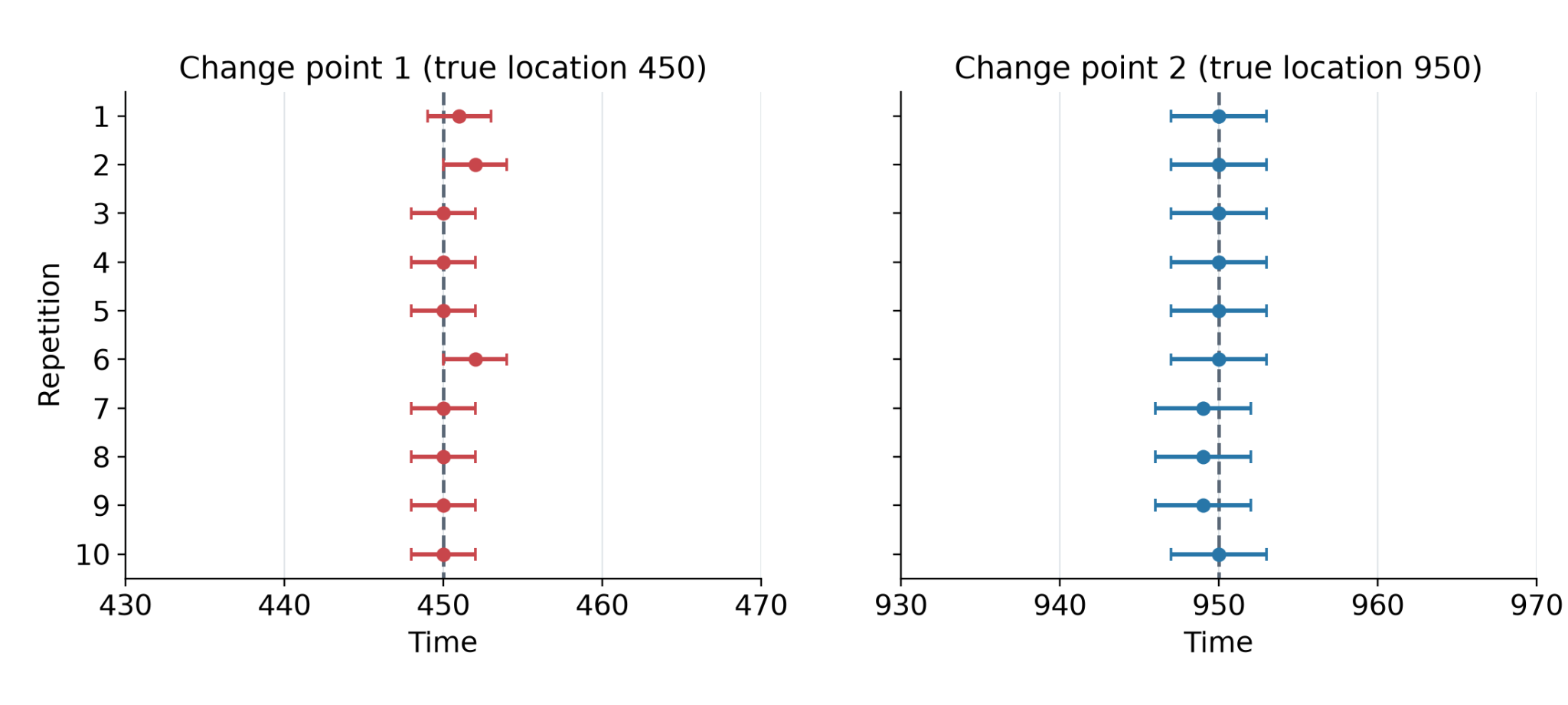}
\caption{Simulation-calibrated 95\% confidence intervals from the Tensor
method for the first ten repetitions in R5.  }
\label{fig:r5-tensor-ci}
\end{figure}

\subsection{Real data analysis}
\label{sec:nasdaq-real-data}

We analyze two panels of stocks that remained in the Nasdaq-100 during their
respective study periods: February 1, 2007 through December 31, 2010, and
January 2, 2018 through December 30,
2022.
The respective panels contain $n=988$ observations for $D=58$ stocks and
$n=1258$ observations for $D=59$ stocks. For stock $j$, we calculate
$r_{tj}=\log(P_{tj}/P_{t-1,j})$, where $P_{tj}$ is its closing
price on retained trading day $t$ \citep{QuantiacsFinancialData}. We analyze
$X_{tj}=(r_{tj}-\bar r_j)/s_j$, where $\bar r_j$ and $s_j$ are the mean and
standard deviation of stock $j$'s returns over that period. All methods use
the same standardized matrix $X\in\mathbb R^{n\times D}$ within each period.

We follow the joint $(p,\tau)$ tuning framework of \Cref{sec:tuning-parameters},
using the candidate grid for $p$ and $\tau$ and the
implementation's moment-kernel validation score. It selects $(p,\tau)=(2,16)$ in both periods. We also report the
preliminary and refined locations obtained with $p=3$ for comparison.
\Cref{tab:nasdaq-real-data} lists the Tensor locations, and
\Cref{fig:nasdaq-real-data} displays the daily cross-sectional RMS
$R_t=(D^{-1}\sum_{j=1}^D X_{tj}^2)^{1/2}$ and its average over the current and
up to 20 preceding trading days. Peaks of the smoothed curve in November
2008 and April 2020 indicate sustained elevated return magnitudes; a peak
may follow the initial shock and need not coincide with an estimated
change-point date. For $p=2$, refinement moves the earlier estimate from
September 25 to September
15, 2008, the date of the Lehman Brothers bankruptcy filing, a pivotal event
in the escalation of the financial crisis \citep{USTreasuryAboutTARP}. In the later period, the refined estimate is
February 21, 2020, in line with the onset of the market reversal associated
with the international spread of COVID-19 \citep{NasdaqFebruary2020}.

\begin{table}[tbp]
\centering
\caption{Preliminary and refined Tensor change-point locations for moment
orders $p=2$ and $p=3$ in the two Nasdaq-100 stock panels. Each configuration
returns one estimated change in each period.}
\label{tab:nasdaq-real-data}
\setlength{\tabcolsep}{5pt}
\begin{tabular}{@{}lcccc@{}}
\toprule
& \multicolumn{2}{c}{$p=2$} & \multicolumn{2}{c}{$p=3$} \\
\cmidrule(lr){2-3}\cmidrule(l){4-5}
Period & Preliminary & Refined & Preliminary & Refined \\
\midrule
2007--2010 & Sep. 25, 2008 & Sep. 15, 2008
& Oct. 13, 2008 & Oct. 10, 2008 \\
2018--2022 & Jan. 28, 2020 & Feb. 21, 2020
& Jan. 28, 2020 & Feb. 21, 2020 \\
\bottomrule
\end{tabular}
\end{table}

\begin{figure}[H]
\centering
\includegraphics[width=0.78\textwidth]{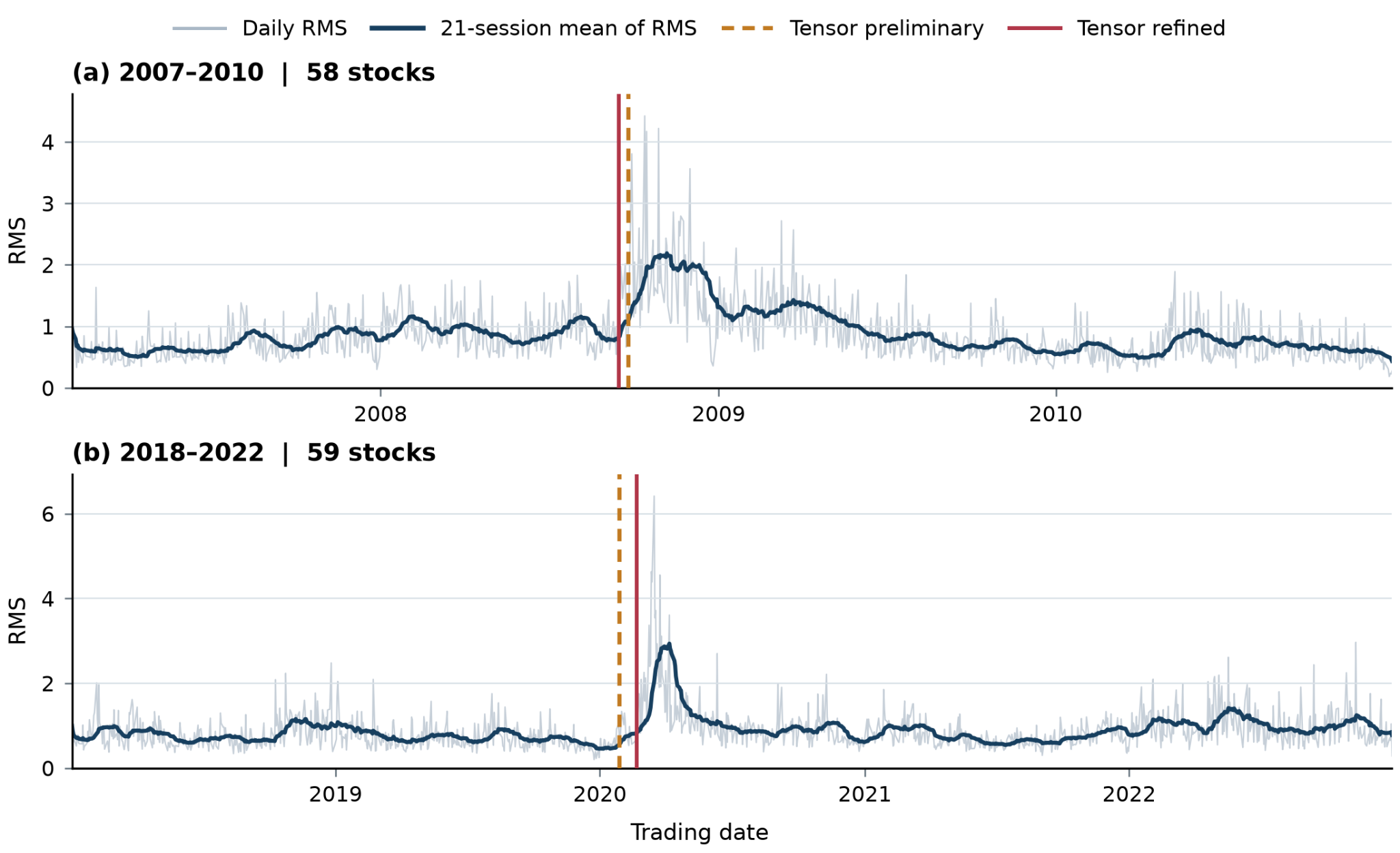}
\caption{Daily cross-sectional RMS $R_t$ (light curves) and its trailing
21-session average (dark curves) for the two Nasdaq-100 stock panels. Dashed
and solid vertical lines mark the $p=2$ preliminary and refined dates,
respectively.}
\label{fig:nasdaq-real-data}
\end{figure}

MNSBS estimates changes on July 17, 2007; August 29, 2008; August 18, 2009;
February 26, April 27, and December 22, 2010, in the first panel, and on
January 23, 2020 and September 27, 2021 in the second. In a recorded
two-change changeAUC run for each panel, its respective dates are November 13,
2008 and April 21, 2009; and February 4, 2020 and December 30, 2021.
Other recorded changeAUC runs return up to three dates in the later panel.
These alternative estimates support the broader finding of changing market
behavior during the 2008 crisis and the early COVID-19 disruption, although
their exact locations differ from the Tensor estimates.

\bibliographystyle{plainnat}
\bibliography{references}

\clearpage
\appendix

\section[Proofs for preliminary seeded detection]{Proofs for \Cref{sec:preliminary-seeded-detection}}

Recall the uniform CUSUM event from \eqref{eq:uniform-cusum-event}:
\begin{align*}
\mathcal E_{\mathrm{cusum}}
&=\left\{
\sup_{\substack{I=(s,e]\\1\leq h\leq\abs I/2}}
\max_{s+h\leq q\leq e-h}
\frac{\norm{\mathcal T_I(q)-\E\mathcal T_I(q)}_{\op}}
{\sqrt{D+\log n}
+(D+\log n)^{p/2+1/\gamma}/\sqrt h}
\leq C_2\sigma^p
\right\}.
\end{align*}
Here $I$ ranges over all nonempty index intervals in $(0,n]$.
The constant $C_2$ is from \Cref{thm:cusum}, which gives
$\Pp(\mathcal E_{\mathrm{cusum}})\geq1-n^{-5}$.

Use $\tau=c_\tau\sigma^p$ as in \Cref{thm:preliminary-localization}.
For $I\in\mathcal I$, denote the threshold in
\eqref{eq:preliminary-threshold} by
\begin{align*}
\iota_I&=\tau\left\{\sqrt{D+\log n}
+\frac{(D+\log n)^{p/2+1/\gamma}}{\sqrt{|I|}}\right\}.
\end{align*}
Set $C_0=2C_2$.
Throughout this appendix, take $c_\tau>C_0$ and $C_{\mathrm I}$
sufficiently large depending on $c_\tau$, as required by
\Cref{thm:preliminary-localization}.

\begin{lemma}[Detection on an isolating seeded interval]
\label{lem:detection-isolating-seeded-interval}
Suppose \Cref{ass:data-conditions,ass:preliminary-signal} hold for fixed
$p\geq2$. Take $\tau=c_\tau\sigma^p$ for sufficiently large $c_\tau$.
Let $I_k^*=(a_k^*,b_k^*]$ be an
isolating interval defined in \Cref{lem:isolating-seeded-interval}.
Then, on the event
\(\mathcal E_{\mathrm{cusum}}\) defined in
\eqref{eq:uniform-cusum-event}, \(G_{I_k^*}>\iota_{I_k^*}\),
where $G_I$ is the maximum of $\norm{\mathcal T_I(t)}_{\op}$ over the
central half of $I$, as defined in \eqref{eq:preliminary-interval-statistic}.
\end{lemma}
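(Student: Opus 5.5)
The plan is to lower-bound $G_{I_k^*}$ by evaluating the CUSUM at the single split $t=\eta_k$. We then separate signal from noise by the triangle inequality and compare both pieces with the threshold $\iota_{I_k^*}$. From \Cref{lem:isolating-seeded-interval} I will use the following properties of the isolating interval $I_k^*=(a_k^*,b_k^*]$:
\begin{enumerate}[label=(\roman*)]
\item $I_k^*$ contains exactly one change point, $\eta_k$;
\item $\eta_k$ lies in the central half of $I_k^*$, so that $\min\{\eta_k-a_k^*,\,b_k^*-\eta_k\}\geq |I_k^*|/4$;
\item $|I_k^*|\geq c\,\Delta_{\min}$ for an absolute constant $c>0$.
\end{enumerate}
I expect the seeded grid at scale $\ell\asymp\Delta_{\min}/4$, with step $\ell/2$, to deliver (i)--(iii), possibly with slightly different constants. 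Since $t=\eta_k$ is an admissible split in \eqref{eq:preliminary-interval-statistic}, we have $G_{I_k^*}\geq\norm{\mathcal T_{I_k^*}(\eta_k)}_{\op}$.

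\emph{Signal.} Because the population moment tensor is constant on $(a_k^*,\eta_k]$ and on $(\eta_k,b_k^*]$, a direct computation gives
\begin{align*}
\E\mathcal T_{I_k^*}(\eta_k)=\sqrt{\frac{(\eta_k-a_k^*)(b_k^*-\eta_k)}{b_k^*-a_k^*}}\,\Theta_k .
\end{align*}
By (ii), the prefactor is at least $\sqrt{\tfrac14\cdot\tfrac34\,|I_k^*|}\geq\tfrac{2}{5}\sqrt{|I_k^*|}$. Hence $\norm{\E\mathcal T_{I_k^*}(\eta_k)}_{\op}\geq\tfrac25\sqrt{|I_k^*|}\,\kappa_k$.

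\emph{Noise.} On $\mathcal E_{\mathrm{cusum}}$, apply the uniform bound with $h=\lceil |I_k^*|/4\rceil$; this is admissible by (ii). It gives
\begin{align*}
\norm{\mathcal T_{I_k^*}(\eta_k)-\E\mathcal T_{I_k^*}(\eta_k)}_{\op}\leq 2C_2\sigma^p\left\{\sqrt{D+\log n}+\frac{(D+\log n)^{p/2+1/\gamma}}{\sqrt{|I_k^*|}}\right\}.
\end{align*}
The right-hand side equals $(C_0/c_\tau)\,\iota_{I_k^*}<\iota_{I_k^*}$.

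It therefore suffices to show $\tfrac25\sqrt{|I_k^*|}\,\kappa_k\geq 2\iota_{I_k^*}$. I will verify the two threshold terms separately.
\begin{itemize}
\item \emph{First term.} By (iii) and the first inequality of \Cref{ass:preliminary-signal},
\[
\sqrt{|I_k^*|}\,\kappa_k\geq\sqrt{c\,\Delta_{\min}}\,\kappa_{\min}\geq\sqrt{c\,C_{\mathrm I}}\,\sigma^p\sqrt{D+\log n}.
\]
This dominates $4c_\tau\sigma^p\sqrt{D+\log n}$ once $C_{\mathrm I}$ is large relative to $c_\tau^2/c$.
\item \emph{Second term.} Write $\sqrt{|I_k^*|}\,\kappa_k=|I_k^*|\kappa_k/\sqrt{|I_k^*|}$. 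Then (iii) and the second inequality of \Cref{ass:preliminary-signal} give
\[
\sqrt{|I_k^*|}\,\kappa_k\geq c\sqrt{C_{\mathrm I}}\,\sigma^p(D+\log n)^{p/2+1/\gamma}/\sqrt{|I_k^*|}.
\]
Again this dominates $4c_\tau\sigma^p(D+\log n)^{p/2+1/\gamma}/\sqrt{|I_k^*|}$ for $C_{\mathrm I}$ large.
\end{itemize}
Combining the two bounds, on $\mathcal E_{\mathrm{cusum}}$ we get
\[
G_{I_k^*}\geq\tfrac25\sqrt{|I_k^*|}\,\kappa_k-(C_0/c_\tau)\,\iota_{I_k^*}>\iota_{I_k^*}.
\]

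The main obstacle is the bookkeeping around the isolating interval. We need $\eta_k$ to sit in the central half with margin at least a fixed fraction of $|I_k^*|$, so that the prefactor and the choice of $h$ in $\mathcal E_{\mathrm{cusum}}$ are both of order $|I_k^*|$. At the same time, $|I_k^*|$ must remain a constant fraction of $\Delta_{\min}$. If \Cref{lem:isolating-seeded-interval} gives these only with different constants, the argument is unchanged except that the constants $2/5$ and $1/4$ are replaced. The requirement that $C_{\mathrm I}$ be large relative to $c_\tau$ absorbs the change.
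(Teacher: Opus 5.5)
Your proof is correct and follows the same route as the paper. You evaluate the CUSUM at the split $\eta_k$, lower-bound the population signal through the quarter-margins of the isolating interval, and bound the noise on $\mathcal E_{\mathrm{cusum}}$ with $h=|I_k^*|/4$ by $(C_0/c_\tau)\iota_{I_k^*}$. The threshold is then absorbed via \Cref{ass:preliminary-signal} with $C_{\mathrm I}$ large; the paper converts everything to $\sqrt{\Delta_{\min}}\kappa_{\min}$ using $|I_k^*|>\Delta_{\min}/32$ instead of keeping $|I_k^*|$.

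One bookkeeping slip: to get $\tfrac25\sqrt{|I_k^*|}\kappa_k\geq2\iota_{I_k^*}$, each term must dominate $10c_\tau\sigma^p(\cdot)$, not $4c_\tau\sigma^p(\cdot)$. This is harmlessly absorbed by taking $C_{\mathrm I}$ larger.
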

\begin{proof}[Proof of \Cref{lem:detection-isolating-seeded-interval}]
Choose $c_\tau>C_0=2C_2$, with $C_2$ from \Cref{thm:cusum}, and then
$C_{\mathrm I}$ sufficiently large depending on $c_\tau$.
Because \(I_k^*\) contains exactly one change, the CUSUM definition gives
\begin{align*}
\E\mathcal T_{I_k^*}(\eta_k)
&=
\sqrt{\frac{(\eta_k-a_k^*)(b_k^*-\eta_k)}{\abs{I_k^*}}}\,\Theta_k.
\end{align*}
By \eqref{eq:isolating-seeded-margins}, both
$\eta_k-a_k^*$ and $b_k^*-\eta_k$ are at least $\abs{I_k^*}/4$.
Taking operator norms and using $\norm{\Theta_k}_{\op}=\kappa_k$ gives
\begin{align}
\norm{\E\mathcal T_{I_k^*}(\eta_k)}_{\op}
&\geq
\frac14\sqrt{\abs{I_k^*}}\,\kappa_k
>
\frac{\sqrt{\Delta_{\min}}}{16\sqrt2}\,
\kappa_k.
\label{eq:isolating-population-signal}
\end{align}
The last inequality follows from
$\abs{I_k^*}>\Delta_{\min}/32$ in \eqref{eq:isolating-seeded-length}.
On $\mathcal E_{\mathrm{cusum}}$, \eqref{eq:uniform-cusum-event} gives
\begin{align*}
\max_{a_k^*+\abs{I_k^*}/4\leq q\leq b_k^*-\abs{I_k^*}/4}
\norm{\mathcal T_{I_k^*}(q)-\E\mathcal T_{I_k^*}(q)}_{\op}
&\leq
C_0\sigma^p
\left\{
\sqrt{D+\log n}
+\frac{(D+\log n)^{p/2+1/\gamma}}{\sqrt{\abs{I_k^*}}}
\right\}.
\end{align*}
\Cref{ass:preliminary-signal} therefore yields
\begin{align*}
&(c_\tau+C_0)\sigma^p
\left\{
\sqrt{D+\log n}
+\frac{(D+\log n)^{p/2+1/\gamma}}{\sqrt{\abs{I_k^*}}}
\right\}\\
&\qquad\leq
\left(c_\tau+C_0\right)
\left(1+4\sqrt2\right)C_{\mathrm I}^{-1/2}
\sqrt{\Delta_{\min}}\,\kappa_{\min}\\
&\qquad\leq
\left(c_\tau+C_0\right)
\left(1+4\sqrt2\right)C_{\mathrm I}^{-1/2}
\sqrt{\Delta_{\min}}\,\kappa_k
<
\frac{\sqrt{\Delta_{\min}}}{32\sqrt2}\,
\kappa_k.
\end{align*}
By \eqref{eq:isolating-seeded-margins},
\begin{align*}
a_k^*+\frac{\abs{I_k^*}}4
\leq\eta_k\leq b_k^*-\frac{\abs{I_k^*}}4,
\end{align*}
so $\eta_k$ is among the candidate splits defining $G_{I_k^*}$ in
\eqref{eq:preliminary-interval-statistic}. The definition of
$\iota_{I_k^*}$ and the preceding bound give
\begin{align*}
\left(1+\frac{C_0}{c_\tau}\right)\iota_{I_k^*}
&<\frac{\sqrt{\Delta_{\min}}}{32\sqrt2}\,\kappa_k
<\norm{\E\mathcal T_{I_k^*}(\eta_k)}_{\op},
\end{align*}
where the second inequality follows from
\eqref{eq:isolating-population-signal}.
On $\mathcal E_{\mathrm{cusum}}$, the reverse triangle inequality and
the deviation bound therefore yield
\begin{align*}
G_{I_k^*}
&\geq\norm{\mathcal T_{I_k^*}(\eta_k)}_{\op}\\
&\geq\norm{\E\mathcal T_{I_k^*}(\eta_k)}_{\op}
-\norm{\mathcal T_{I_k^*}(\eta_k)
-\E\mathcal T_{I_k^*}(\eta_k)}_{\op}\\
&\geq\norm{\E\mathcal T_{I_k^*}(\eta_k)}_{\op}
-\frac{C_0}{c_\tau}\iota_{I_k^*}\\
&>\left(1+\frac{C_0}{c_\tau}\right)\iota_{I_k^*}
-\frac{C_0}{c_\tau}\iota_{I_k^*}
=\iota_{I_k^*}.
\end{align*}
\end{proof}

\begin{proof}[Proof of \Cref{thm:preliminary-localization}]
\textbf{Step 1: uniform control and significant isolating intervals.}
By \Cref{thm:cusum},
$\Pp(\mathcal E_{\mathrm{unif}})\geq1-n^{-5}$ and
$\mathcal E_{\mathrm{unif}}\subseteq\mathcal E_{\mathrm{cusum}}$.
Work on $\mathcal E_{\mathrm{unif}}$. Applying
\eqref{eq:uniform-cusum-event} with $h=|I|/4$ gives, simultaneously over
$I=(s,e]\in\mathcal I$,
\begin{align*}
\max_{s+|I|/4\leq q\leq e-|I|/4}
\norm{\mathcal T_I(q)-\E\mathcal T_I(q)}_{\op}
&\leq\frac{C_0}{c_\tau}\iota_I<\iota_I.
\end{align*}
If $I$ contains no true change in its interior, its population CUSUM
vanishes, so $G_I<\iota_I$. Thus every significant interval contains
an interior change.

The augmented projection bound used in \Cref{lem:tensor-product-tail}
and H\"older's inequality give
\begin{align*}
\sup_{1\leq t\leq n}\norm{\mathcal M_t^{(p)}}_{\op}
&\leq(2p)^{p/2}\sigma^p,
&\kappa_{\min}&\leq2(2p)^{p/2}\sigma^p.
\end{align*}
Consequently, \Cref{ass:preliminary-signal} implies
\begin{align}
\Delta_{\min}
&\geq\frac{C_{\mathrm I}}{4(2p)^p}(D+\log n)
\geq256
\label{eq:stage-one-min-spacing}
\end{align}
for sufficiently large $C_{\mathrm I}$.
\Cref{lem:isolating-seeded-interval,lem:detection-isolating-seeded-interval}
therefore give fixed intervals $I_k^*\in\mathcal I$ satisfying
\begin{align}
G_{I_k^*}&>\iota_{I_k^*},\qquad k=1,\ldots,K.
\label{eq:all-isolating-intervals-significant}
\end{align}

\noindent\textbf{Step 2: selection and pruning.}
After $r$ selections, let $\mathcal I_{\mathrm{act}}^{(r)}$ and
$\widetilde{\mathcal E}^{(r)}$ be the active intervals and recorded
points. Initially,
\begin{align*}
\mathcal I_{\mathrm{act}}^{(0)}
&=\{I\in\mathcal I:G_I>\iota_I\},
&\widetilde{\mathcal E}^{(0)}&=\varnothing.
\end{align*}
The active family only shrinks, so every active interval remains
significant. We prove by induction that there is a set
$\mathcal K_r\subseteq\{1,\ldots,K\}$ of matched changes such that:
\begin{enumerate}[label=(\roman*)]
\item the $r$ distinct recorded points can be labeled as
$\widetilde{\mathcal E}^{(r)}=\{\check\eta_k:k\in\mathcal K_r\}$,
where $|\mathcal K_r|=r$ and
$|\check\eta_k-\eta_k|\leq3\Delta_{\min}/64$;
\item no active interval contains a matched change in its interior;
\item $I_k^*\in\mathcal I_{\mathrm{act}}^{(r)}$ for every
$k\notin\mathcal K_r$.
\end{enumerate}
These statements hold at $r=0$ by
\eqref{eq:all-isolating-intervals-significant}.

Suppose they hold for $r<K$. By (iii), the active family is nonempty.
Let $J=(s,e]$ be the selected shortest interval. Step~1 gives an
interior change $\eta_k$ in $J$, and (ii) implies $k\notin\mathcal K_r$.
Since $I_k^*$ is active by (iii),
\begin{align*}
|J|&\leq|I_k^*|\leq\Delta_{\min}/16.
\end{align*}
Thus $J$ contains only $\eta_k$. Every point in its central half,
including the recorded maximizer $\check\eta_k$, satisfies
\begin{align*}
|\check\eta_k-\eta_k|
&\leq3|J|/4\leq3\Delta_{\min}/64.
\end{align*}
The selected intervals are pairwise disjoint, because each selection
removes all overlapping intervals. Their recorded points are therefore
distinct, proving (i) for the new selection.

Pruning removes every active interval containing $\eta_k$ in its
interior, so (ii) is preserved. For any remaining unmatched change
$\eta_\ell$, an intersection $I_\ell^*\cap J\ne\varnothing$ would imply
\begin{align*}
\Delta_{\min}
&\leq|\eta_\ell-\eta_k|
\leq|I_\ell^*|+|J|
\leq\Delta_{\min}/8,
\end{align*}
a contradiction. Hence every such $I_\ell^*$ remains active, proving
(iii). Taking $\mathcal K_{r+1}=\mathcal K_r\cup\{k\}$ completes the
induction.

\noindent\textbf{Step 3: termination and ordering.}
Property (iii) prevents termination before $K$ selections. After $K$
selections, (ii) implies that any remaining active interval has no
interior change, contradicting Step~1. Thus the algorithm returns
exactly $K$ distinct points. Their true order follows from
\begin{align*}
\check\eta_{k+1}-\check\eta_k
&\geq\Delta_k-3\Delta_{\min}/32
\geq29\Delta_k/32>0,
\qquad k=1,\ldots,K-1.
\end{align*}
Sorting therefore gives $\widetilde\eta_k=\check\eta_k$, proving
\eqref{eq:preliminary-localization-bound} on $\mathcal E_{\mathrm{unif}}$.
Its probability bound from Step~1 completes the proof.
\end{proof}

\section{Proofs for Section~\ref{sec:change-point-local-refinement}}
\label{app:refinement-tools}

\subsection{Leading jump direction estimation}
\label{app:refinement-estimation}

Let $\mathcal E_{\mathrm{pre}}$ denote the preliminary-localization event
in \eqref{eq:preliminary-localization-bound}, established by
Theorem~\ref{thm:preliminary-localization} with probability at least
$1-n^{-5}$. Set
$\widetilde\eta_0=\eta_0=0$ and
$\widetilde\eta_{K+1}=\eta_{K+1}=n$ on this event.
The fitting endpoints $a_j,b_j$ and refinement endpoints $s_k,e_k$
are those of Section~\ref{sec:change-point-local-refinement}.

\begin{lemma}[Fitting intervals and refinement windows]
\label{lem:coarse-local-geometry}
Suppose $\mathcal E_{\mathrm{pre}}$ holds and $\Delta_{\min}\geq256$.
For $j=0,\ldots,K$, write
$\widetilde\Delta_j=\widetilde\eta_{j+1}-\widetilde\eta_j$. Then
\begin{align}
\frac{29}{32}\Delta_j
\leq\widetilde\Delta_j\leq\frac{35}{32}\Delta_j,
\label{eq:estimated-gap-bounds}
\end{align}
\begin{align}
(a_j,b_j] \subseteq(\eta_j,\eta_{j+1}] \quad \text{and} \quad
b_j-a_j \geq\frac7{16}\Delta_j.
\label{eq:refinement-fitting-length}
\end{align}
For $k=1,\ldots,K$, we have 
$\eta_{k-1}\leq s_k<\eta_k<e_k\leq\eta_{k+1}$ and
\begin{align}
\eta_k-s_k \geq\frac45\Delta_{k-1} \quad \text{and} \quad
e_k-\eta_k \geq\frac45\Delta_k.
\label{eq:refinement-window-margin}
\end{align}
Thus each refinement window contains exactly one true change point.
\end{lemma}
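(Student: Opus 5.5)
The argument is deterministic: the statement is pure interval arithmetic on the event $\mathcal E_{\mathrm{pre}}$. Write $\epsilon=3\Delta_{\min}/64$. By \eqref{eq:preliminary-localization-bound} and the conventions $\widetilde\eta_0=\eta_0$, $\widetilde\eta_{K+1}=\eta_{K+1}$, we have $\abs{\widetilde\eta_j-\eta_j}\leq\epsilon$ for all $j=0,\ldots,K+1$. Since $\epsilon\leq 3\Delta_j/64$ for every $j$, each bound reduces to a linear inequality in $\Delta_j$. The floor and ceiling operations cost at most $1$ each, and these unit losses are absorbed by a small multiple of $\Delta_j$ using $\Delta_j\geq\Delta_{\min}\geq256$.

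\emph{Gap bounds.} We have $\abs{\widetilde\Delta_j-\Delta_j}\leq 2\epsilon\leq 6\Delta_j/64=3\Delta_j/32$, which gives \eqref{eq:estimated-gap-bounds} at once.

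\emph{Fitting intervals.} Write $a_j\geq\widetilde\eta_j+\widetilde\Delta_j/4-1$. Combining the gap bound with $\widetilde\eta_j\geq\eta_j-\epsilon$ gives
\begin{align*}
a_j&\geq\eta_j+\Bigl(\frac{29}{128}-\frac{6}{128}\Bigr)\Delta_j-1\geq\eta_j,
\end{align*}
because $23\Delta_j/128\geq 1$. Symmetrically, $b_j\leq\widetilde\eta_{j+1}-\widetilde\Delta_j/4\leq\eta_{j+1}+\epsilon-29\Delta_j/128<\eta_{j+1}$, so $(a_j,b_j]\subseteq(\eta_j,\eta_{j+1}]$. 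For the length, use $\lceil x\rceil-1\leq x$ and $\lfloor y\rfloor\geq y-1$. These give $b_j-a_j\geq\widetilde\Delta_j/2-1\geq 29\Delta_j/64-1$, and this is at least $28\Delta_j/64=7\Delta_j/16$ since $\Delta_j\geq64$.

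\emph{Refinement windows.} Write $s_k=\lfloor\widetilde\eta_{k-1}+\widetilde\Delta_{k-1}/10\rfloor$ and $e_k=\lfloor\widetilde\eta_{k+1}-\widetilde\Delta_k/10\rfloor$.

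The margin step is the only delicate one. Chaining the gap bound \eqref{eq:estimated-gap-bounds} naively gives only about $0.77\Delta_{k-1}$. So I will instead bound the two pieces separately: $\eta_k-\widetilde\eta_{k-1}\geq\Delta_{k-1}-\epsilon$ and $\widetilde\eta_k-\widetilde\eta_{k-1}\leq\Delta_{k-1}+2\epsilon$. These yield
\begin{align*}
\eta_k-s_k&\geq\frac{9}{10}\Delta_{k-1}-\frac{12}{10}\epsilon\geq\Bigl(\frac{9}{10}-\frac{9}{160}\Bigr)\Delta_{k-1}>\frac45\Delta_{k-1}.
\end{align*}
The same argument gives $e_k-\eta_k\geq(9/10-9/160)\Delta_k-1\geq\frac45\Delta_k$, where the floor's unit loss is absorbed because $\Delta_k\geq256$. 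In particular $s_k<\eta_k<e_k$.

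For the outer inequalities, $s_k\geq\widetilde\eta_{k-1}+\widetilde\Delta_{k-1}/10-1\geq\eta_{k-1}+(29/320-15/320)\Delta_{k-1}-1\geq\eta_{k-1}$, since $14\cdot256/320>1$. Likewise $e_k\leq\eta_{k+1}+\epsilon-29\Delta_k/320\leq\eta_{k+1}$.

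Together these show that $(s_k,e_k]$ contains $\eta_k$. It contains no other change point, because $s_k\geq\eta_{k-1}$ excludes $\eta_{k-1}$ from the left-open window and $e_k\leq\eta_{k+1}$ excludes everything to the right of $\eta_{k+1}$. The one point needing attention is $\eta_{k+1}$ itself in the equality case $e_k=\eta_{k+1}$. There I will use the strict version $e_k\leq\eta_{k+1}+\epsilon-29\Delta_k/320<\eta_{k+1}$, which holds because $29/320>15/320$; hence $\eta_{k+1}\notin(s_k,e_k]$.
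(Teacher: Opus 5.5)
Your argument is correct and is essentially the paper's proof: both are deterministic interval arithmetic on $\mathcal E_{\mathrm{pre}}$, with the unit losses from floors and ceilings absorbed using $\Delta_{\min}\geq256$. The only difference is bookkeeping. The paper compares each weighted endpoint, such as $(9\widetilde\eta_{k-1}+\widetilde\eta_k)/10$, directly with its population counterpart $\eta_{k-1}+\Delta_{k-1}/10$; by convexity the error is at most $3\Delta_{\min}/64$. This avoids the lossy chaining you flagged and gives slightly sharper constants, such as $273/320$ in place of your $270/320$, but your separate bounds on $\eta_k-\widetilde\eta_{k-1}$ and $\widetilde\Delta_{k-1}$ already suffice.
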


\begin{proof}[Proof of \Cref{lem:coarse-local-geometry}]
The preliminary errors at either endpoint of the $j$th segment satisfy
$\abs{\widetilde\eta_j-\eta_j},\allowbreak
\abs{\widetilde\eta_{j+1}-\eta_{j+1}}\leq3\Delta_j/64$. Hence
$\abs{\widetilde\Delta_j-\Delta_j}\leq3\Delta_j/32$, proving
\eqref{eq:estimated-gap-bounds}.

Recall the fitting endpoints in \eqref{eq:central-fitting-interval} are
\begin{align*}
a_j =\left\lceil\frac{3\widetilde\eta_j+\widetilde\eta_{j+1}}4\right\rceil-1 \quad \text{and} \quad b_j =\left\lfloor\frac{\widetilde\eta_j+3\widetilde\eta_{j+1}}4\right\rfloor.
\end{align*}
On $\mathcal E_{\mathrm{pre}}$, both endpoint errors are at most
$3\Delta_{\min}/64\leq3\Delta_j/64$. The triangle inequality gives
\begin{align*}
\left|\frac{3\widetilde\eta_j+\widetilde\eta_{j+1}}4
-\frac{3\eta_j+\eta_{j+1}}4\right|
&\leq\frac34|\widetilde\eta_j-\eta_j|
+\frac14|\widetilde\eta_{j+1}-\eta_{j+1}|
\leq\frac{3\Delta_j}{64},\\
\left|\frac{\widetilde\eta_j+3\widetilde\eta_{j+1}}4
-\frac{\eta_j+3\eta_{j+1}}4\right|
&\leq\frac14|\widetilde\eta_j-\eta_j|
+\frac34|\widetilde\eta_{j+1}-\eta_{j+1}|
\leq\frac{3\Delta_j}{64}.
\end{align*}
Since $\lceil x\rceil\geq x$ and
$\lfloor x\rfloor\leq x$, the first and last fitted indices satisfy
\begin{align*}
a_j+1-\eta_j
&\geq\frac{3\widetilde\eta_j+\widetilde\eta_{j+1}}4-\eta_j
\geq\frac{\Delta_j}4-\frac{3\Delta_j}{64}
=\frac{13\Delta_j}{64}>0,\\
\eta_{j+1}-b_j
&\geq\eta_{j+1}-\frac{\widetilde\eta_j+3\widetilde\eta_{j+1}}4
\geq\frac{13\Delta_j}{64}>0.
\end{align*}
Thus $(a_j,b_j]\subseteq(\eta_j,\eta_{j+1}]$.
The inequalities $\lceil x\rceil-1\leq x$ and
$\lfloor x\rfloor\geq x-1$ also give
\begin{align*}
b_j-a_j
&\geq\frac{\widetilde\Delta_j}2-1
\geq\frac{29\Delta_j}{64}-1
=\frac7{16}\Delta_j+\frac{\Delta_j}{64}-1
\geq\frac7{16}\Delta_j,
\end{align*}
where the last inequality uses $\Delta_j\geq256$. This proves
\eqref{eq:refinement-fitting-length}.

Recall the refinement endpoints, including integer rounding, are
\begin{align*}
s_k&=\left\lfloor\frac{9\widetilde\eta_{k-1}+\widetilde\eta_k}{10}\right\rfloor,
&e_k&=\left\lfloor\frac{\widetilde\eta_k+9\widetilde\eta_{k+1}}{10}\right\rfloor.
\end{align*}
The corresponding true-endpoint averages are
$\eta_{k-1}+\Delta_{k-1}/10$ and $\eta_k+9\Delta_k/10$.
Their preliminary versions differ by at most $3\Delta_{\min}/64$.
Using $x-1\leq\lfloor x\rfloor\leq x$ and
$\Delta_{\min}\leq\Delta_{k-1},\Delta_k$ therefore gives
\begin{align*}
\eta_k-s_k&\geq\frac{273}{320}\Delta_{k-1},
&s_k-\eta_{k-1}&\geq\frac{17}{320}\Delta_{k-1}-1,\\
e_k-\eta_k&\geq\frac{273}{320}\Delta_k-1,
&\eta_{k+1}-e_k&\geq\frac{17}{320}\Delta_k.
\end{align*}
For $\Delta\geq256$, we have
$17\Delta/320-1>0$ and $273\Delta/320-1\geq4\Delta/5$.
Applying these inequalities with $\Delta=\Delta_{k-1}$ and
$\Delta=\Delta_k$ proves both margins in
\eqref{eq:refinement-window-margin}. The outer distances also satisfy
$s_k-\eta_{k-1}>0$ and $\eta_{k+1}-e_k>0$. Consequently,
$\eta_{k-1}<s_k<\eta_k<e_k<\eta_{k+1}$, so the window contains
$\eta_k$ and no other true change point.
\end{proof}

For a sufficiently large fixed constant $C$, define
\begin{align}
\epsilon_n
&=C\sigma^p\left\{
\sqrt{\frac{D+\log n}{\Delta_{\min}}}
+\frac{(D+\log n)^{p/2+1/\gamma}}{\Delta_{\min}}
\right\}.
\label{eq:stage-ii-common-radius}
\end{align}
For each change $k$, on $\mathcal E_{\mathrm{pre}}$, use the adjacent
population and fitted tensors
\begin{align*}
\mathcal M_L^{(p)}&=\mathcal M_{\eta_k}^{(p)},
&\mathcal M_R^{(p)}&=\mathcal M_{\eta_k+1}^{(p)},\\
\widehat{\mathcal M}_L^{(p)}
&=\overline{\mathcal M}_{(a_{k-1},b_{k-1}]}^{(p)},
&\widehat{\mathcal M}_R^{(p)}
&=\overline{\mathcal M}_{(a_k,b_k]}^{(p)},
\end{align*}
where $k$ is suppressed in these four symbols.
Let $\mathcal E_{\mathrm{mom},k}$ be the event
\eqref{eq:stage-ii-event} for these tensors and radius $\epsilon_n$, and define
\begin{align}
\mathcal E_{\mathrm{ref},k}
&=\mathcal E_{\mathrm{mom},k}
\cap\{\eta_{k-1}\leq s_k<\eta_k<e_k\leq\eta_{k+1}\}.
\label{eq:refinement-event}
\end{align}
When a single change is fixed, suppress $k$ in this event.

\begin{lemma}[Simultaneous preliminary accuracy]
\label{lem:stage-ii-preliminary-event}
Suppose the conditions of \Cref{thm:preliminary-localization} hold, and
choose $C$ sufficiently large in \eqref{eq:stage-ii-common-radius}.
For all sufficiently large $n$, on $\mathcal E_{\mathrm{unif}}$ from
\eqref{eq:uniform-moment-event},
\begin{align}
\max_{0\leq j\leq K}
\norm{\overline{\mathcal M}_{(a_j,b_j]}^{(p)}
      -\mathcal M_{\eta_j+1}^{(p)}}_{\op}
&\leq\epsilon_n.
\end{align}
Moreover,
\begin{equation}
\begin{aligned}
\mathcal E_{\mathrm{unif}}
\subseteq\mathcal E_{\mathrm{pre}}
\cap\bigcap_{k=1}^K\mathcal E_{\mathrm{ref},k} \quad \text{and} \quad
\Pp(\mathcal E_{\mathrm{unif}}) \geq1-n^{-5}.
\end{aligned}
\label{eq:stage-ii-simultaneous-preliminary-event}
\end{equation}
\end{lemma}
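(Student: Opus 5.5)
The plan is to work entirely on the event $\mathcal E_{\mathrm{unif}}$ from \eqref{eq:uniform-moment-event}. I will take as given that it has probability at least $1-n^{-5}$ and is contained in $\mathcal E_{\mathrm{cusum}}$; this is how it was used in Step~1 of the proof of \Cref{thm:preliminary-localization}. That proof already shows $\mathcal E_{\mathrm{unif}}\subseteq\mathcal E_{\mathrm{pre}}$, so the probability claim and the first inclusion in \eqref{eq:stage-ii-simultaneous-preliminary-event} come for free.

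\textbf{Step 1: geometry.} On $\mathcal E_{\mathrm{pre}}$, the bound \eqref{eq:stage-one-min-spacing} gives $\Delta_{\min}\geq256$. Hence \Cref{lem:coarse-local-geometry} applies and yields two facts:
\begin{itemize}
\item $(a_j,b_j]\subseteq(\eta_j,\eta_{j+1}]$ with $b_j-a_j\geq 7\Delta_j/16\geq 7\Delta_{\min}/16$;
\item $\eta_{k-1}\leq s_k<\eta_k<e_k\leq\eta_{k+1}$ for every $k$.
\end{itemize}
The second fact supplies the window part of each $\mathcal E_{\mathrm{ref},k}$.

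\textbf{Step 2: deviation bound.} Since $(a_j,b_j]$ lies in a homogeneous segment, $\E\,\overline{\mathcal M}^{(p)}_{(a_j,b_j]}=\mathcal M^{(p)}_{\eta_j+1}$. The task reduces to bounding $\norm{\overline{\mathcal M}_I-\E\overline{\mathcal M}_I}_{\op}$ uniformly over random intervals $I$ of length $m\geq 7\Delta_{\min}/16$. The $(a_j,b_j]$ are data-dependent, so the bound must hold simultaneously over all intervals; that is what $\mathcal E_{\mathrm{unif}}$ provides.

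If $\mathcal E_{\mathrm{unif}}$ is stated directly as a uniform moment-deviation event of the form
\[
\norm{\overline{\mathcal M}_I-\E\overline{\mathcal M}_I}_{\op}\lesssim\sigma^p\Bigl\{\sqrt{(D+\log n)/|I|}+(D+\log n)^{p/2+1/\gamma}/|I|\Bigr\},
\]
I plug in $|I|\geq 7\Delta_{\min}/16$ and absorb the factor $16/7$ into $C$.

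Otherwise I derive the bound from $\mathcal E_{\mathrm{cusum}}$, using an interval containing $I$ with a constant-mean part:
\begin{itemize}
\item Since $\Delta_j\ge 2|I|$ fails in general, I instead apply the CUSUM bound to $I'=(a_j,b_j+m]$ or $(a_j-m,b_j]$ split at $q=b_j$ or $a_j$, with $h=m$, whichever stays in the segment.
\item The extra half lies in the segment because the margins $13\Delta_j/64$ from the proof of \Cref{lem:coarse-local-geometry} are not quite $m$. So more simply, I split $I$ itself at its midpoint $q$ with $h=m/2$. This bounds the difference of the two half-means, but not the deviation of the full mean.
\end{itemize}
Because of this, I would rely on the uniform moment-deviation event assumed to be part of $\mathcal E_{\mathrm{unif}}$. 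This is the main obstacle: confirming that \eqref{eq:uniform-moment-event} controls single-interval averages (not only CUSUM contrasts) with the displayed rate, including the $1/|I|$ heavy-tail term for $p$-th powers under mixing.

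\textbf{Step 3: assembling the moment event.} With Step~2, each of $\widehat{\mathcal M}_L^{(p)}$ and $\widehat{\mathcal M}_R^{(p)}$ is within $\epsilon_n$ of its population counterpart. Hence $\norm{\widehat\Theta_k-\Theta_k}_{\op}\leq2\epsilon_n$. This verifies whatever conditions define $\mathcal E_{\mathrm{mom},k}$ in \eqref{eq:stage-ii-event}, taking $C$ larger if that event uses a multiple of $\epsilon_n$.

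If $\mathcal E_{\mathrm{mom},k}$ also requires $\epsilon_n$ to be small relative to $\kappa_k$ or an eigengap, \Cref{ass:local-refinement-signal} gives $\epsilon_n/\kappa_{\min}\to0$, which is why the statement holds only for sufficiently large $n$.

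Intersecting over $k$, together with Step~1, gives $\mathcal E_{\mathrm{unif}}\subseteq\bigcap_k\mathcal E_{\mathrm{ref},k}$.
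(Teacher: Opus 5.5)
Your proposal is correct and follows the paper's proof. On $\mathcal E_{\mathrm{unif}}$ you obtain $\mathcal E_{\mathrm{pre}}$ from the proof of \Cref{thm:preliminary-localization}, take the fitting-interval geometry and window inclusions from \Cref{lem:coarse-local-geometry}, plug $b_j-a_j\geq 7\Delta_{\min}/16$ into the uniform bound, and get the probability from \Cref{thm:cusum}.

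The ``main obstacle'' you flag is not one. \eqref{eq:uniform-moment-event} is exactly the uniform single-interval deviation event at that rate, so the CUSUM detour is unnecessary. Also, $\mathcal E_{\mathrm{mom},k}$ in \eqref{eq:stage-ii-event} imposes no smallness of $\epsilon_n/\kappa_k$, so there is no need to invoke \Cref{ass:local-refinement-signal}, which this lemma's hypotheses do not include anyway.
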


\begin{proof}[Proof of \Cref{lem:stage-ii-preliminary-event}]
The projection bound in the proof of \Cref{lem:tensor-product-tail} gives
$\norm{u^\top Y_t}_{L^p}\leq\sqrt{2p}\,\sigma$ for every $t$ and
$u\in\Sph^D$. H\"older's inequality and the triangle inequality yield
\begin{align*}
\norm{\mathcal M_t^{(p)}}_{\op}
&\leq\sup_{u_1,\ldots,u_p\in\Sph^D}
\prod_{a=1}^p\norm{u_a^\top Y_t}_{L^p}
\leq(2p)^{p/2}\sigma^p,\\
\kappa_{\min}\leq\kappa_k
&\leq\norm{\mathcal M_{\eta_k}^{(p)}}_{\op}
+\norm{\mathcal M_{\eta_k+1}^{(p)}}_{\op}
\leq2(2p)^{p/2}\sigma^p.
\end{align*}
Combining this with \Cref{ass:preliminary-signal} gives
$\Delta_{\min}\geq c(D+\log n)\geq256$ for sufficiently large $n$.
On $\mathcal E_{\mathrm{unif}}$, the proof of
\Cref{thm:preliminary-localization} gives $\mathcal E_{\mathrm{pre}}$.
Hence \Cref{lem:coarse-local-geometry} places each fitting interval
$(a_j,b_j]$ inside $(\eta_j,\eta_{j+1}]$, with
$b_j-a_j\geq7\Delta_{\min}/16$.

The simultaneous interval bound \eqref{eq:uniform-moment-event} therefore
applies to these data-selected intervals and gives, for every
$j=0,\ldots,K$,
\begin{align*}
\norm{\overline{\mathcal M}_{(a_j,b_j]}^{(p)}
      -\mathcal M_{\eta_j+1}^{(p)}}_{\op}
\leq C_1\sigma^p\left\{
\sqrt{\frac{D+\log n}{b_j-a_j}}
+\frac{(D+\log n)^{p/2+1/\gamma}}{b_j-a_j}
\right\}
\leq\epsilon_n,
\end{align*}
where the last inequality uses the length bound and the choice of $C$.
The two adjacent moment bounds give $\mathcal E_{\mathrm{mom},k}$ for
every $k$. The window inclusions in \Cref{lem:coarse-local-geometry}
then give $\mathcal E_{\mathrm{ref},k}$. This proves the event inclusion
in \eqref{eq:stage-ii-simultaneous-preliminary-event}. The probability
bound follows from \Cref{thm:cusum}.
\end{proof}

\begin{lemma}[Direction accuracy for local refinement]
\label{lem:stage-iii-direction-rate}
Suppose \Cref{ass:local-refinement-separation,ass:local-refinement-signal}
hold, and use $\epsilon_n$ defined in  \eqref{eq:stage-ii-common-radius}. For all sufficiently large
$n$, on $\mathcal E_{\mathrm{ref},k}$,
\begin{align}
d_\pm(\widehat{\nu}_k,\nu_k)
\leq C_\nu\frac{\epsilon_n}{\kappa_k} \quad \text{with} \quad C_\nu&=\frac{2p}{c_0}.
\label{eq:stage-iii-direction-rate}
\end{align}
Furthermore,
\begin{align}
\frac{\epsilon_n\sqrt{D+1}}{\kappa_{\min}} \to 0.
\label{eq:refinement-improved-direction-rate}
\end{align}
\end{lemma}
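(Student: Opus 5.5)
The plan is to run a perturbation argument for the best rank-one symmetric approximation, working throughout on the event $\mathcal E_{\mathrm{ref},k}$.

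\textbf{Step 1: deterministic perturbation bound.} On $\mathcal E_{\mathrm{mom},k}$ both fitted adjacent moment tensors are within $\epsilon_n$ of their population counterparts in operator norm. Since $\widehat\Theta_k=\widehat{\mathcal M}_L^{(p)}-\widehat{\mathcal M}_R^{(p)}$ and $\Theta_k=\mathcal M_L^{(p)}-\mathcal M_R^{(p)}$, this gives $\norm{\widehat\Theta_k-\Theta_k}_{\op}\leq2\epsilon_n$. Write $f(u)=\ip{\Theta_k}{u^{\otimes p}}_{\F}$ and $\widehat f(u)=\ip{\widehat\Theta_k}{u^{\otimes p}}_{\F}$. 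Then $\sup_{u\in\Sph^D}\abs{\widehat f(u)-f(u)}\leq2\epsilon_n$.

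Both tensors are symmetric. By Banach's theorem for symmetric tensors, $\sup_u\abs{f(u)}=\norm{\Theta_k}_{\op}=\kappa_k$, so $\abs{f(\nu_k)}=\kappa_k$. The optimality of $\widehat\nu_k$ for $\widehat f$ then yields
\begin{align*}
\abs{f(\widehat\nu_k)}\geq\abs{\widehat f(\widehat\nu_k)}-2\epsilon_n\geq\abs{\widehat f(\nu_k)}-2\epsilon_n\geq\kappa_k-4\epsilon_n .
\end{align*}
So $\widehat\nu_k$ is a $4\epsilon_n$-near maximizer of $\abs f$.

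\textbf{Step 2: converting near-optimality into direction error.} This is where \Cref{ass:local-refinement-separation} enters. I expect it to provide identifiability of $\pm\nu_k$ together with a gap or curvature constant $c_0$ for $\abs f$ around $\pm\nu_k$. A zeroth-order margin argument alone would give only a square-root rate from a quadratic margin, which is too weak. I would therefore use a first-order (stationarity) argument, which is what I expect the constant $C_\nu=2p/c_0$ to reflect.

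The argument goes as follows. Both $\nu_k$ and $\widehat\nu_k$ satisfy the Lagrange conditions
\begin{align*}
\Theta_k(\nu_k,\ldots,\nu_k,\cdot)=f(\nu_k)\,\nu_k,\qquad
\widehat\Theta_k(\widehat\nu_k,\ldots,\widehat\nu_k,\cdot)=\widehat f(\widehat\nu_k)\,\widehat\nu_k .
\end{align*}
The gradient map $u\mapsto p\,\Theta(u,\ldots,u,\cdot)$ changes by at most $p\norm{\widehat\Theta_k-\Theta_k}_{\op}\leq2p\epsilon_n$ when $\Theta_k$ is replaced by $\widehat\Theta_k$. The eigengap condition should say that, on the tangent space at $\nu_k$ and within the basin singled out by Step 1 and the identifiability part of the assumption, the Hessian-type operator $\Theta_k(\nu_k,\ldots,\nu_k,\cdot,\cdot)-f(\nu_k)I$ has smallest singular value at least $c_0\kappa_k$.

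Projecting the perturbed stationarity equation onto $\nu_k^\perp$ then gives $c_0\kappa_k\,d_\pm(\widehat\nu_k,\nu_k)\leq2p\epsilon_n$ up to higher-order terms. Those terms are absorbed because $d_\pm(\widehat\nu_k,\nu_k)$ is already small by Step 1 and $\epsilon_n/\kappa_k\to0$, which is where ``sufficiently large $n$'' is used. The sign is fixed by choosing the representative of $\pm\widehat\nu_k$ closer to $\nu_k$.

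\textbf{Main obstacle.} The delicate point is the localization in Step 2: ruling out that $\widehat\nu_k$ lands near a different local maximizer of $\abs f$. This must come from the identifiability or gap part of \Cref{ass:local-refinement-separation}, combined with the near-optimality $\abs{f(\widehat\nu_k)}\geq\kappa_k-4\epsilon_n$ and $\epsilon_n=o(\kappa_k)$.

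\textbf{Step 3: the rate \eqref{eq:refinement-improved-direction-rate}.} This is direct. Use $\sqrt{D+1}\leq\sqrt{2D}$ and $\kappa_{\min}\leq\kappa_k$. Then
\begin{align*}
\frac{\epsilon_n\sqrt{D+1}}{\kappa_{\min}}
\leq\sqrt2\,C\left\{\sqrt{\frac{\sigma^{2p}D(D+\log n)}{\Delta_{\min}\kappa_{\min}^2}}
+\frac{\sigma^p\sqrt D\,(D+\log n)^{p/2+1/\gamma}}{\Delta_{\min}\kappa_{\min}}\right\},
\end{align*}
and both terms tend to zero by the two conditions in \Cref{ass:local-refinement-signal}. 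In particular $\epsilon_n/\kappa_k\to0$, as needed in Step 2.
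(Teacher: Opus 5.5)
There is a genuine gap in your Step 2, and Step 2 carries the entire content of the lemma. You posit that \Cref{ass:local-refinement-separation} gives a lower bound $c_0\kappa_k$ on the smallest singular value of a tangent-space Hessian. In fact the assumption is a \emph{global quadratic-margin} condition: $\kappa_k-\sup\{\abs{f(u)}:u\in\Sph^D,\ d_\pm(u,\nu_k)\geq\rho\}\geq c_0\kappa_k\rho^2$ for $0<\rho\leq r_0$. The proposal never derives a Hessian bound from this, and it never controls the ``higher-order terms.'' Even granting the bound you posit, an $O(\kappa_k d^2)$ remainder would only give $d_\pm\leq(2p/c_0)(1+o(1))\epsilon_n/\kappa_k$, not the stated $C_\nu$.

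Your route can be repaired. Quadratic growth at the smooth maximizer forces the Riemannian Hessian of $sf$, with $s=\operatorname{sign} f(\nu_k)$, to be at most $-2c_0\kappa_k$ on $\nu_k^\perp$. Integrate the second derivative of $f$ along the geodesic from $\nu_k$ to the sign-aligned $\widehat\nu_k$, bounding third derivatives by $C_p\kappa_k$. Then use that the tangential gradient of $\widehat f$ vanishes at $\widehat\nu_k$ and differs from that of $f$ by at most $p\norm{\widehat\Theta_k-\Theta_k}_{\op}$. This gives $d_\pm\leq(1+o(1))p\epsilon_n/(c_0\kappa_k)$. None of this is in the proposal.

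The ``main obstacle'' you flag is also immediate, because the margin condition is uniform over all $u$ with $d_\pm(u,\nu_k)\geq\rho$. Write $d=d_\pm(\widehat\nu_k,\nu_k)$. Your Step 1 bound $\abs{f(\widehat\nu_k)}\geq\kappa_k-4\epsilon_n$, together with $4\epsilon_n<c_0\kappa_kr_0^2$, forces $d<r_0$ and $c_0\kappa_kd^2\leq4\epsilon_n$.

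The idea you dismissed is exactly the one the paper uses. A zeroth-order comparison does give a linear rate, provided you bound the noise \emph{increment} rather than the noise at each point separately. The paper does this in \Cref{lem:stage-ii-perturbation}, applied through \Cref{thm:stage-ii-direction} with $\varepsilon=\norm{\widehat\Theta_k-\Theta_k}_{\op}\leq2\epsilon_n$. After the localization above, take $\widetilde\nu\in\{\pm\widehat\nu_k\}$ with $\norm{\widetilde\nu-\nu_k}_2=d$. Check that $f$ and $\widehat f$ share a common sign $s$ at both $\nu_k$ and $\widetilde\nu$; this follows from $p\kappa_kd<\kappa_k/2$ and $\varepsilon\leq\kappa_k/4$. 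Optimality then gives $s\widehat f(\widetilde\nu)\geq s\widehat f(\nu_k)$, so the margin at $\rho=d$ and the telescoping Lipschitz bound yield
\begin{align*}
c_0\kappa_kd^2\leq\kappa_k-\abs{f(\widetilde\nu)}
=s\{f(\nu_k)-f(\widetilde\nu)\}
\leq s\ip{\widehat\Theta_k-\Theta_k}{\widetilde\nu^{\otimes p}-\nu_k^{\otimes p}}_{\F}
\leq p\varepsilon d.
\end{align*}
Hence $d\leq2p\epsilon_n/(c_0\kappa_k)$ with exactly the stated constant, and no Lagrange conditions or remainder terms are needed. Your Step 3 coincides with the paper's verification of \eqref{eq:refinement-improved-direction-rate}. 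That verification is also what makes the smallness condition \eqref{eq:stage-ii-smallness} hold for all sufficiently large $n$.
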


\begin{proof}[Proof of \Cref{lem:stage-iii-direction-rate}]
Since $D+1\leq2D$, \Cref{ass:local-refinement-signal} gives
\begin{align*}
\frac{\epsilon_n\sqrt{D+1}}{\kappa_{\min}}
&\leq C\left\{
\sqrt{\frac{\sigma^{2p}D(D+\log n)}
{\Delta_{\min}\kappa_{\min}^2}}
+\frac{\sigma^p\sqrt D\,(D+\log n)^{p/2+1/\gamma}}
{\Delta_{\min}\kappa_{\min}}
\right\}\longrightarrow0.
\end{align*}
Since $\epsilon_n/\kappa_{\min}\to0$ and $p,c_0,r_0$ are fixed,
for all sufficiently large $n$,
\begin{align*}
\max_{1\leq k\leq K}\frac{\epsilon_n}{\kappa_k}
=\frac{\epsilon_n}{\kappa_{\min}}
\leq\min\left\{\frac18,\frac{c_0r_0^2}{8},\frac{c_0}{32p^2}\right\}.
\end{align*}
This verifies \eqref{eq:stage-ii-smallness} in
\Cref{thm:stage-ii-direction} for every change.
Apply that theorem to the adjacent population and fitted tensors defined
above, with maximizing directions $\nu_k$ and $\widehat\nu_k$.
Assumption~\ref{ass:local-refinement-separation} supplies the separation
condition, and $\mathcal E_{\mathrm{ref},k}\subseteq\mathcal E_{\mathrm{mom},k}$
gives \eqref{eq:stage-iii-direction-rate}.
\end{proof}

\subsection{Proof of the refinement guarantee}

\begin{proof}[Proof of \Cref{thm:local-refinement}]
\noindent\textbf{Step 1: Preliminary accuracy and notation.}
Since $D\geq1$, \Cref{ass:local-refinement-signal} gives
\begin{align*}
\frac{\Delta_{\min}\kappa_{\min}^2}
{\sigma^{2p}(D+\log n)}
&\geq\frac{\Delta_{\min}\kappa_{\min}^2}
{\sigma^{2p}D(D+\log n)}\to\infty,\\
\frac{\Delta_{\min}\kappa_{\min}}
{\sigma^p(D+\log n)^{p/2+1/\gamma}}
&\geq\frac{\Delta_{\min}\kappa_{\min}}
{\sigma^p\sqrt D\,(D+\log n)^{p/2+1/\gamma}}
\to\infty.
\end{align*}
Thus \Cref{ass:preliminary-signal} holds for all sufficiently large $n$.
By \Cref{lem:stage-ii-preliminary-event},
\begin{align*}
\Pp(\mathcal E_{\mathrm{unif}}^c)&\leq n^{-5},
&\mathcal E_{\mathrm{unif}}
&\subseteq\{\widetilde K=K\}\cap\bigcap_{k=1}^K\mathcal E_{\mathrm{ref},k},
\end{align*}
where $\mathcal E_{\mathrm{ref},k}$ is defined in \eqref{eq:refinement-event}.
All moment and probability bounds below are unconditional; the deterministic
comparisons are on $\mathcal E_{\mathrm{unif}}$.

Fix $k$ and suppress its index in
$\eta,\widehat\eta,\nu,\widehat{\nu},\Theta,\widehat\Theta,\kappa,s,e,\widehat Q$.
Use the population and fitted tensors
$\mathcal M_\ell^{(p)},\widehat{\mathcal M}_\ell^{(p)}$, $\ell\in\{L,R\}$,
from Section~\ref{app:refinement-estimation}.
On $\mathcal E_{\mathrm{unif}}$, \Cref{lem:stage-iii-direction-rate} gives,
after sign alignment,
\begin{align}
\norm{\widehat\nu-\nu}_2&\leq\rho_{n,k},
\qquad \rho_{n,k}=C_\nu\epsilon_n/\kappa_k.
\end{align}
By \eqref{eq:refinement-improved-direction-rate}, take $n$ sufficiently
large that $\rho_{n,k}\sqrt{D+1}\leq1$ and
$\epsilon_n\leq\kappa_k/64$ for every $k$.
For integers $r\geq0$, define the centered forward and backward sums
\begin{align*}
S_r^+&=\sum_{t=\eta+1}^{\eta+r}
\{Y_t^{\otimes p}-\E Y_t^{\otimes p}\},
&S_r^-&=\sum_{t=\eta-r+1}^{\eta}
\{Y_t^{\otimes p}-\E Y_t^{\otimes p}\}.
\end{align*}
Here $S_0^\pm=0$, and centered summands outside $\{1,\ldots,n\}$
are interpreted as zero. For $M\geq1$, define
$r_M=\lceil M\sigma^{2p}/\kappa^2\rceil$.

\medskip
\noindent\textbf{Step 2: Localization error implies a boundary crossing.}
On $\mathcal E_{\mathrm{unif}}$, every $\mathcal E_{\mathrm{ref},k}$ holds.
For $\ell\in\{L,R\}$, set
\begin{align*}
\mu_\ell&=\ip{\mathcal M_\ell^{(p)}}{\widehat{\nu}^{\otimes p}}_{\F},
&\widehat\mu_\ell
&=\ip{\widehat{\mathcal M}_\ell^{(p)}}{\widehat{\nu}^{\otimes p}}_{\F}.
\end{align*}
The moment-error bounds imply
$\abs{\widehat\mu_\ell-\mu_\ell}\leq\epsilon_n$ and
\begin{align*}
\norm{\widehat\Theta-\Theta}_{\op}
&\leq\norm{\widehat{\mathcal M}_L^{(p)}-\mathcal M_L^{(p)}}_{\op}
+\norm{\widehat{\mathcal M}_R^{(p)}-\mathcal M_R^{(p)}}_{\op}
\leq2\epsilon_n.
\end{align*}
Exact maximization at $\widehat\nu$ and $\nu$ gives
\begin{align*}
\abs{\mu_L-\mu_R}
&=\abs{\ip{\Theta}{\widehat\nu^{\otimes p}}_{\F}}\\
&\geq\abs{\ip{\widehat\Theta}{\widehat\nu^{\otimes p}}_{\F}}
-2\epsilon_n\\
&\geq\abs{\ip{\widehat\Theta}{\nu^{\otimes p}}_{\F}}
-2\epsilon_n\\
&\geq\abs{\ip{\Theta}{\nu^{\otimes p}}_{\F}}
-4\epsilon_n=\kappa-4\epsilon_n.
\end{align*}
The triangle inequalities then give
\begin{align*}
\abs{\widehat\mu_L-\mu_R}
&\geq\abs{\mu_L-\mu_R}-\abs{\widehat\mu_L-\mu_L}
\geq\kappa-5\epsilon_n,\\
\abs{\widehat\mu_L-\widehat\mu_R}
&\leq\norm{\widehat\Theta}_{\op}
\leq\norm{\Theta}_{\op}+\norm{\widehat\Theta-\Theta}_{\op}
\leq\kappa+2\epsilon_n.
\end{align*}
Recall the refinement criterion from
Section~\ref{sec:change-point-local-refinement}. With
$Z_i=(\widehat\nu^\top Y_i)^p$,
\begin{align*}
\widehat Q(t)
&=\sum_{i=s+1}^{t}(Z_i-\widehat\mu_L)^2
+\sum_{i=t+1}^{e}(Z_i-\widehat\mu_R)^2,
\qquad s<t\leq e-1.
\end{align*}
We detail the case $\widehat\eta-\eta\geq0$.
For an integer $r\geq1$ with $s<\eta+r\leq e-1$,
$\E Y_i^{\otimes p}=\mathcal M_R^{(p)}$ for $\eta<i\leq\eta+r$, so
\begin{align*}
\widehat Q(\eta+r)-\widehat Q(\eta)
&=\sum_{i=\eta+1}^{\eta+r}
\left\{(Z_i-\widehat\mu_L)^2-(Z_i-\widehat\mu_R)^2\right\}\\
&=r\left\{(\widehat\mu_L-\mu_R)^2-(\widehat\mu_R-\mu_R)^2\right\}
-2(\widehat\mu_L-\widehat\mu_R)
\sum_{i=\eta+1}^{\eta+r}(Z_i-\mu_R)\\
&=r\left\{(\widehat\mu_L-\mu_R)^2-(\widehat\mu_R-\mu_R)^2\right\}
-2(\widehat\mu_L-\widehat\mu_R)
\ip{S_r^+}{\widehat\nu^{\otimes p}}_{\F}.
\end{align*}
Suppose $\kappa^2(\widehat\eta-\eta)/\sigma^{2p}\geq M$ and set
$r=\widehat\eta-\eta$. Then $r\geq r_M\geq1$, since $r$ is an integer.
The minimizing property of $\widehat\eta$ and the preceding bounds imply
\begin{align*}
0&\geq\widehat Q(\widehat\eta)-\widehat Q(\eta)\\
&\geq r\{(\kappa-5\epsilon_n)^2-\epsilon_n^2\}
-2(\kappa+2\epsilon_n)
\abs{\ip{S_r^+}{\widehat\nu^{\otimes p}}_{\F}}.
\end{align*}
Consequently,
\begin{align*}
\abs{\ip{S_r^+}{\widehat\nu^{\otimes p}}_{\F}}
&\geq\frac{r\{(\kappa-5\epsilon_n)^2-\epsilon_n^2\}}
{2(\kappa+2\epsilon_n)}
>\frac{r\kappa}{4},
\end{align*}
where the last inequality uses $\epsilon_n\leq\kappa/64$.
For $\widehat\eta<\eta$, the same argument uses $r=\eta-\widehat\eta$,
interchanges $L$ and $R$, and replaces $S_r^+$ by $S_r^-$.
Taking $h=\widehat\nu-\nu$, with $\norm h_2\leq\rho_{n,k}$ on
$\mathcal E_{\mathrm{unif}}$, gives the event inclusion
\begin{align*}
&\mathcal E_{\mathrm{unif}}\cap
\left\{\frac{\kappa^2}{\sigma^{2p}}\abs{\widehat\eta-\eta}\geq M\right\}\\
&\quad\subseteq
\left\{\exists\,r\geq r_M:
\sup_{\norm h_2\leq\rho_{n,k}}
\abs{\ip{S_r^+}{(\nu+h)^{\otimes p}}_{\F}}>\frac{r\kappa}{4}\right\}\\
&\qquad\cup
\left\{\exists\,r\geq r_M:
\sup_{\norm h_2\leq\rho_{n,k}}
\abs{\ip{S_r^-}{(\nu+h)^{\otimes p}}_{\F}}>\frac{r\kappa}{4}\right\}.
\end{align*}

\medskip
\noindent\textbf{Step 3: Probability of a boundary crossing.}
For deterministic $\nu,\rho_{n,k}$, the triangle and Minkowski inequalities
and \eqref{eq:improved-fixed-maximal}--\eqref{eq:improved-local-maximal}
give, for either sign and every $m\geq1$,
\begin{align*}
\left\|\sup_{\norm h_2\leq\rho_{n,k}}\max_{1\leq r\leq m}
\abs{\ip{S_r^\pm}{(\nu+h)^{\otimes p}}_{\F}}\right\|_{L^4}
&\leq\left\|\max_{1\leq r\leq m}
\abs{\ip{S_r^\pm}{\nu^{\otimes p}}_{\F}}\right\|_{L^4}\\
&\quad+\left\|\sup_{\norm h_2\leq\rho_{n,k}}\max_{1\leq r\leq m}
\abs{\ip{S_r^\pm}{(\nu+h)^{\otimes p}-\nu^{\otimes p}}_{\F}}
\right\|_{L^4}\\
&\leq C\sigma^p\sqrt m
\left[1+\sum_{j=1}^p\{\rho_{n,k}\sqrt{D+1}\}^{j}\right]\\
&\leq C(1+p)\sigma^p\sqrt m.
\end{align*}
The last inequality uses $\rho_{n,k}\sqrt{D+1}\leq1$.
Since $p$ is fixed, taking fourth powers gives
\begin{align}
\E\left[
\sup_{\norm h_2\leq\rho_{n,k}}\max_{1\leq r\leq m}
\abs{\ip{S_r^\pm}{(\nu+h)^{\otimes p}}_{\F}}^4\right]
&\leq Cm^2\sigma^{4p}.
\label{eq:improved-full-projection-maximal}
\end{align}
Applying Markov's inequality to the fourth power in
\eqref{eq:improved-full-projection-maximal}, with $m=2^{q+1}r_M$, gives
for every integer $q\geq0$,
\begin{align*}
\Pp\left(
\sup_{\norm h_2\leq\rho_{n,k}}\max_{1\leq r\leq2^{q+1}r_M}
\abs{\ip{S_r^\pm}{(\nu+h)^{\otimes p}}_{\F}}
>\tfrac14 2^q r_M\kappa\right)
&\leq\frac{C\sigma^{4p}(2^{q+1}r_M)^2}{(2^q r_M\kappa/4)^4}\\
&\leq\frac{C\sigma^{4p}}{2^{2q}r_M^2\kappa^4}\\
&\leq C(2^qM)^{-2},
\end{align*}
where the last inequality uses $r_M\geq M\sigma^{2p}/\kappa^2$.

For an integer $r\geq r_M$ with
$\sup_{\norm h_2\leq\rho_{n,k}}
\abs{\ip{S_r^\pm}{(\nu+h)^{\otimes p}}_{\F}}>r\kappa/4$,
choose $q\geq0$ with $2^q r_M\leq r<2^{q+1}r_M$. Then
\begin{align*}
&\sup_{\norm h_2\leq\rho_{n,k}}\max_{1\leq u\leq2^{q+1}r_M}
\abs{\ip{S_u^\pm}{(\nu+h)^{\otimes p}}_{\F}}\\
&\quad\geq\sup_{\norm h_2\leq\rho_{n,k}}
\abs{\ip{S_r^\pm}{(\nu+h)^{\otimes p}}_{\F}}
>\frac{r\kappa}{4}\geq\frac{2^q r_M\kappa}{4}.
\end{align*}
Taking the union over $q$ and applying the preceding bound gives
\begin{align}
&\Pp\left(\exists\,r\geq r_M:
\sup_{\norm h_2\leq\rho_{n,k}}
\abs{\ip{S_r^\pm}{(\nu+h)^{\otimes p}}_{\F}}
>\tfrac14r\kappa\right)
\nonumber\\
&\quad\leq\sum_{q=0}^\infty\Pp\left(
\sup_{\norm h_2\leq\rho_{n,k}}\max_{1\leq r\leq2^{q+1}r_M}
\abs{\ip{S_r^\pm}{(\nu+h)^{\otimes p}}_{\F}}
>\tfrac14 2^q r_M\kappa\right)
\nonumber\\
&\quad\leq CM^{-2}\sum_{q=0}^\infty4^{-q}\leq CM^{-2}.
\label{eq:improved-linear-boundary}
\end{align}

\medskip
\noindent\textbf{Step 4: Final union bound.}
The two events in the Step~2 inclusion each have probability at most
$CM^{-2}$ by \eqref{eq:improved-linear-boundary}. Their union therefore gives,
for every $k$,
\begin{align*}
\Pp\left(\mathcal E_{\mathrm{unif}}\cap
\left\{\frac{\kappa_k^2}{\sigma^{2p}}\abs{\widehat\eta_k-\eta_k}\geq M\right\}\right)
&\leq2CM^{-2}.
\end{align*}
Absorb the factor $2$ into $C$.
Since $\widetilde K=K$ on $\mathcal E_{\mathrm{unif}}$, a union bound
over $k$ now gives, for all sufficiently large $n$ and every $M\geq1$,
\begin{align}
&\Pp\left(
\{\widetilde K\ne K\}\cup
\left\{\max_{1\leq k\leq K}
\frac{\kappa_k^2}{\sigma^{2p}}\abs{\widehat\eta_k-\eta_k}\geq M\right\}
\right)
\nonumber\\
&\qquad\leq\Pp(\mathcal E_{\mathrm{unif}}^c)
+\sum_{k=1}^K\Pp\left(\mathcal E_{\mathrm{unif}}\cap
\left\{\frac{\kappa_k^2}{\sigma^{2p}}
\abs{\widehat\eta_k-\eta_k}\geq M\right\}\right)
\nonumber\\
&\qquad\leq n^{-5}+CKM^{-2}.
\end{align}
As $K$ is fixed, this proves
\begin{align}
\max_{1\leq k\leq K}
\frac{\kappa_k^2}{\sigma^{2p}}\abs{\widehat\eta_k-\eta_k}
&=O_{\Pp}(1).
\end{align}
\end{proof}

\section[Proof for minimax lower bound]{Proof for \Cref{sec:minimax-lower-bound}}
\label{app:minimax-lower-bound}

\begin{proof}[Proof of \Cref{thm:minimax-localization-lower}]
We first construct two regime laws with identical moments below order $p$
and an order-$p$ jump of size $\kappa$. We then place this jump at two
locations in $[\Delta,n-\Delta]$ and bound the error of any estimator by
Le Cam's two-point method. All constants depend only on $p$.

\medskip
\noindent\textbf{Step 1: Construct the left and right regime laws.}
We take $X=(Z,0,\ldots,0)^\top\in\R^D$ in both regimes and denote
its laws before and after the change point by $F_L$ and $F_R$, respectively.
We specify these laws through two Gaussian mixtures for $Z$.
Set
\begin{align*}
a&=\frac{\sigma}{p+3},
&c_0&=\frac{p!}{2(p+3)^p},
&\theta&=\frac{\kappa}{2p!a^p}.
\end{align*}
Then $0<\kappa\leq c_0\sigma^p$ implies $0<\theta\leq1/4$.
For each component $j=0,\ldots,p$, define
\begin{align*}
\pi_j&=2^{-p}\binom pj,
&s_j&=(-1)^{p-j},\\
w_{L,j}&=\pi_j(1+\theta s_j),
&w_{R,j}&=\pi_j(1-\theta s_j).
\end{align*}
These are probability weights: since $\sum_{j=0}^p\pi_j=1$ and
$\sum_{j=0}^p\pi_js_j=2^{-p}(1-1)^p=0$,
\begin{align*}
\sum_{j=0}^p w_{\ell,j}&=1,
&w_{\ell,j}&\geq\frac34\pi_j>0,
\qquad \ell\in\{L,R\}.
\end{align*}
The scalar distributions defining $F_L$ and $F_R$ are
\begin{align*}
\text{left regime:}\quad Z&\sim
\sum_{j=0}^p w_{L,j}\,N\!\left(a(2j-p),a^2\right),\\
\text{right regime:}\quad Z&\sim
\sum_{j=0}^p w_{R,j}\,N\!\left(a(2j-p),a^2\right).
\end{align*}
Both regimes share the same $p+1$ Gaussian components; their weights differ
by the sign of the perturbation $\theta\pi_js_j$.

Equivalently, for $\ell\in\{L,R\}$, let $J\in\{0,\ldots,p\}$ be the
component index with \mbox{$\Pp(J=j)=w_{\ell,j}$}.
Draw $G\sim N(0,1)$ independently of $J$ and set $Z=a(2J-p+G)$.
The augmented vector is
$Y=(1,X^\top)^\top=(1,Z,0,\ldots,0)^\top\in\R^{D+1}$.

\medskip
\noindent\textbf{Step 2: Verify the tail and moment conditions.}
For either regime, the Gaussian even-moment formula and monotonicity of
$L^q$ norms give, for $q\geq1$ and $m=\lceil q/2\rceil$,
\begin{align*}
\norm G_{L^q}
&\leq\norm G_{L^{2m}}
=\{(2m-1)!!\}^{1/(2m)}
\leq\sqrt{2m}\leq2\sqrt q,
\end{align*}
where $(2m-1)!!\leq(2m)^m$ and $2m\leq q+2\leq4q$.
Since $|2J-p|\leq p$, Minkowski's inequality gives
\begin{align*}
q^{-1/2}\norm Z_{L^q}
&\leq a\left(\frac{p}{\sqrt q}+2\right)
\leq a(p+2)\leq\sigma.
\end{align*}
Thus $\norm Z_{\psi_2}\leq\sigma$, and under both $F_L$ and $F_R$,
\begin{align}
\sup_{v\in\Sph^{D-1}}\norm{v^\top X}_{\psi_2}
&=\sup_{v\in\Sph^{D-1}}|v_1|\norm Z_{\psi_2}
\leq\sigma.
\label{eq:lower-proof-subgaussian}
\end{align}

For the moment comparison, let $\E_L$ and $\E_R$ denote expectations
when $X\sim F_L$ and $X\sim F_R$, respectively; in particular, $Z$ has
the left and right scalar mixture distributions from Step 1.
Independence of $J$ and $G$ and the binomial theorem yield
\begin{align*}
\E_L e^{tZ}-\E_R e^{tZ}
&=e^{a^2t^2/2}\sum_{j=0}^p
(w_{L,j}-w_{R,j})e^{at(2j-p)}\\
&=2\theta e^{a^2t^2/2}\,2^{-p}e^{-apt}(e^{2at}-1)^p\\
&=2\theta e^{a^2t^2/2}\sinh(at)^p\\
&=2\theta a^pt^p+O(t^{p+2})\qquad(t\to0).
\end{align*}
Since finite Gaussian mixtures have analytic moment generating functions,
the coefficient of $t^r$ is $(\E_L Z^r-\E_R Z^r)/r!$.
Comparing coefficients therefore gives
\begin{align*}
\E_L Z^r-\E_R Z^r&=0\quad(0\leq r<p),
&\E_L Z^p-\E_R Z^p&=2p!a^p\theta=\kappa.
\end{align*}
Write $\mathcal M_\ell^{(p)}=\E_\ell Y^{\otimes p}$ for
$\ell\in\{L,R\}$. Since $Y_1=1$, $Y_2=Z$, and $Y_i=0$ for $i>2$,
each potentially nonzero tensor entry equals $\E_\ell Z^r$, where $r$
counts the indices equal to $2$. The entries therefore agree between
regimes except at $(2,\ldots,2)$, where the difference is $\kappa$.
Consequently,
\begin{align}
\Theta&=\mathcal M_L^{(p)}-\mathcal M_R^{(p)}
=\kappa e_2^{\otimes p},
&\norm\Theta_{\op}
&=\kappa\sup_{u_1,\ldots,u_p\in\Sph^D}
\prod_{j=1}^p|(u_j)_2|=\kappa.
\label{eq:lower-proof-jump}
\end{align}
Here $e_2=(0,1,0,\ldots,0)^\top\in\R^{D+1}$ is the second canonical
basis vector, corresponding to the coordinate $Z$ in $Y$.
The same argument shows that all moment tensors of order less than $p$ agree.

The maximizing direction is $\nu=e_2$, up to sign. For $u\in\Sph^D$,
\begin{align}
\kappa-\abs{\ip{\Theta}{u^{\otimes p}}_{\F}}
&=\kappa(1-|u_2|^p)
\geq\kappa(1-|u_2|)
=\frac{\kappa}{2}d_\pm(u,e_2)^2.
\label{eq:lower-proof-separation}
\end{align}
Thus \Cref{ass:local-refinement-separation} holds with curvature constant
$1/2$ and radius $r_0=1$.

\medskip
\noindent\textbf{Step 3: Bound the divergence between the regimes.}
Let $\phi_j$ be the density of $N(a(2j-p),a^2)$, and let
$f_\ell(z)=\sum_{j=0}^p w_{\ell,j}\phi_j(z)$ be the scalar density of
$Z$ in regime $\ell\in\{L,R\}$. For positive numbers $u_j,v_j$, the
log-sum inequality \citep[Theorem~2.7.1]{CoverThomas2006} states that
\begin{align*}
\left(\sum_{j=0}^p u_j\right)
\log\frac{\sum_{j=0}^p u_j}{\sum_{j=0}^p v_j}
&\leq\sum_{j=0}^p u_j\log\frac{u_j}{v_j}.
\end{align*}
It follows from convexity of $x\mapsto x\log x$, applied with weights
$v_j/\sum_{i=0}^p v_i$ to the values $u_j/v_j$.
Apply this inequality at each $z$ with
$u_j=w_{L,j}\phi_j(z)$ and $v_j=w_{R,j}\phi_j(z)$.
The component densities cancel inside the logarithm and integrate to one.
Also, $X=(Z,0,\ldots,0)^\top$ determines $Z$ through its first coordinate,
so embedding $Z$ in $X$ preserves KL divergence. Integration therefore gives
\begin{align*}
\operatorname{KL}(F_L,F_R)
&=\int_{\R}f_L(z)\log\frac{f_L(z)}{f_R(z)}\,dz\\
&\leq\sum_{j=0}^p w_{L,j}\log\frac{w_{L,j}}{w_{R,j}}\\
&=(1+\theta)\log\frac{1+\theta}{1-\theta}\sum_{j:s_j=1}\pi_j
 +(1-\theta)\log\frac{1-\theta}{1+\theta}\sum_{j:s_j=-1}\pi_j\\
&=\frac{1+\theta}{2}\log\frac{1+\theta}{1-\theta}
+\frac{1-\theta}{2}\log\frac{1-\theta}{1+\theta}\\
&=\theta\log\frac{1+\theta}{1-\theta}\leq4\theta^2.
\end{align*}
Here $\sum_{j:s_j=1}\pi_j=\sum_{j:s_j=-1}\pi_j=1/2$, because
$\sum_j\pi_j=1$ and $\sum_j\pi_js_j=0$.
The final inequality follows from
\begin{align*}
\log\frac{1+\theta}{1-\theta}
&=2\int_0^\theta\frac{du}{1-u^2}\leq4\theta.
\end{align*}
Interchanging $L$ and $R$ gives the same bound, so
\begin{align}
\max\{\operatorname{KL}(F_L,F_R),\operatorname{KL}(F_R,F_L)\}
&\leq4\theta^2
=\frac{(p+3)^{2p}}{(p!)^2}\frac{\kappa^2}{\sigma^{2p}}.
\label{eq:lower-proof-kl-scale}
\end{align}

\medskip
\noindent\textbf{Step 4: Construct the two change-point laws.}
The integer $h$ will be the distance, measured in observation indices,
between the two alternative true change-point locations. Set
\begin{align}
\beta_p&=\frac{(p!)^2}{2(p+3)^{2p}}=2c_0^2,
&h&=\left\lfloor\beta_p\frac{\sigma^{2p}}{\kappa^2}\right\rfloor.
\label{eq:lower-proof-shift}
\end{align}
Since $\kappa\leq c_0\sigma^p$ and $\beta_p=2c_0^2$, we have
$\beta_p\sigma^{2p}/\kappa^2\geq2$. Applying
$x/2\leq\lfloor x\rfloor\leq x$ for $x\geq2$ gives
\begin{align*}
\frac{\beta_p}{2}\frac{\sigma^{2p}}{\kappa^2}
&\leq h\leq\beta_p\frac{\sigma^{2p}}{\kappa^2}.
\end{align*}
The first condition in \Cref{ass:local-refinement-signal} gives
\begin{align*}
\frac{\Delta\kappa^2}{\sigma^{2p}}
&\geq\frac{\Delta\kappa^2}{\sigma^{2p}D(D+\log n)}
\longrightarrow\infty.
\end{align*}
Hence $h=o(\Delta)$, so $1\leq h<\Delta$ for all sufficiently large $n$.
Choose two possible true change-point locations,
\begin{align*}
\eta^{(0)}&=\Delta,
&\eta^{(1)}&=\Delta+h.
\end{align*}
For $i\in\{0,1\}$, let $P_{i,n}$ be the joint law of $n$ independent
observations with law $F_L$ up to $\eta^{(i)}$ and $F_R$ afterward:
\begin{align*}
P_{0,n}&=F_L^{\otimes\Delta}\otimes F_R^{\otimes(n-\Delta)},\\
P_{1,n}&=F_L^{\otimes(\Delta+h)}\otimes F_R^{\otimes(n-\Delta-h)}.
\end{align*}
Because $\Delta\leq n/4$,
\begin{align*}
\Delta\leq\eta^{(0)}<\eta^{(1)}<2\Delta\leq n-\Delta.
\end{align*}
Together with \eqref{eq:lower-proof-subgaussian} and
\eqref{eq:lower-proof-jump}, this proves
$P_{0,n},P_{1,n}\in\mathcal P(n,D,\Delta,\sigma,\kappa)$ for all
sufficiently large $n$.

Only indices $\Delta+1,\ldots,\Delta+h$ have different marginals:
$F_R$ under $P_{0,n}$ and $F_L$ under $P_{1,n}$.
By independence and \eqref{eq:lower-proof-kl-scale}--\eqref{eq:lower-proof-shift},
\begin{align}
\operatorname{KL}(P_{0,n},P_{1,n})
&=h\operatorname{KL}(F_R,F_L)\nonumber\\
&\leq h\frac{(p+3)^{2p}}{(p!)^2}\frac{\kappa^2}{\sigma^{2p}}
\leq\frac12.
\label{eq:lower-proof-sample-kl}
\end{align}
With $\operatorname{TV}(P,Q)=\sup_A|P(A)-Q(A)|$, Pinsker's inequality and
\eqref{eq:lower-proof-sample-kl} give
\begin{align*}
\operatorname{TV}(P_{0,n},P_{1,n})
&\leq\sqrt{\frac{\operatorname{KL}(P_{0,n},P_{1,n})}{2}}
\leq\frac12.
\end{align*}

\medskip
\noindent\textbf{Step 5: Convert testing error into localization risk.}
For any estimator $\check\eta$, define
\begin{align*}
\psi&=\mathbf 1\{\check\eta\geq\Delta+h/2\},
\end{align*}
where $\psi=i$ selects $P_{i,n}$. Since the change points are $h$ apart,
\begin{align*}
\{\psi=1\}&\subseteq\bigl\{\lvert\check\eta-\eta^{(0)}\rvert\geq h/2\bigr\},
&\{\psi=0\}&\subseteq\bigl\{\lvert\check\eta-\eta^{(1)}\rvert\geq h/2\bigr\}.
\end{align*}
By the definition of total variation,
\begin{align*}
P_{0,n}(\psi=1)+P_{1,n}(\psi=0)
&=1-\{P_{1,n}(\psi=1)-P_{0,n}(\psi=1)\}\\
&\geq1-\operatorname{TV}(P_{0,n},P_{1,n})\geq\frac12.
\end{align*}
Thus
\begin{align*}
\max_{i=0,1}P_{i,n}\bigl(\lvert\check\eta-\eta^{(i)}\rvert\geq h/2\bigr)
&\geq\frac12\bigl\{P_{0,n}(\psi=1)+P_{1,n}(\psi=0)\bigr\}
\geq\frac14.
\end{align*}
Using $\E W\geq t\Pp(W\geq t)$ for $W\geq0$, we obtain
\begin{align}
\sup_{P\in\mathcal P(n,D,\Delta,\sigma,\kappa)}
\E_P\abs{\check\eta-\eta}
&\geq\max_{i=0,1}\E_{P_{i,n}}\lvert\check\eta-\eta^{(i)}\rvert\nonumber\\
&\geq\frac h2\max_{i=0,1}
P_{i,n}\bigl(\lvert\check\eta-\eta^{(i)}\rvert\geq h/2\bigr)\nonumber\\
&\geq\frac h8
\geq\frac{\beta_p}{16}\frac{\sigma^{2p}}{\kappa^2}.
\label{eq:lower-proof-risk}
\end{align}
Taking the infimum over $\check\eta$ in \eqref{eq:lower-proof-risk} proves
\eqref{eq:lower-risk-bound}
for all sufficiently large $n$, with
\begin{align*}
c_1&=\frac{\beta_p}{16}=\frac{(p!)^2}{32(p+3)^{2p}}.
\end{align*}
Finally, the constructed laws satisfy \Cref{ass:data-conditions} with
$\gamma=\infty$, the separation condition by
\eqref{eq:lower-proof-separation}, and the signal conditions because their
minimum spacings are at least $\Delta$.
Since $h/2\geq(\beta_p/4)\sigma^{2p}/\kappa^2$, every estimator has
probability at least $1/4$ of an error at least
$(\beta_p/4)\sigma^{2p}/\kappa^2$ under at least one alternative.
This matches the localization scale in \Cref{thm:local-refinement}.
\end{proof}

\section{Proofs for Section~\ref{sec:procedure-inference}}
\label{app:inference-proofs}

Fix $k\in\{1,\ldots,K\}$ and write
$\eta=\eta_k$, $\kappa=\kappa_k$, and $(s,e]=(s_k,e_k]$.
The population and fitted directions remain $\nu_k$ and $\widehat\nu_k$.
Let $\mathcal M_L^{(p)}=\mathcal M_\eta^{(p)}$ and
$\mathcal M_R^{(p)}=\mathcal M_{\eta+1}^{(p)}$, with fitted counterparts
$\widehat{\mathcal M}_L^{(p)},\widehat{\mathcal M}_R^{(p)}$ from
Section~\ref{app:refinement-estimation}. Define
\begin{align*}
\mu_\ell&=\ip{\mathcal M_\ell^{(p)}}{\nu_k^{\otimes p}}_{\F},
\quad \ell\in\{L,R\},
&\delta&=\mu_L-\mu_R,
\end{align*}
so $|\delta|=\kappa$. Constants in this appendix may also depend on the
fixed $\sigma$. Recall the uniform empirical-moment event from
\eqref{eq:uniform-moment-event}:
\begin{align*}
\mathcal E_{\mathrm{unif}}
&=\left\{
\sup_{\abs I\geq1}
\frac{\norm{\overline{\mathcal M}_I^{(p)}-
\E\overline{\mathcal M}_I^{(p)}}_{\op}}
{\sqrt{(D+\log n)/\abs I}
+(D+\log n)^{p/2+1/\gamma}/\abs I}
\leq C_1\sigma^p
\right\}.
\end{align*}
Here $I$ ranges over all nonempty index intervals in $(0,n]$, and
$C_1$ is the constant from \Cref{thm:cusum}.
Its complement has probability at most $n^{-5}$ by
\Cref{lem:stage-ii-preliminary-event}.
All moment and probability bounds below use the original law;
comparisons involving fitted windows are on $\mathcal E_{\mathrm{unif}}$.

Let $\epsilon_n$ be the deterministic radius in
\eqref{eq:stage-ii-common-radius}, and set
$\rho_{n,k}=C_\nu\epsilon_n/\kappa_k$.
By \eqref{eq:refinement-improved-direction-rate} and
\Cref{ass:local-refinement-signal},
\begin{align}
\rho_{n,k}\sqrt{D+1}&\to0,
&\frac{\sigma^{2p}}{\kappa_k^2}&=o(\Delta_{\min}).
\label{eq:inference-deterministic-smallness}
\end{align}
Since $C_\nu>0$ is fixed, the first limit also gives
$\epsilon_n/\kappa_k=\rho_{n,k}/C_\nu
\leq\rho_{n,k}\sqrt{D+1}/C_\nu\to0$.
These statements hold uniformly over the fixed set of changes. The moment
bound $\kappa_k\leq C_p\sigma^p$ also implies $\Delta_{\min}\to\infty$.
On $\mathcal E_{\mathrm{unif}}$, choose the sign of $\widehat\nu_k$ so that
$\|\widehat\nu_k-\nu_k\|_2\leq\rho_{n,k}$.

\subsection{Uniform oracle approximation}
\label{app:inference-oracle}

For integer candidates $q$ with $s<q\leq e-1$, define
\begin{align*}
Q_k^*(q)
&=\sum_{\substack{t\in\mathbb Z\\s<t\leq q}}
\{(\nu_k^\top Y_t)^p-\mu_L\}^2
+\sum_{\substack{t\in\mathbb Z\\q<t\leq e}}
\{(\nu_k^\top Y_t)^p-\mu_R\}^2.
\end{align*}
Define the centered sums within the adjacent true regimes by
\begin{align*}
S_r^+&=\sum_{t=\eta+1}^{\eta+r}
\{Y_t^{\otimes p}-\E Y_t^{\otimes p}\},
\qquad 0\leq r\leq\eta_{k+1}-\eta,\\
S_r^-&=\sum_{t=\eta-r+1}^{\eta}
\{Y_t^{\otimes p}-\E Y_t^{\otimes p}\},
\qquad 0\leq r\leq\eta-\eta_{k-1},
\end{align*}
for integer $r$, with $S_0^\pm=0$.
For larger $r$, extend the centered summands by zero beyond the respective
adjacent regime.
Expansion at admissible candidates gives
\begin{align}
Q_k^*(\eta+r)-Q_k^*(\eta)
&=r\kappa^2-2\delta\ip{S_r^+}{\nu_k^{\otimes p}}_{\F},
\nonumber\\
Q_k^*(\eta-r)-Q_k^*(\eta)
&=r\kappa^2+2\delta\ip{S_r^-}{\nu_k^{\otimes p}}_{\F}.
\label{eq:inference-oracle-increments}
\end{align}
\begin{lemma}[Uniform oracle equivalence]
\label{lem:uniform-oracle-equivalence}
Consider the moment-change model \eqref{eq:piecewise-moment-model}, with $p\geq2$ and $K$ fixed. Suppose
\Cref{ass:data-conditions,ass:local-refinement-separation,ass:local-refinement-signal}
hold. Use the estimator and tuning specified in \Cref{thm:local-refinement}.
Then, for each $k\in\{1,\ldots,K\}$ and every fixed $M,z>0$,
\begin{align}
\Pp\Bigg(\mathcal E_{\mathrm{unif}}\cap
\Bigg\{\frac1{\sigma^{2p}}
\max_{\substack{r\in\mathbb Z\\|r|\leq M\sigma^{2p}/\kappa^2}}
\Big|\{\widehat Q_k(\eta+r)-\widehat Q_k(\eta)\}
-\{Q_k^*(\eta+r)-Q_k^*(\eta)\}\Big|>z\Bigg\}\Bigg)
&\to 0.
\label{eq:inference-oracle-equivalence}
\end{align}
Here $\mathcal E_{\mathrm{unif}}$ is the event defined in
\eqref{eq:uniform-moment-event}.
\end{lemma}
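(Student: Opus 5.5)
Throughout I work on $\mathcal E_{\mathrm{unif}}$, where $\widehat K=K$, the window and fitting-interval inclusions of \Cref{lem:coarse-local-geometry} hold, $\|\widehat\nu_k-\nu_k\|_2\leq\rho_{n,k}$, and $|\widehat\mu_\ell-\mu_\ell'|\leq\epsilon_n$. Here I write $\widehat\mu_\ell=\ip{\widehat{\mathcal M}_\ell^{(p)}}{\widehat\nu_k^{\otimes p}}_{\F}$ and $\mu_\ell'=\ip{\mathcal M_\ell^{(p)}}{\widehat\nu_k^{\otimes p}}_{\F}$. Let $r_M=\lfloor M\sigma^{2p}/\kappa^2\rfloor$. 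By \eqref{eq:inference-deterministic-smallness}, $r_M=o(\Delta_{\min})$, so by \eqref{eq:refinement-window-margin} every candidate $\eta+r$ with $|r|\leq r_M$ is admissible and lies in the adjacent true regimes. I treat $r>0$; the case $r<0$ is symmetric, using $S_r^-$ and interchanging $L$ and $R$. With $\widehat Z_t=(\widehat\nu_k^\top Y_t)^p$, the computation in Step 2 of the proof of \Cref{thm:local-refinement} gives
\begin{align*}
\widehat Q_k(\eta+r)-\widehat Q_k(\eta)
&=r\{(\widehat\mu_L-\mu_R')^2-(\widehat\mu_R-\mu_R')^2\}
-2(\widehat\mu_L-\widehat\mu_R)\ip{S_r^+}{\widehat\nu_k^{\otimes p}}_{\F}.
\end{align*}
The oracle increment \eqref{eq:inference-oracle-increments} has the same form with $\widehat\nu_k$ replaced by $\nu_k$ and hats removed. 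The difference then splits into a drift discrepancy and a noise discrepancy, which I bound separately.

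\textbf{Drift term.} We have $|(\widehat\mu_L-\mu_R')^2-(\widehat\mu_R-\mu_R')^2-\kappa^2|\leq|(\widehat\mu_L-\mu_R')^2-\delta^2|+\epsilon_n^2$. Also $|\widehat\mu_L-\mu_R'-\delta|\leq\epsilon_n+|\ip{\Theta}{\widehat\nu_k^{\otimes p}-\nu_k^{\otimes p}}_{\F}|$. By exact maximization, the last quantity is at most $4\epsilon_n$, as in Step 2 of that proof; alternatively it is at most $p\kappa\rho_{n,k}$. Hence the drift discrepancy is $O(r\kappa\epsilon_n)$, and its maximum over $r\leq r_M$ is $O(M\sigma^{2p}\epsilon_n/\kappa)=o(\sigma^{2p})$, since $\epsilon_n/\kappa\to0$.

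\textbf{Noise term.} Decompose
\begin{align*}
(\widehat\mu_L-\widehat\mu_R)\ip{S_r^+}{\widehat\nu_k^{\otimes p}}_{\F}-\delta\ip{S_r^+}{\nu_k^{\otimes p}}_{\F}
&=(\widehat\mu_L-\widehat\mu_R-\delta)\ip{S_r^+}{\widehat\nu_k^{\otimes p}}_{\F}
+\delta\ip{S_r^+}{\widehat\nu_k^{\otimes p}-\nu_k^{\otimes p}}_{\F}.
\end{align*}
First factor of the first piece: $|\widehat\mu_L-\widehat\mu_R-\delta|\leq 2\epsilon_n+|\ip{\Theta}{\widehat\nu_k^{\otimes p}-\nu_k^{\otimes p}}_{\F}|=O(\epsilon_n)$. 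Second factor: \eqref{eq:improved-full-projection-maximal} with $m=r_M$ and Markov's inequality give $\max_{r\leq r_M}\sup_{\|h\|\leq\rho_{n,k}}|\ip{S_r^+}{(\nu_k+h)^{\otimes p}}_{\F}|=O_{\Pp}(\sigma^p\sqrt{r_M})=O_{\Pp}(\sqrt M\sigma^{2p}/\kappa)$. The product is therefore $O_{\Pp}(\sigma^{2p}\epsilon_n/\kappa)=o_{\Pp}(\sigma^{2p})$.

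\textbf{Main obstacle.} The second piece is the key step. The factor $|\delta|=\kappa$ has no $\epsilon_n$ to absorb, so I must show that $\max_{r\leq r_M}\sup_{\|h\|\leq\rho_{n,k}}|\ip{S_r^+}{(\nu_k+h)^{\otimes p}-\nu_k^{\otimes p}}_{\F}|=o_{\Pp}(\sigma^p\sqrt{r_M})$. For this I would not use the crude final line of Step 3, which only gives $C(1+p)$. Instead I would use the intermediate bound from \eqref{eq:improved-local-maximal}: the local term is at most $C\sigma^p\sqrt m\sum_{j=1}^p(\rho_{n,k}\sqrt{D+1})^j$ in $L^4$. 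By \eqref{eq:inference-deterministic-smallness}, $\rho_{n,k}\sqrt{D+1}\to0$, so this is $o(\sigma^p\sqrt{r_M})$. Multiplying by $\kappa$ gives $o_{\Pp}(\sqrt M\sigma^{2p})$.

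Combining the drift and both noise bounds through Markov's inequality, together with the symmetric negative-$r$ case, yields \eqref{eq:inference-oracle-equivalence}. The only delicate point is the sign alignment of $\widehat\nu_k$, which must be applied consistently to both the fitted and population increments; the squared criterion is invariant under sign only for even $p$, so I would fix the sign before forming $\widehat\mu_\ell$.
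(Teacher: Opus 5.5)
Your proposal is correct and follows essentially the paper's route. It uses the same algebraic expansion of the fitted increment and a drift discrepancy of order $r\kappa\epsilon_n$. The noise terms are controlled uniformly over the deterministic ball $\|h\|_2\leq\rho_{n,k}$, with the decisive factor $\sum_{j=1}^p(\rho_{n,k}\sqrt{D+1})^j\to0$ coming from \eqref{eq:improved-local-maximal}. The paper pairs $\widehat\delta-\delta$ with $\ip{S_r^\pm}{\nu_k^{\otimes p}}_{\F}$ and $\widehat\delta$ with the direction perturbation; this differs from your split only immaterially.

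One small correction: $\widehat Q_k$ is invariant under $\widehat\nu_k\mapsto-\widehat\nu_k$ for every $p$, odd or even. Inside each square, the projected observations and the fitted means both pick up the same factor $(-1)^p$. Sign alignment is therefore purely an analysis device, and no change to the procedure is needed.
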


\begin{proof}
On $\mathcal E_{\mathrm{unif}}$, \eqref{eq:refinement-window-margin} and
\eqref{eq:inference-deterministic-smallness} give
$M\sigma^{2p}/\kappa^2+1<\min\{\eta-s,e-\eta\}$ for all sufficiently
large $n$. Thus every $\eta+r$ in the maximum in
\eqref{eq:inference-oracle-equivalence} lies in $(s,e-1]$, where both
criteria are defined.

\medskip
\noindent\textbf{Step 1: Projected jump error.}
On $\mathcal E_{\mathrm{unif}}$, write
$\widehat\delta=\ip{\widehat\Theta_k}{\widehat\nu_k^{\otimes p}}_{\F}$.
Since $\|\widehat\nu_k\|_2=1$, the tensor bounds in
\Cref{lem:stage-ii-preliminary-event} imply, for $\ell\in\{L,R\}$,
\begin{align*}
\left|\ip{\widehat{\mathcal M}_\ell^{(p)}-\mathcal M_\ell^{(p)}}
{\widehat\nu_k^{\otimes p}}_{\F}\right|
\leq\|\widehat{\mathcal M}_\ell^{(p)}-\mathcal M_\ell^{(p)}\|_{\op}
\leq\epsilon_n,\\
\|\widehat\Theta_k-\Theta_k\|_{\op}
\leq\|\widehat{\mathcal M}_L^{(p)}-\mathcal M_L^{(p)}\|_{\op}
+\|\widehat{\mathcal M}_R^{(p)}-\mathcal M_R^{(p)}\|_{\op}
\leq2\epsilon_n.
\end{align*}
For the signed jump error, add and subtract
$\ip{\Theta_k}{\widehat\nu_k^{\otimes p}}_{\F}$ and apply
\eqref{eq:stage-ii-lemma-lipschitz}:
\begin{align*}
|\widehat\delta-\delta|
&\leq\left|\ip{\widehat\Theta_k-\Theta_k}{\widehat\nu_k^{\otimes p}}_{\F}\right|
+\left|\ip{\Theta_k}{\widehat\nu_k^{\otimes p}-\nu_k^{\otimes p}}_{\F}\right|\\
&\leq\|\widehat\Theta_k-\Theta_k\|_{\op}
+p\|\Theta_k\|_{\op}\|\widehat\nu_k-\nu_k\|_2\\
&\leq2\epsilon_n+p\kappa\rho_{n,k}
=(2+pC_\nu)\epsilon_n,
\end{align*}
where $\|\Theta_k\|_{\op}=\kappa$ and
$\rho_{n,k}=C_\nu\epsilon_n/\kappa$.
Finally, the unit norm of $\widehat\nu_k$ and the operator-norm
triangle inequality give
\begin{align*}
|\widehat\delta|
\leq\|\widehat\Theta_k\|_{\op} \leq\|\Theta_k\|_{\op}+\|\widehat\Theta_k-\Theta_k\|_{\op}
\leq\kappa+2\epsilon_n.
\end{align*}
Since $\delta^2=\kappa^2$, it follows that
\begin{align*}
|\widehat\delta^2-\kappa^2|
&=|\widehat\delta-\delta|\,|\widehat\delta+\delta|
\leq C(\kappa\epsilon_n+\epsilon_n^2).
\end{align*}

\medskip
\noindent\textbf{Step 2: Uniform comparison.}
Expanding the fitted criterion and collecting the inner products gives,
for positive integer shifts within its search region,
\begin{align*}
\widehat Q_k(\eta+r)-\widehat Q_k(\eta)
&=r\widehat\delta^2+2\widehat\delta
\ip{r(\widehat{\mathcal M}_R^{(p)}-\mathcal M_R^{(p)})-S_r^+}
{\widehat\nu_k^{\otimes p}}_{\F},\\
\widehat Q_k(\eta-r)-\widehat Q_k(\eta)
&=r\widehat\delta^2-2\widehat\delta
\ip{r(\widehat{\mathcal M}_L^{(p)}-\mathcal M_L^{(p)})-S_r^-}
{\widehat\nu_k^{\otimes p}}_{\F}.
\end{align*}
Subtracting \eqref{eq:inference-oracle-increments} and using Step~1
yields, on $\mathcal E_{\mathrm{unif}}$ and for either sign,
\begin{align*}
\Big|\{\widehat Q_k(\eta\pm r)-\widehat Q_k(\eta)\}
-\{Q_k^*(\eta\pm r)-Q_k^*(\eta)\}\Big| &\leq Cr(\kappa\epsilon_n+\epsilon_n^2)
+C\epsilon_n\left|\ip{S_r^\pm}{\nu_k^{\otimes p}}_{\F}\right|\\
&\qquad+C(\kappa+\epsilon_n)
\left|\ip{S_r^\pm}{\widehat\nu_k^{\otimes p}-\nu_k^{\otimes p}}_{\F}\right|.
\end{align*}
The difference is zero at $r=0$.

Set $m=\lceil M\sigma^{2p}/\kappa^2\rceil$.
On $\mathcal E_{\mathrm{unif}}$, the fitted direction lies in the
deterministic ball $\|\widehat\nu_k-\nu_k\|_2\leq\rho_{n,k}$.
We therefore bound the last term uniformly over this ball before
applying \Cref{lem:local-direction-maximal}. Since the sums use
deterministic blocks, the resulting moment bounds are unconditional:
\begin{align*}
\left\|\max_{1\leq r\leq m}
\left|\ip{S_r^\pm}{\nu_k^{\otimes p}}_{\F}\right|\right\|_{L^4}
&\leq C\sigma^p\sqrt m,
\end{align*}
and
\begin{align*}
&\left\|\sup_{\|h\|_2\leq\rho_{n,k}}\max_{1\leq r\leq m}
\left|\ip{S_r^\pm}{(\nu_k+h)^{\otimes p}-\nu_k^{\otimes p}}_{\F}\right|
\right\|_{L^4} \leq C\sigma^p\sqrt m
\sum_{j=1}^p\{\rho_{n,k}\sqrt{D+1}\}^j.
\end{align*}
Since $\kappa\leq C_p\sigma^p$, we have
$m\leq C_M\sigma^{2p}/\kappa^2$.
Taking maxima over $|r|\leq M\sigma^{2p}/\kappa^2$ in the comparison
and applying these bounds, the resulting upper bound, divided by
$\sigma^{2p}$, has $L^4$ norm at most
\begin{align*}
C_M\left\{
\frac{\epsilon_n}{\kappa}
+\frac{\epsilon_n^2}{\kappa^2}
+\left(1+\frac{\epsilon_n}{\kappa}\right)
\sum_{j=1}^p\{\rho_{n,k}\sqrt{D+1}\}^j\right\}
\to 0
\end{align*}
by \eqref{eq:inference-deterministic-smallness}.
Markov's inequality proves \eqref{eq:inference-oracle-equivalence}.
\end{proof}

\subsection{Nonvanishing-jump limit}

\begin{proof}[Proof of \Cref{thm:nonvanishing-localization-limit}]
\noindent\textbf{Step 1: Projected moments and oracle increments.}
For every fixed $q\geq1$, the augmented projection bound in the proof of
\Cref{lem:tensor-product-tail} gives, with fixed $\sigma<\infty$,
\begin{align*}
\sup_{1\leq t\leq n}\|\nu_k^\top Y_t\|_{L^q}
&\leq C_q\sigma<\infty.
\end{align*}
In particular, the $p$th powers are uniformly integrable, even when $D$
grows. The projected finite-dimensional convergence in
\Cref{ass:nonvanishing-observation-limit} therefore yields
\begin{align*}
\mu_L&\to\E\xi_{k,0}^p,
&\mu_R&\to\E\xi_{k,1}^p,
&\delta&\to\rho_k,
\qquad |\rho_k|=\kappa_k^*.
\end{align*}
Because $\Delta_{\min}\to\infty$, every fixed local index eventually lies
in its corresponding true regime. The moment model then gives
$\E\xi_{k,t}^p=\E\xi_{k,0}^p$ for $t\leq0$ and
$\E\xi_{k,t}^p=\E\xi_{k,1}^p$ for $t\geq1$.
Write $\zeta_{k,t}=\xi_{k,t}^p-\E\xi_{k,t}^p$.

Define the finite-sample oracle contrast by
\begin{align*}
\mathcal L_{n,k}(r)
&=\begin{cases}
|r|\kappa^2+2\delta\ip{S_{|r|}^-}{\nu_k^{\otimes p}}_{\F},&r<0,\\[4pt]
0,&r=0,\\[4pt]
r\kappa^2-2\delta\ip{S_r^+}{\nu_k^{\otimes p}}_{\F},&r>0,
\end{cases}
\qquad r\in\mathbb Z.
\end{align*}
On $\mathcal E_{\mathrm{unif}}$, \eqref{eq:inference-oracle-increments}
gives $\mathcal L_{n,k}(r)=Q_k^*(\eta+r)-Q_k^*(\eta)$ whenever
$s<\eta+r\leq e-1$.

Fix an integer $H\geq1$ independently of $n$.
By \Cref{ass:nonvanishing-observation-limit} and the deterministic
mean and coefficient limits above, continuous mapping through
coordinatewise $p$th powers and finite sums gives jointly
\begin{align*}
\left(\ip{S_r^+}{\nu_k^{\otimes p}}_{\F},
\ip{S_r^-}{\nu_k^{\otimes p}}_{\F}\right)_{r=1}^{H}
&\xrightarrow{d}
\left(\sum_{t=1}^{r}\zeta_{k,t},
\sum_{t=-r+1}^{0}\zeta_{k,t}\right)_{r=1}^{H}
\end{align*}
and
\begin{align}
\bigl(\mathcal L_{n,k}(r)\bigr)_{|r|\leq H}
&\xrightarrow{d}\bigl(\mathcal L_k(r)\bigr)_{|r|\leq H}
\qquad\text{in }\R^{2H+1},
\label{eq:nonvanishing-oracle-finite-limit}
\end{align}
where $\mathcal L_k$ is defined in \eqref{eq:nonvanishing-random-walk}.

\medskip
\noindent\textbf{Step 2: A finite, unique limiting minimizer.}
Set $T_r^+=\sum_{t=1}^r\zeta_{k,t}$ and
$T_r^-=\sum_{t=-r+1}^{0}\zeta_{k,t}$ for integers $r\geq1$.
Fix an integer $m\geq1$. For all sufficiently large $n$, the
forward block $(Y_{\eta+1},\ldots,Y_{\eta+m})$ and the reverse block
$(Y_\eta,\ldots,Y_{\eta-m+1})$ lie in the adjacent true regimes.
Apply \eqref{eq:improved-fixed-maximal} to these deterministic blocks
with the deterministic unit direction $\nu_k$.
Step~1 also gives joint convergence of
$(\ip{S_r^\pm}{\nu_k^{\otimes p}}_{\F})_{r=1}^m$
to $(T_r^\pm)_{r=1}^m$.
For $A>0$, the function
$x\mapsto\min\{A,\max_{1\leq r\leq m}|x_r|^4\}$ is bounded and
continuous on $\R^m$, so
\begin{align*}
\E\min\left\{A,\max_{1\leq r\leq m}|T_r^\pm|^4\right\}
&=\lim_{n\to\infty}\E\min\left\{A,\max_{1\leq r\leq m}
\left|\ip{S_r^\pm}{\nu_k^{\otimes p}}_{\F}\right|^4\right\} \leq C\sigma^{4p}m^2.
\end{align*}
The last inequality follows by removing the truncation and taking
fourth powers in \eqref{eq:improved-fixed-maximal}.
Letting $A\uparrow\infty$ and applying monotone convergence yields
\begin{align*}
\E\max_{1\leq r\leq m}|T_r^\pm|^4
&\leq C\sigma^{4p}m^2.
\end{align*}
For each $b>0$, Markov's inequality implies
\begin{align*}
\sum_{j=0}^{\infty}
\Pp\left(\max_{1\leq r\leq2^j}|T_r^\pm|>b2^j\right)
&\leq C\sigma^{4p}b^{-4}\sum_{j=0}^{\infty}2^{-2j}<\infty.
\end{align*}
By the first Borel--Cantelli lemma, for each positive rational $b$
and either sign, almost surely
\begin{align*}
\max_{1\leq s\leq2^j}|T_s^\pm|
&\leq b2^j\qquad\text{for all sufficiently large }j.
\end{align*}
The choices of $b$ and sign are countable, so these statements hold
simultaneously on one event of probability one. On this event, fix
such a $b$ and, for each integer $r\geq1$, choose $j$ with
$2^{j-1}<r\leq2^j$. For all sufficiently large $r$,
\begin{align*}
\frac{|T_r^\pm|}{r}
&\leq\frac{\max_{1\leq s\leq2^j}|T_s^\pm|}{2^{j-1}}
\leq2b.
\end{align*}
Taking $\limsup_{r\to\infty}$ and then letting $b\downarrow0$ through
positive rational values proves $T_r^\pm/r\to0$ almost surely. Hence
\begin{align*}
\frac{\mathcal L_k(r)}{|r|}&\longrightarrow(\kappa_k^*)^2>0
\qquad\text{as }|r|\to\infty.
\end{align*}
The positive limit implies $\mathcal L_k(r)\to+\infty$ almost surely.
Since $\mathcal L_k(0)=0$, the set of integers with
$\mathcal L_k(r)\leq0$ is nonempty and finite. Minimizing over this set
therefore gives the global minimum.

To rule out ties and prove uniqueness of the minimizer, fix integers
$r<s$. Then
\begin{align*}
\mathcal L_k(s)-\mathcal L_k(r)
&=(|s|-|r|)(\kappa_k^*)^2
-2\rho_k\sum_{t=r+1}^{s}
\{\xi_{k,t}^p-\E\xi_{k,t}^p\}.
\end{align*}
Since $\rho_k\ne0$, the difference is a nonconstant polynomial with a
Lebesgue-null zero set. The joint-density condition in
\Cref{ass:nonvanishing-observation-limit} therefore rules out ties for
each $r<s$, and a countable union gives uniqueness. Write
$r_*=\operatorname*{arg\,min}_{r\in\mathbb Z}\mathcal L_k(r)$.

\medskip
\noindent\textbf{Step 3: Argmin convergence.}
Since $\Pp(\mathcal E_{\mathrm{unif}}^c)\to0$, restriction to
$\mathcal E_{\mathrm{unif}}$ preserves weak limits.
On this event, \eqref{eq:refinement-window-margin} and
$\Delta_{\min}\to\infty$ ensure that every fixed finite set of shifts
is eventually allowed. With $\sigma$ fixed and $\kappa\to\kappa_k^*>0$,
\Cref{lem:uniform-oracle-equivalence} and
\eqref{eq:nonvanishing-oracle-finite-limit} therefore give weak convergence of
\mbox{$r\mapsto\widehat Q_k(\eta+r)-\widehat Q_k(\eta)$} to $\mathcal L_k$
in the supremum norm on every compact subset of $\mathbb Z$, since
these subsets are finite.

The limit paths are continuous on the discrete space $\mathbb Z$, and
Step~2 gives a unique finite, hence tight, minimizer $r_*$.
The error $\widehat\eta_k-\eta_k$ exactly minimizes the centered criterion
over its candidate set and is tight by \Cref{thm:local-refinement}.
Extending the criterion outside this set by its minimum plus one
preserves both the minimizers and the compact limits above.
Applying the argmax theorem
\citep[Theorem~3.2.2]{vanDerVaartWellner1996} to the negative criterion gives
\begin{align*}
\widehat\eta_k-\eta_k
&\xrightarrow{d}r_*
=\operatorname*{arg\,min}_{r\in\mathbb Z}\mathcal L_k(r).
\end{align*}
\end{proof}
\subsection{Vanishing-jump limit}

Recall from \eqref{eq:vanishing-projected-noise} that the centered
projected noise is
\begin{align*}
Z_{k,t}
&=(\nu_k^\top Y_t)^p-\E\bigl\{(\nu_k^\top Y_t)^p\bigr\}
=\ip{Y_t^{\otimes p}-\E Y_t^{\otimes p}}{\nu_k^{\otimes p}}_{\F}.
\end{align*}
Write $\omega_L=\omega_{k,L}$ and $\omega_R=\omega_{k,R}$ for its
positive limiting long-run standard deviations before and after
$\eta_k$, respectively, as specified in
\Cref{ass:vanishing-partial-sum-variance}. We use the extension in
\Cref{ass:vanishing-projected-process} to define $Z_{k,t}$ beyond the
observed adjacent segments.

Recall also the signed jump in the projected $p$th moment:
\begin{align*}
\delta&=\mu_L-\mu_R
=\E\{(\nu_k^\top Y_\eta)^p\}
-\E\{(\nu_k^\top Y_{\eta+1})^p\},
\qquad |\delta|=\kappa.
\end{align*}

\begin{lemma}[Two-sided projected invariance principle]
\label{lem:inference-fclt}
Under the conditions of \Cref{thm:vanishing-localization-limit}, define
\begin{align*}
S_{L,n}(t)&=\kappa\operatorname{sgn}(\delta)
\sum_{j=1}^{\lfloor t/\kappa^2\rfloor}Z_{k,\eta-j+1},
& S_{R,n}(t)&=\kappa\operatorname{sgn}(\delta)
\sum_{j=1}^{\lfloor t/\kappa^2\rfloor}Z_{k,\eta+j},
\qquad t\geq0.
\end{align*}
For every fixed $T>0$,
\begin{align*}
(S_{L,n},S_{R,n})
&\xrightarrow{d}(\omega_LB_{k,L},\omega_RB_{k,R})
\qquad\text{in }\ell^\infty([0,T])^2,
\end{align*}
where $\ell^\infty([0,T])$ is the space of bounded functions equipped
with the supremum norm, and $B_{k,L},B_{k,R}$ are independent standard
Brownian motions.
\end{lemma}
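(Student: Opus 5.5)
The plan is to prove the invariance principle for each side separately, and then obtain joint convergence with independent limits from the mixing decay across the change point.

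\textbf{Reduction to a stationary triangular array.} By \Cref{ass:vanishing-projected-process}(B), for each $n$ the sequence $\{Z_{k,\eta+j}\}_{j\geq1}$ is strictly stationary and has mean zero. The same holds for the reversed sequence $\{Z_{k,\eta-j+1}\}_{j\geq1}$. Both sequences are $\alpha$-mixing with coefficients bounded by $b_0\exp(-b_1\ell^\gamma)$, uniformly in $n$, because they are measurable functions of single observations. The projection bound used in \Cref{lem:tensor-product-tail} gives $\sup_t\|Z_{k,t}\|_{L^q}\leq C_q\sigma^p$ for every fixed $q$, uniformly in $n$ and $D$. The factor $\operatorname{sgn}(\delta)$ is $\pm1$. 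Multiplying a process by $\pm1$ does not change the law of the limiting Brownian motion, so we may drop it after passing to subsequences along which the sign is constant. Writing $m_n(t)=\lfloor t/\kappa^2\rfloor$ and using $\kappa\to0$, we have $S_{R,n}(t)=\kappa\sum_{j\leq m_n(t)}Z_{k,\eta+j}$. This is a partial-sum process of a stationary, mixing triangular array, observed on the time scale $N_n=\lfloor T/\kappa^2\rfloor\to\infty$. It is therefore asymptotically equivalent to $N_n^{-1/2}\sum_{j\leq \lfloor uN_n\rfloor}Z_{k,\eta+j}$ scaled by $\sqrt{T}$.

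\textbf{Finite-dimensional convergence.} We use Bernstein blocking. Partition $\{1,\dots,N_n\}$ into big blocks of length $p_n=N_n^{1/2}$ separated by small blocks of length $q_n=(\log N_n)^{2}$, say. The small blocks have negligible total variance by the covariance inequality for $\alpha$-mixing variables with uniformly bounded $L^{4}$ moments (Davydov's inequality). The big blocks can be replaced by independent copies at a total-variation cost of $N_n p_n^{-1}\alpha(q_n)\to0$. Applying Lyapunov's CLT to the independent blocks, together with \Cref{ass:vanishing-partial-sum-variance} applied to $m=\lfloor uN_n\rfloor$ and to increments (using stationarity), gives Gaussian limits with covariance $\omega_R^2\min(u,v)$ after rescaling. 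Increments over disjoint time intervals are asymptotically independent by the same blocking argument. The left side is handled identically.

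\textbf{Tightness and joint independence.} For tightness in $\ell^\infty([0,T])$ we use a fourth-moment bound of Rosenthal type for mixing sums, $\E|\sum_{j=a+1}^{b}Z_{k,\eta+j}|^4\leq C\{(b-a)^2+(b-a)\}$. The same bound is already implicit in \eqref{eq:improved-fixed-maximal}. Combined with a standard chaining or Billingsley moment criterion for step processes, it gives asymptotic equicontinuity of the rescaled processes. The only exception is the jumps of size $\kappa\max_j|Z_{k,\eta+j}|$, which tend to zero in probability by the moment bound and a union over $N_n$ terms. For joint convergence, the left process depends on $\{Y_t:t\leq\eta\}$ and the right process on $\{Y_t:t>\eta\}$. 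We remove a gap of $g_n=(\log N_n)^2$ observations on the right; this changes $S_{R,n}$ by $o_{\Pp}(1)$ uniformly, since $\kappa\sqrt{g_n}\to0$. We then apply the mixing inequality $|\E f g-\E f\,\E g|\leq 4\alpha(g_n)$ to bounded continuous functionals $f$ of $S_{L,n}$ and $g$ of $S_{R,n}$. The joint characteristic functionals therefore factorize in the limit, which yields independent $B_{k,L}$ and $B_{k,R}$.

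\textbf{Main obstacle.} The delicate step is the triangular-array CLT. The law of $Z_{k,t}$ changes with $n$, both because $\nu_k$ changes and because $D$ grows, so no fixed-process FCLT applies directly. We must therefore verify the Lindeberg condition and the variance identification uniformly, relying only on the uniform $L^q$ bounds and on \Cref{ass:vanishing-partial-sum-variance}. The latter holds for every sequence $m_n\to\infty$, which is exactly what is needed to identify the block variances.
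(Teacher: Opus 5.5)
Your proposal is correct, but it reaches the result by a different route than the paper.

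\textbf{The paper's route.} It never proves a one-sided CLT by hand. For a Cram\'er--Wold combination of left and right increments, it places the left scores at indices $1-i$ and the right scores at indices $i$. This gives a single $\alpha$-mixing triangular array in original time order. The paper then checks a uniform $L^4$ block bound and a positive variance limit, and invokes Withers' $\alpha$-mixing triangular-array CLT. Independence of $B_{k,L}$ and $B_{k,R}$ comes from joint Gaussianity together with a vanishing cross-covariance. Davydov's inequality bounds the covariance between sums on opposite sides of $\eta$ by $C\sum_{h\geq1}he^{-ch}<\infty$, so it is negligible after division by $N$.

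\textbf{Your route.} You prove one-sided finite-dimensional convergence directly by Bernstein blocking. You obtain independence by deleting a gap of $g_n$ observations after $\eta$ and factorizing $\E f(S_{L,n})g(\widetilde S_{R,n})$ through the $\alpha$-mixing covariance inequality. This is more self-contained: it only needs CLTs for the stationary one-sided arrays, never for the concatenated sequence across the change. It also gets independence at the level of whole processes rather than through Gaussianity. The paper's route is shorter because it delegates the blocking to Withers' theorem.

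\textbf{Tightness.} This step is essentially the same in both. Each uses a fourth-moment maximal bound for mixing partial sums, of the kind in the paper's fourth-moment maximal lemma, combined with a union bound over windows of length of order $Nh$.

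\textbf{One step to repair.} Under $\alpha$-mixing you cannot, in general, replace the big blocks by independent copies at a total-variation cost of order $(N_n/p_n)\alpha(q_n)$. A total-variation coupling of that kind (Berbee's lemma) is a $\beta$-mixing statement. What does hold, and suffices, is the Volkonskii--Rozanov inequality for the characteristic function of the sum of the $k_n$ big blocks $U_1,\ldots,U_{k_n}$:
$\bigl|\E\prod_{j=1}^{k_n}e^{isU_j}-\prod_{j=1}^{k_n}\E e^{isU_j}\bigr|\leq16(k_n-1)\alpha(q_n)$.
This gives the same rate $N_n^{1/2}\alpha\{(\log N_n)^2\}\to0$. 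Your Lyapunov step then goes through unchanged, using $\sum_j\E|U_j|^4\leq CN_np_n=o(N_n^2)$ and the block variances identified by \Cref{ass:vanishing-partial-sum-variance} with $m=p_n$.
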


\begin{proof}[Proof of \Cref{lem:inference-fclt}]
Put $N=\kappa^{-2}\to\infty$ and
\begin{align*}
\xi_{L,n,j}&=\operatorname{sgn}(\delta)Z_{k,\eta-j+1},
&\xi_{R,n,j}&=\operatorname{sgn}(\delta)Z_{k,\eta+j},
\qquad j\geq1,
\end{align*}
so $S_{\ell,n}(t)=N^{-1/2}\sum_{j=1}^{\lfloor Nt\rfloor}\xi_{\ell,n,j}$.
For fixed $T>0$, it suffices to prove joint finite-dimensional convergence
to $(\omega_LB_{k,L},\omega_RB_{k,R})$ and stochastic equicontinuity
of both components on $[0,T]$.

\medskip
\noindent\textbf{Step 1: Moment and variance bounds.}
The extensions in \Cref{ass:vanishing-projected-process} make each
row $(\xi_{\ell,n,j})_{j\geq1}$ centered, stationary, and
$\alpha$-mixing. With $\sigma$ fixed, \Cref{lem:tensor-product-tail}
gives $\sup_{\ell,n,j}\norm{\xi_{\ell,n,j}}_{L^q}\leq C_q$ for every
fixed $q\geq1$. Davydov's inequality \citep{Davydov1968} yields
\begin{align}
\abs{\operatorname{Cov}(\xi_{\ell,n,j},\xi_{\ell,n,j+h})}
&\leq C\alpha_X(h)^{1/2}\leq Ce^{-ch^\gamma},
\qquad h\geq1,\quad\gamma<\infty.
\label{eq:inference-covariance-bound}
\end{align}
This bound also holds across the change, with separation measured in
original time. Under independence, all covariances at positive lags
vanish. Moreover, \Cref{lem:mixing-fourth-maximal} gives
\begin{align}
\E\max_{0\leq r\leq m}
\abs{\sum_{j=1}^r\xi_{\ell,n,a+j}}^4
&\leq Cm^2,
\qquad a\geq0,\quad m\geq1.
\label{eq:inference-fourth-moment}
\end{align}
For $\ell\in\{L,R\}$, write
\begin{align*}
c_{\ell,n}(h)&=\operatorname{Cov}
(\xi_{\ell,n,1},\xi_{\ell,n,1+h}),\qquad h\geq0,
&v_{\ell,n}^2&=c_{\ell,n}(0)+2\sum_{h=1}^{\infty}c_{\ell,n}(h).
\end{align*}
The covariance bound gives $\sup_n\sum_{h\geq1}h|c_{\ell,n}(h)|<\infty$,
so stationarity implies
\begin{align}
\left|\frac1m\operatorname{Var}
\left(\sum_{j=1}^m\xi_{\ell,n,j}\right)-v_{\ell,n}^2\right|
&\leq\frac2m\sum_{h=1}^{m-1}h|c_{\ell,n}(h)|
+2\sum_{h=m}^{\infty}|c_{\ell,n}(h)|
\leq\frac Cm.
\label{eq:inference-stationary-variance}
\end{align}
Since multiplication by $\operatorname{sgn}(\delta)$ preserves variance,
\Cref{ass:vanishing-partial-sum-variance} with $m=\lfloor N\rfloor$
and \eqref{eq:inference-stationary-variance} give
\begin{align}
v_{L,n}^2&\longrightarrow\omega_L^2,
&v_{R,n}^2&\longrightarrow\omega_R^2.
\label{eq:inference-lrv}
\end{align}

\medskip
\noindent\textbf{Step 2: Joint finite-dimensional convergence.}
Fix $0=t_0<t_1<\cdots<t_J\leq T$ and real coefficients
$a_{\ell,j}$, not all zero. Set
\begin{align*}
W_n&=\sum_{\ell\in\{L,R\}}\sum_{j=1}^J a_{\ell,j}
\{S_{\ell,n}(t_j)-S_{\ell,n}(t_{j-1})\}.
\end{align*}
Two disjoint intervals in original time, possibly on opposite sides
of the change, have at most $h$ pairs of indices with separation $h$.
The covariance of their sums is therefore bounded by
\mbox{$C\sum_{h\geq1}he^{-ch}<\infty$}.
Together with \eqref{eq:inference-stationary-variance} and
\eqref{eq:inference-lrv}, this gives
\begin{align*}
\operatorname{Var}(W_n)
&\longrightarrow\sum_{\ell\in\{L,R\}}\omega_\ell^2
\sum_{j=1}^J a_{\ell,j}^2(t_j-t_{j-1})>0.
\end{align*}
For $\lfloor Nt_{j-1}\rfloor<i\leq\lfloor Nt_j\rfloor$, define the
weighted scores in original time order by
\begin{align*}
U_{n,1-i}&=a_{L,j}\xi_{L,n,i},
&U_{n,i}&=a_{R,j}\xi_{R,n,i},\\
\sqrt N W_n&=\sum_{r=1-\lfloor Nt_J\rfloor}^{\lfloor Nt_J\rfloor}U_{n,r},
&\E U_{n,r}&=0.
\end{align*}
The weights preserve the uniform $\alpha$-mixing and covariance
bounds in Step~1. Since $J$ is fixed, \eqref{eq:inference-fourth-moment}
and Minkowski's inequality give
\begin{align*}
\left\|\sum_{r=a+1}^{a+b}U_{n,r}\right\|_{L^4}
&\leq C\sqrt b,
\qquad\text{uniformly over admissible blocks.}
\end{align*}
These bounds and the positive variance limit verify the
$\alpha$-mixing triangular-array CLT of
\citet[Corollary~2.1]{Withers1981}.
Hence
\begin{align*}
\frac{W_n}{\sqrt{\operatorname{Var}(W_n)}}
&\xrightarrow{d}\mathcal N(0,1).
\end{align*}
The variance limit and the Cram\'er--Wold device give the joint Gaussian
finite-dimensional limit $(\omega_LB_{k,L},\omega_RB_{k,R})$.
Its left and right components satisfy
\begin{align*}
\operatorname{Cov}(\omega_LB_{k,L}(s),\omega_RB_{k,R}(t))
&=0,\qquad s,t\in[0,T].
\end{align*}
Joint Gaussianity therefore implies that $B_{k,L}$ and $B_{k,R}$
are independent.

\medskip
\noindent\textbf{Step 3: Stochastic equicontinuity.}
For every $\varepsilon>0$, we verify
\begin{align}
\lim_{h\downarrow0}\limsup_{n\to\infty}
\Pp\left(\max_{\ell\in\{L,R\}}
\sup_{\substack{s,t\in[0,T]\\|t-s|\leq h}}
|S_{\ell,n}(t)-S_{\ell,n}(s)|>\varepsilon\right)&=0.
\label{eq:inference-equicontinuity}
\end{align}
Fix $0<h\leq T$. If $0\leq s\leq t\leq T$ and $t-s\leq h$, then
$j=\lfloor s/h\rfloor$ gives
$s,t\in[jh,(j+2)h]\cap[0,T]$.
Within this window, each increment is the difference of two prefix
sums starting at index $\lfloor Njh\rfloor+1$. Since
$\lfloor Nt\rfloor-\lfloor Njh\rfloor\leq\lceil2Nh\rceil+1$,
\begin{align*}
\sup_{s,t\in[jh,(j+2)h]\cap[0,T]}
|S_{\ell,n}(t)-S_{\ell,n}(s)|
&\leq\frac2{\sqrt N}
\max_{0\leq r\leq\lceil2Nh\rceil+1}
\left|\sum_{i=1}^{r}\xi_{\ell,n,\lfloor Njh\rfloor+i}\right|.
\end{align*}
A union bound over $\ell\in\{L,R\}$ and
$j=0,\ldots,\lfloor T/h\rfloor$, followed by Markov's inequality
and \eqref{eq:inference-fourth-moment}, now gives
\begin{align*}
&\Pp\left(\max_{\ell\in\{L,R\}}
\sup_{\substack{s,t\in[0,T]\\|t-s|\leq h}}
|S_{\ell,n}(t)-S_{\ell,n}(s)|>\varepsilon\right)\\
&\quad\leq\sum_{\ell\in\{L,R\}}\sum_{j=0}^{\lfloor T/h\rfloor}
\Pp\left(\max_{0\leq r\leq\lceil2Nh\rceil+1}
\left|\sum_{i=1}^{r}\xi_{\ell,n,\lfloor Njh\rfloor+i}\right|
>\frac{\varepsilon\sqrt N}{2}\right)\\
&\quad\leq\frac{C}{\varepsilon^4N^2}
\left(\left\lfloor\frac Th\right\rfloor+1\right)
(\lceil2Nh\rceil+1)^2
\leq C\varepsilon^{-4}(T/h+1)(h+N^{-1})^2.
\end{align*}
Taking $\limsup_{n\to\infty}$ and then $h\downarrow0$ proves
\eqref{eq:inference-equicontinuity}, since
\begin{align*}
\lim_{h\downarrow0}C\varepsilon^{-4}(T/h+1)h^2
&=\lim_{h\downarrow0}C\varepsilon^{-4}(Th+h^2)=0.
\end{align*}

For each $n$, the pair $(S_{L,n},S_{R,n})$ is a continuous function
of a finite random vector into its deterministic step-function
subspace, hence is Borel measurable in $\ell^\infty([0,T])^2$.
Because $S_{L,n}(0)=S_{R,n}(0)=0$ and $[0,T]$ is compact,
\eqref{eq:inference-equicontinuity} gives asymptotic tightness, with
every subsequential limit supported on $C([0,T])^2$.
Step~2 identifies its finite-dimensional distributions as those of
$(\omega_LB_{k,L},\omega_RB_{k,R})$. Continuous functions are
determined by their values at rational times, so every subsequential
limit has this same law. This proves the claimed weak convergence
in the supremum norm.
\end{proof}

\begin{proof}[Proof of \Cref{thm:vanishing-localization-limit}]
Since $\Pp(\mathcal E_{\mathrm{unif}}^c)\to0$, restriction to
$\mathcal E_{\mathrm{unif}}$ preserves weak limits. We work on this event.

\medskip
\noindent\textbf{Step 1: Convergence on compact intervals.}
Denote by $r_n(u)$ the rounded shift
$\operatorname{sgn}(u)\lfloor |u|/\kappa^2\rfloor$.
The domain of $G_n$ is
$\{u\in\R:s<\eta+r_n(u)\leq e-1\}$, corresponding to the integer
candidates in \eqref{eq:local-refined-estimate}. On this set, define
\begin{align*}
G_n(u)&=\widehat Q_k\{\eta+r_n(u)\}-\widehat Q_k(\eta).
\end{align*}
With $\sigma$ fixed, \eqref{eq:inference-deterministic-smallness} gives
$\kappa^{-2}=o(\Delta_{\min})$. Hence, for each fixed $T>0$,
\eqref{eq:refinement-window-margin} implies, eventually,
\begin{align*}
\sup_{|u|\leq T}|r_n(u)|+1
&\leq T/\kappa^2+1
<\frac45\Delta_{\min}
\leq\min\{\eta-s,e-\eta\}.
\end{align*}
Thus $s<\eta+r_n(u)\leq e-1$ for every $|u|\leq T$, so $G_n$ is
defined throughout $[-T,T]$. For these candidates,
\eqref{eq:inference-oracle-increments} gives
\begin{align*}
Q_k^*\{\eta+r_n(u)\}-Q_k^*(\eta)
&=\kappa^2|r_n(u)|+
\begin{cases}
2S_{L,n}(-u),&u\leq0,\\
-2S_{R,n}(u),&u>0.
\end{cases}
\end{align*}
Since $0\leq x-\lfloor x\rfloor<1$ for $x\geq0$,
\begin{align*}
0\leq |u|-\kappa^2|r_n(u)|
&=\kappa^2\left(\frac{|u|}{\kappa^2}
-\left\lfloor\frac{|u|}{\kappa^2}\right\rfloor\right)
<\kappa^2\longrightarrow0.
\end{align*}
Thus the drift converges uniformly to $|u|$, and
\Cref{lem:inference-fclt} gives the oracle limit from
\eqref{eq:vanishing-brownian-process}, namely
\begin{align*}
\mathcal G_k(u)&=|u|+2\begin{cases}
\omega_LB_{k,L}(-u),&u\leq0,\\
\omega_RB_{k,R}(u),&u>0,
\end{cases}
\end{align*}
where $B_{k,L}$ and $B_{k,R}$ are independent standard Brownian
motions. The right-hand minus sign in the oracle increment is
absorbed into $B_{k,R}$ without changing this joint law.
Choose a fixed $M\geq T/\sigma^{2p}$. Since $\sigma$ is fixed,
\Cref{lem:uniform-oracle-equivalence} gives
\begin{align*}
&\sup_{|u|\leq T}
\left|G_n(u)-\bigl[Q_k^*\{\eta+r_n(u)\}-Q_k^*(\eta)\bigr]\right|\\
&\quad\leq\max_{\substack{r\in\mathbb Z\\|r|\leq M\sigma^{2p}/\kappa^2}}
\left|\bigl[\widehat Q_k(\eta+r)-\widehat Q_k(\eta)\bigr]
-\bigl[Q_k^*(\eta+r)-Q_k^*(\eta)\bigr]\right|
=o_{\Pp}(1).
\end{align*}
Slutsky's theorem therefore yields convergence in the supremum norm:
\begin{align*}
G_n&\xrightarrow{d}\mathcal G_k
\qquad\text{in }\ell^\infty([-T,T]).
\end{align*}

\medskip
\noindent\textbf{Step 2: A unique finite limiting minimizer.}
For $\ell\in\{L,R\}$, $\varepsilon>0$, and $j\geq0$, Doob's
maximal inequality applied to $B_{k,\ell}^2$ gives
\begin{align*}
\Pp\left(\sup_{2^j\leq t\leq2^{j+1}}
\frac{|B_{k,\ell}(t)|}{t}>\varepsilon\right)
&\leq\frac{\E B_{k,\ell}(2^{j+1})^2}{\varepsilon^2 2^{2j}}
=\frac{2^{1-j}}{\varepsilon^2}.
\end{align*}
The bound is summable in $j$. Borel--Cantelli, applied to both sides
and all positive rational $\varepsilon$, therefore yields
$B_{k,\ell}(t)/t\to0$ almost surely as $t\to\infty$.
Since $\omega_L,\omega_R<\infty$, it follows that
\begin{align*}
\frac{\mathcal G_k(u)}{|u|}&\longrightarrow1
\qquad\text{almost surely as }|u|\to\infty.
\end{align*}
Thus $\mathcal G_k(u)\geq|u|/2$ for all sufficiently large $|u|$,
so $\mathcal G_k(u)\to+\infty$. Because $\mathcal G_k(0)=0$ and its
paths are continuous, its global minimum is attained on a compact
interval. For $u<v$,
\begin{align*}
\operatorname{Var}\{\mathcal G_k(v)-\mathcal G_k(u)\}
&=\begin{cases}
4\omega_L^2(v-u),&u<v\leq0,\\
4\omega_R^2(v-u),&0\leq u<v,\\
4\{\omega_L^2|u|+\omega_R^2v\},&u<0<v,
\end{cases}
\end{align*}
which is strictly positive. Lemma~2.6 of \citet{KimPollard1990},
applied to the continuous Gaussian process $-\mathcal G_k$ on $\R$,
therefore gives almost sure uniqueness.

\medskip
\noindent\textbf{Step 3: Argmin convergence.}
The error $\kappa^2(\widehat\eta_k-\eta_k)$ exactly minimizes $G_n$
over its domain, since $r_n(\kappa^2r)=r$ for every integer $r$,
and is tight by \Cref{thm:local-refinement} with $\sigma$ fixed.
Extending $G_n$ outside its domain by its minimum plus one preserves
both the minimizers and the compact limits in Step~1.
The limit paths are continuous, and Step~2 gives a unique finite,
hence tight, minimizer. Applying the argmax theorem
\citep[Theorem~3.2.2]{vanDerVaartWellner1996} to the negative criterion gives
\begin{align*}
\kappa_k^2(\widehat\eta_k-\eta_k)
&\xrightarrow{d}\operatorname*{arg\,min}_{u\in\R}\mathcal G_k(u).
\end{align*}
\end{proof}
\subsection{Variance estimation and confidence intervals}

We first prove consistency of the Bartlett estimators
\citep{NeweyWest1987} defined in \eqref{eq:ci-left-long-run-variance}
and \eqref{eq:ci-right-long-run-variance}.
The proof controls both the selected interval and the
estimated direction uniformly, because they are computed from the same
observations as the variance estimates.

\begin{lemma}[Consistency of the projected long-run variances]
\label{lem:inference-hac}
Under the conditions of \Cref{thm:vanishing-localization-limit},
\begin{align*}
\widehat\omega_{k,L}^2&\xrightarrow{\Pp}\omega_{k,L}^2,
&\widehat\omega_{k,R}^2&\xrightarrow{\Pp}\omega_{k,R}^2,
\end{align*}
where the estimators are defined in
\eqref{eq:ci-left-long-run-variance} and
\eqref{eq:ci-right-long-run-variance}.
\end{lemma}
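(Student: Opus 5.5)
We treat the right-hand estimator; the left one is symmetric after interchanging $L$ and $R$. The plan compares $\widehat\omega_{k,R}^2$ with an oracle Bartlett estimator built from the population direction $\nu_k$, the population mean $\mu_R$, and a deterministic window, and then shows that the oracle is consistent.

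\textbf{Reduction to deterministic windows.} Work on $\mathcal E_{\mathrm{unif}}$, whose complement has probability at most $n^{-5}$. On this event $\widetilde K=K$ and \Cref{lem:coarse-local-geometry} gives $(a_k,b_k]\subseteq(\eta_k,\eta_{k+1}]$ with $b_k-a_k\geq 7\Delta_k/16$. The endpoints are data-driven, so they cannot be conditioned on. Since $\abs{a_k-(3\eta_k+\eta_{k+1})/4}$ and $\abs{b_k-(\eta_k+3\eta_{k+1})/4}$ are at most $3\Delta_{\min}/64+1$, the window lies in a deterministic family. It is contained in the fixed interval $I^\circ=(\eta_k+13\Delta_k/64,\ \eta_{k+1}-13\Delta_k/64]$ and contains a fixed core interval of length at least $c\Delta_k$. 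Every bound below will be proved uniformly over all subintervals $(a,b]\subseteq I^\circ$ with $b-a\geq c\Delta_k$. This costs at most a polynomial number $O(n^2)$ of intervals, which the moment bounds must absorb.

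\textbf{Decomposition.} Write $W_t=(\widehat\nu_k^\top Y_t)^p$, $W_t^*=(\nu_k^\top Y_t)^p$, and $Z_{k,t}=W_t^*-\mu_R$, with $\mu_R=\E W_t^*$ on this regime. Also write $\bar W$ for the window mean. Then
\begin{align*}
W_t-\bar W
&=Z_{k,t}
+(W_t-W_t^*)
-(\bar W-\mu_R),
\end{align*}
and the Bartlett form is bilinear with total weight at most $2\ell_n+1$. Let $\widetilde\omega^2_{(a,b]}$ denote the oracle Bartlett form computed from $Z_{k,t}$. The difference $\widehat\omega_{k,R}^2-\widetilde\omega^2_{(a_k,b_k]}$ is then controlled by cross terms of three kinds.

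The \emph{mean term} is $(\bar W-\mu_R)$. It is bounded by $\epsilon_n$ plus the direction error, using the argument of Step~1 of \Cref{lem:uniform-oracle-equivalence}. Its contribution is of order $\ell_n(\epsilon_n+\epsilon_n^2)\cdot O_{\Pp}(1)$. I expect this to vanish because $\epsilon_n\lesssim(\log n)^{-c}$ fails only in boundary cases, and \Cref{ass:local-refinement-signal} gives $\epsilon_n\sqrt{D}/\kappa\to0$, hence $\epsilon_n\to0$. To handle the factor $\ell_n=(\log n)^{(1+1/\gamma)/2}$, I would instead bound this term through partial sums. One has $\sum_{|t-r|\leq\ell_n}$ of $(\bar W-\mu_R)Z_r$, which equals $(\bar W-\mu_R)$ times at most $(2\ell_n+1)\max_{\text{blocks}}\abs{\sum Z}$. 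Normalised by $b-a$, this is $\epsilon_n\ell_n\sqrt{\log n/\Delta}$, which is small.

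The \emph{direction term} is $W_t-W_t^*=\ip{Y_t^{\otimes p}}{\widehat\nu_k^{\otimes p}-\nu_k^{\otimes p}}_{\F}$. Its second moment is bounded using \Cref{lem:tensor-product-tail} together with the supremum over $\norm h_2\leq\rho_{n,k}$, as in \eqref{eq:improved-local-maximal}. This gives $\sup_h\abs{W_t-W_t^*}\lesssim \sum_j(\rho_{n,k}\sqrt{D+1})^j\cdot\text{poly}(\norm{Y_t})/\sqrt{D}^{\,j}$ in $L^q$. Averaging over the window, the cross terms are $O_{\Pp}(\ell_n\rho_{n,k}\sqrt{D+1})$.

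\textbf{Main obstacle.} The $\ell_n$ factor in the direction term is the step I expect to be hardest. The rate $\rho_{n,k}\sqrt{D+1}\to0$ from \eqref{eq:refinement-improved-direction-rate} comes with no explicit rate, so a logarithmic factor could dominate it. My first attempt avoids multiplying by $\ell_n$. I would use Cauchy--Schwarz on the Bartlett quadratic form, which is positive semidefinite because Bartlett weights give a nonnegative kernel. This yields $\abs{\widehat\omega^2-\widetilde\omega^2}\leq \widehat A+2\sqrt{\widetilde\omega^2\widehat A}$, where $\widehat A$ is the Bartlett form of the perturbation $(W_t-W_t^*)-(\bar W-\mu_R)$. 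The positive-semidefinite form is in turn bounded by $\sum_{h}\abs{\hat\gamma(h)}$, but this gives the same $\ell_n$ factor. If the factor cannot be removed, I would fall back on the stronger fact that the perturbation inherits mixing and has long-run variance of order $(\rho_{n,k}\sqrt{D+1})^2$. A mixing $L^4$ bound via \Cref{lem:mixing-fourth-maximal} would then give $\widehat A=O_{\Pp}((\rho_{n,k}\sqrt{D+1})^2)$ without the $\ell_n$ factor.

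\textbf{Oracle consistency.} Over the deterministic family of windows, I would show $\E\widetilde\omega^2_{(a,b]}\to\omega_{k,R}^2$. The Bartlett bias is $\sum_{h}\frac{h}{\ell_n+1}\abs{c(h)}+\sum_{h>\ell_n}\abs{c(h)}\to0$ by the covariance bound \eqref{eq:inference-covariance-bound}, combined with \eqref{eq:inference-lrv}. I would then show $\operatorname{Var}\widetilde\omega^2=O(\ell_n^2/\Delta)$ using mixing fourth-cumulant bounds, the standard Newey--West argument via Davydov's inequality. Uniformity over the $O(n^2)$ windows needs a stronger deviation bound than Chebyshev. For this I would use a Bernstein-type inequality for sub-Weibull mixing sums, as in \Cref{thm:cusum}, applied to the products $Z_tZ_{t+h}$ for each lag $h\leq\ell_n$. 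This gives a uniform error of order $\ell_n\sqrt{\log n/\Delta_{\min}}+\ell_n(\log n)^{C}/\Delta_{\min}\to0$, since $\Delta_{\min}\gtrsim D(D+\log n)$ up to polynomial log factors under \Cref{ass:local-refinement-signal}. Combining the three pieces gives $\widehat\omega_{k,R}^2\xrightarrow{\Pp}\omega_{k,R}^2$.
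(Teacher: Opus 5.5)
Your overall structure matches the paper's: an oracle comparison, a Cauchy--Schwarz comparison using positive semidefiniteness of the Bartlett form, bias plus fluctuation for the oracle, and uniformity over the data-driven window. However, two steps, as you set them up, do not close under \Cref{ass:local-refinement-signal}.

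\textbf{The mean term.} Let $N$ be the length of the relevant segment. Each interior row of Bartlett weights sums to $\ell_n+1$, so a constant shift $c$ in the centered series adds about $(\ell_n+1)c^2$ to the estimator. Your partial-sum argument handles only the cross term with $Z_{k,t}$, not this square. With your bound $\abs{\bar W-\mu_R}\leq\epsilon_n+p\norm{\mathcal M_R^{(p)}}_{\op}\rho_{n,k}$, the square is of order $(\ell_n+1)(\epsilon_n+\rho_{n,k})^2$, and nothing forces this to vanish: $\rho_{n,k}\sqrt{D+1}\to0$ holds without a rate, and $D$ may stay fixed. For example, $D=1$, $p=2$, $\gamma=1$, $\kappa_k\asymp1/\log\log n$ and $\Delta_{\min}\asymp(\log n)^2(\log\log n)^2$ satisfy \Cref{ass:local-refinement-signal}. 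Yet there $\ell_n=\lfloor\log n\rfloor$, $\epsilon_n\asymp(\log\log n)^{-2}$ and $\rho_{n,k}\asymp(\log\log n)^{-1}$, so your bound diverges.

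The missing observation is a cancellation. The deterministic offset $\ip{\mathcal M_R^{(p)}}{\widehat\nu_k^{\otimes p}-\nu_k^{\otimes p}}_{\F}$, which you can only bound by a multiple of $\rho_{n,k}$, appears in both $W_t-W_t^*$ and $\bar W-\mu_R$, and it cancels. With $z_t(v)=(v^\top Y_t)^p-\E(v^\top Y_t)^p$, one has $W_t-\bar W=z_t(\widehat\nu_k)-\bar z_I(\widehat\nu_k)$. Applying \Cref{lem:local-direction-maximal} to prefix sums over the whole segment gives $\sup_{I,\norm h_2\leq\rho_{n,k}}\abs{\bar z_I(\nu_k+h)}=O_{\Pp}(N^{-1/2})$. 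The centering therefore costs only $O_{\Pp}(\ell_n/N)=o_{\Pp}(1)$. The tensor radius $\epsilon_n$, inflated by the net argument, is far too crude for this term.

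\textbf{Uniformity over the window.} You take a union bound over $O(n^2)$ windows and $2\ell_n+1$ lags, with a Bernstein inequality for the $\psi_{1/p}$ products $Z_{k,t}Z_{k,t+h}$. This yields an error of order $\ell_n\sqrt{\log n/\Delta_{\min}}+\ell_n(\log n)^{p+1/\gamma}/\Delta_{\min}$. The hypotheses, together with $\kappa_k\leq C\sigma^p$, give $\Delta_{\min}\gg(\log n)^{p/2+1/\gamma}$ and hence $\ell_n^2/\Delta_{\min}\to0$, which is exactly what $\ell_n$ is calibrated for. They do not give the extra powers of $\log n$ your bound needs. In the example above both of your terms diverge, and so does your cross-term estimate $\epsilon_n\ell_n\sqrt{\log n/\Delta_{\min}}$.

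The device you are missing is to write the estimator as $\frac{N}{\abs I}H_N(x\mathbf 1_I)$ with $H_N(x)=\{N(\ell_n+1)\}^{-1}\sum_r(\sum_{t=r-\ell_n}^{r}x_t)^2$, an average of squared window sums.
\begin{itemize}
\item For the oracle, group the squared window sums by the residue of $r$ modulo $\ell_n+1$, so each group consists of disjoint windows. Apply the prefix-sum bound of \Cref{lem:mixing-fourth-maximal} to each group and handle the $O(\ell_n)$ boundary windows separately. This controls the supremum over all admissible windows at cost $O((\ell_n+1)N^{-1/2})$, with no logarithmic loss.
\item The same representation makes $H_N^{1/2}$ a seminorm, which is your Cauchy--Schwarz step.
\item It also removes your ``main obstacle''. \Cref{lem:local-direction-maximal} with $m=\ell_n+1$ bounds every window sum of $z(\nu_k+h)-z(\nu_k)$ in $L^4$ by $C\sqrt{\ell_n+1}\sum_{j=1}^p(\rho_{n,k}\sqrt{D+1})^j$, uniformly over $\norm h_2\leq\rho_{n,k}$. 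The factor $\ell_n+1$ then cancels the normalization.
\end{itemize}
Your fallback that ``the perturbation inherits mixing'' does not by itself cover the data-dependent $\widehat\nu_k$. What covers it is this supremum over the ball, taken window by window.
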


\begin{proof}[Proof of \Cref{lem:inference-hac}]
Fix $k$. We prove consistency of $\widehat\omega_{k,L}^2$;
the proof for $\widehat\omega_{k,R}^2$ is identical.
Relabel the left segment $(\eta_{k-1},\eta_k]$ as $\{1,\ldots,N\}$,
where $N=\Delta_{k-1}$. For deterministic $v\in\R^{D+1}$,
write $z(v)=(z_t(v))_{t=1}^N$, where
\begin{align*}
z_t(v)&=(v^\top Y_t)^p-\E\{(v^\top Y_t)^p\},
&Z_k&=z(\nu_k),
&\gamma_{n,h}&=\operatorname{Cov}(Z_{k,0},Z_{k,h}),
\end{align*}
using the stationary regime law to define $\gamma_{n,h}$.
Throughout, $I=(a,c]\subseteq\{1,\ldots,N\}$ has $|I|\geq7N/16$,
and $\bar z_I(v)=|I|^{-1}\sum_{t\in I}z_t(v)$.
On $\mathcal E_{\mathrm{unif}}$, the relabelled left fitting interval is
\begin{align*}
\widehat I&=(a_{k-1}-\eta_{k-1},\,b_{k-1}-\eta_{k-1}]
\subseteq\{1,\ldots,N\},
&|\widehat I|&\geq7N/16,
\end{align*}
by \eqref{eq:refinement-fitting-length}.
For $x\in\R^N$, extended by zero, define
\begin{align*}
H_N(x)&=\frac1{N(\ell_n+1)}
\sum_{r=1}^{N+\ell_n}\left(\sum_{t=r-\ell_n}^{r}x_t\right)^2.
\end{align*}
Then $H_N^{1/2}$ is a seminorm, and
\begin{align*}
\frac N{|I|}H_N(x\mathbf1_I)
&=\frac1{|I|}\sum_{\substack{s,t\in I\\|s-t|\leq\ell_n}}
\left(1-\frac{|s-t|}{\ell_n+1}\right)x_sx_t.
\end{align*}
Since stationarity gives
\begin{align*}
z_t(v)-\bar z_I(v)
&=(v^\top Y_t)^p-|I|^{-1}\sum_{s\in I}(v^\top Y_s)^p,
\end{align*}
the estimator in \eqref{eq:ci-left-long-run-variance} is exactly
\begin{align*}
\widehat\omega_{k,L}^2
&=\frac N{|\widehat I|}
H_N\bigl([z(\widehat\nu_k)-\bar z_{\widehat I}(\widehat\nu_k)]
\mathbf1_{\widehat I}\bigr)
\qquad\text{on }\mathcal E_{\mathrm{unif}}.
\end{align*}
Its nonnegative square root is $\widehat\omega_{k,L}$.
Step~1 proves consistency with the population direction and centering;
Step~2 bounds the effect of the fitted direction and empirical centering
in this identity.

\noindent\textbf{Step 1: Consistency for the true projected noise.}
The signal condition and $\kappa_{\min}\leq C\sigma^p$ give
$\ell_n\to\infty$ and $\ell_n^2/N\to0$.
For the oracle form $H_N(Z_k\mathbf1_I)$, define the squared
full-window sums
\begin{align*}
V_r&=\frac1{\ell_n+1}
\left(\sum_{t=r-\ell_n}^{r}Z_{k,t}\right)^2,
\qquad \ell_n+1\leq r\leq N.
\end{align*}
Integrating \eqref{eq:dependent-subweibull-sum-tail} at order $16$
gives $\sup_r\|V_r\|_{L^8}\leq C$.
Within each residue class modulo $\ell_n+1$, the windows are disjoint
and their lag-$h$ separation is at least $h$.
Applying \Cref{lem:mixing-fourth-maximal} to each class and summing gives
\begin{align}
\left\|\max_{\ell_n+1\leq s\leq N}
\left|\sum_{r=\ell_n+1}^{s}(V_r-\E V_r)\right|\right\|_{L^4}
&\leq C\sqrt{N(\ell_n+1)}.
\label{eq:inference-window-maximal}
\end{align}
Only the at most $2\ell_n$ windows crossing the endpoints of $I$
separate $H_N(Z_k\mathbf1_I)$ from its full-window sum. Hence
\Cref{lem:mixing-fourth-maximal} also gives
\begin{align}
&\left\|\sup_I\left|H_N(Z_k\mathbf1_I)
-\frac1N\sum_{r=a+\ell_n+1}^{c}V_r\right|\right\|_{L^2}\notag\\
&\quad\leq\frac2N\left\|
\max_{1\leq r\leq N+\ell_n}
\max_{\substack{J\subseteq[r-\ell_n,r]\cap[1,N]\\J\text{ an interval}}}
\left|\sum_{t\in J}Z_{k,t}\right|^2\right\|_{L^2}
\leq C\frac{\ell_n+1}{\sqrt N}.
\label{eq:inference-window-boundary}
\end{align}
Here an interval sum is a difference of two prefix sums, and
$\|\max_r A_r\|_{L^2}\leq(\sum_r\|A_r\|_{L^2}^2)^{1/2}$.
The covariance bound in \eqref{eq:inference-covariance-bound} yields
\begin{align*}
\E V_r
&=\sum_{|h|\leq\ell_n}
\left(1-\frac{|h|}{\ell_n+1}\right)\gamma_{n,h}
=v_{L,n}^2+O((\ell_n+1)^{-1}),
&v_{L,n}^2&\longrightarrow\omega_{k,L}^2,
\end{align*}
where \eqref{eq:inference-lrv} identifies the covariance sum.
Since the full-window sum contains $|I|-\ell_n$ terms,
\eqref{eq:inference-window-maximal}--\eqref{eq:inference-window-boundary}
imply
\begin{align}
\left\|\sup_I
\left|\frac N{|I|}H_N(Z_k\mathbf1_I)-v_{L,n}^2\right|\right\|_{L^2}
&\leq C\left\{\frac1{\ell_n+1}
+\frac{\ell_n+1}{\sqrt N}\right\}=o(1).
\label{eq:inference-bartlett-uniform}
\end{align}
Markov's inequality proves uniform oracle consistency.

\noindent\textbf{Step 2: Estimated direction and sample centering.}
Apply \Cref{lem:local-direction-maximal} within each window and on the
whole segment, respectively, to obtain
\begin{align*}
\E\sup_{I,\,\|h\|_2\leq\rho_{n,k}}
H_N\bigl([z(\nu_k+h)-z(\nu_k)]\mathbf1_I\bigr)
&\leq C\left\{\sum_{j=1}^p
(\rho_{n,k}\sqrt{D+1})^j\right\}^2=o(1),\\
\left\|\sup_{I,\,\|h\|_2\leq\rho_{n,k}}
|\bar z_I(\nu_k+h)|\right\|_{L^4}
&\leq\frac C{\sqrt N}(1+\rho_{n,k}\sqrt{D+1})^p.
\end{align*}
Write
$x_{I,h}=[z(\nu_k+h)-\bar z_I(\nu_k+h)]\mathbf1_I$;
all suprema over $(I,h)$ below use the intervals above and
$\|h\|_2\leq\rho_{n,k}$.
Expanding the squares in $H_N(\mathbf1_I)$, a pair $s,t\in I$
appears in $\ell_n+1-|s-t|$ windows when $|s-t|\leq\ell_n$,
and in none otherwise. Therefore,
\begin{align*}
H_N(\mathbf1_I)
&=\frac1N\sum_{s\in I}
\sum_{\substack{t\in I\\|t-s|\leq\ell_n}}
\left(1-\frac{|t-s|}{\ell_n+1}\right)\\
&\leq\frac{|I|}{N}
\left\{1+2\sum_{j=1}^{\ell_n}
\left(1-\frac{j}{\ell_n+1}\right)\right\}
=\frac{|I|}{N}(\ell_n+1)\leq\ell_n+1.
\end{align*}
For each fixed $s$, there is one term with $t=s$ and at most two
terms with $|t-s|=j\geq1$; adding any missing nonnegative weights
gives the first inequality. The equality uses
$\sum_{j=1}^{\ell_n}(1-j/(\ell_n+1))=\ell_n/2$,
and the last inequality uses $|I|\leq N$.
The difference between the centered vectors is
\begin{align*}
x_{I,h}-Z_k\mathbf1_I
&=[z(\nu_k+h)-Z_k]\mathbf1_I
-\bar z_I(\nu_k+h)\mathbf1_I.
\end{align*}
Apply the triangle inequality for $H_N^{1/2}$ and
$(a+b)^2\leq2a^2+2b^2$. Using $\|\cdot\|_{L^2}\leq\|\cdot\|_{L^4}$
for the sample-mean supremum, the two moment bounds above give
\begin{align*}
&\E\sup_{I,h}H_N(x_{I,h}-Z_k\mathbf1_I)\\
&\quad\leq2\E\sup_{I,h}
H_N([z(\nu_k+h)-Z_k]\mathbf1_I)
+2(\ell_n+1)
\left\|\sup_{I,h}|\bar z_I(\nu_k+h)|\right\|_{L^4}^{2}\\
&\quad\leq C\left\{\sum_{j=1}^p(\rho_{n,k}\sqrt{D+1})^j\right\}^{2}
+C\frac{\ell_n+1}{N}(1+\rho_{n,k}\sqrt{D+1})^{2p}
=o(1).
\end{align*}
The last equality uses $\rho_{n,k}\sqrt{D+1}\to0$ and
$\ell_n/N\to0$. Markov's inequality therefore yields, for every
$\varepsilon>0$,
\begin{align*}
\Pp\left\{\sup_{I,h}H_N(x_{I,h}-Z_k\mathbf1_I)^{1/2}
>\varepsilon\right\}
&\leq\varepsilon^{-2}\E\sup_{I,h}
H_N(x_{I,h}-Z_k\mathbf1_I)\longrightarrow0.
\end{align*}

By \eqref{eq:inference-bartlett-uniform} and
$v_{L,n}^2\to\omega_{k,L}^2$,
$\sup_I H_N(Z_k\mathbf1_I)=O_{\Pp}(1)$.
The reverse triangle inequality and
\mbox{$|a^2-b^2|=|a-b|\,|a+b|$} give, for any $x,y\in\R^N$,
\begin{align*}
|H_N(x)-H_N(y)|
&\leq\{2H_N(y)^{1/2}+H_N(x-y)^{1/2}\}
H_N(x-y)^{1/2}.
\end{align*}
Taking $x=x_{I,h}$ and $y=Z_k\mathbf1_I$ in this bound gives
\begin{align*}
&\sup_{I,h}|H_N(x_{I,h})-H_N(Z_k\mathbf1_I)|\\
&\quad\leq\left\{2\sup_I H_N(Z_k\mathbf1_I)^{1/2}
+\sup_{I,h}H_N(x_{I,h}-Z_k\mathbf1_I)^{1/2}\right\}\\
&\qquad\times\sup_{I,h}H_N(x_{I,h}-Z_k\mathbf1_I)^{1/2}
=\{O_{\Pp}(1)+o_{\Pp}(1)\}\,o_{\Pp}(1)=o_{\Pp}(1).
\end{align*}
Finally, $N/|I|\leq16/7$ and \eqref{eq:inference-bartlett-uniform} imply
\begin{align*}
&\sup_{I,h}\left|\frac N{|I|}H_N(x_{I,h})-v_{L,n}^2\right|\\
&\quad\leq\frac{16}{7}\sup_{I,h}
|H_N(x_{I,h})-H_N(Z_k\mathbf1_I)|
+\sup_I\left|\frac N{|I|}H_N(Z_k\mathbf1_I)-v_{L,n}^2\right|
=o_{\Pp}(1).
\end{align*}

A sign change of $\widehat\nu_k$ leaves the Bartlett estimate unchanged.
On $\mathcal E_{\mathrm{unif}}$, its sign-aligned version satisfies
$\|\widehat\nu_k-\nu_k\|_2\leq\rho_{n,k}$ by
\Cref{lem:stage-iii-direction-rate}.
Evaluating the uniform bounds at $I=\widehat I$ and using the estimator
identity above therefore gives
\begin{align*}
\widehat\omega_{k,L}^2-v_{L,n}^2&\xrightarrow{\Pp}0,
&v_{L,n}^2&\longrightarrow\omega_{k,L}^2,
\end{align*}
since $\Pp(\mathcal E_{\mathrm{unif}}^c)\to0$.
\end{proof}

\begin{proof}[Proof of \Cref{thm:ci-coverage}]
\noindent\textbf{Step 1: Consistent scales.}
Fix $k$ and write
$\widehat{\boldsymbol{\vartheta}}=(\widehat\omega_{k,L},\widehat\omega_{k,R})$.
On $\mathcal E_{\mathrm{unif}}$, \eqref{eq:estimated-leading-direction}
gives, for all sufficiently large $n$,
\begin{align*}
|\widehat\kappa_k-\kappa_k|
&\leq\|\widehat\Theta_k-\Theta_k\|_{\op}
\leq2\epsilon_n=o(\kappa_k),
&\widehat\kappa_k&\geq\kappa_k/2>0.
\end{align*}
Together with \Cref{lem:inference-hac} and
$\Pp(\mathcal E_{\mathrm{unif}}^c)\leq n^{-5}$, this yields
\begin{align*}
\widehat\kappa_k/\kappa_k&\xrightarrow{\Pp}1,
&\widehat{\boldsymbol{\vartheta}}
&\xrightarrow{\Pp}(\omega_{k,L},\omega_{k,R})\in(0,\infty)^2.
\end{align*}
The corresponding failure probability satisfies
\begin{align*}
&\Pp\bigl((\mathcal E_{\mathrm{unif}}\cap
\{\widehat{\boldsymbol{\vartheta}}\in(0,\infty)^2\})^c\bigr)\\
&\quad\leq\Pp(\mathcal E_{\mathrm{unif}}^c)
+\sum_{\ell\in\{L,R\}}\Pp\left(\mathcal E_{\mathrm{unif}}\cap
\left\{|\widehat\omega_{k,\ell}-\omega_{k,\ell}|
\geq\omega_{k,\ell}/2\right\}\right)=o(1).
\end{align*}
For fitted quantities below, restrict to
$\mathcal E_{\mathrm{unif}}\cap\{\widehat{\boldsymbol{\vartheta}}\in(0,\infty)^2\}$;
its probability tends to one, so the weak limits are unchanged.

\medskip
\noindent\textbf{Step 2: Continuity of the Brownian quantiles.}
Use one pair of independent standard Brownian motions $B_{k,L},B_{k,R}$
to define, for $\boldsymbol{\vartheta}=(\vartheta_L,\vartheta_R)\in(0,\infty)^2$,
\begin{align*}
\mathcal H_{\boldsymbol{\vartheta}}(-t)&=t+2\vartheta_LB_{k,L}(t),
&\mathcal H_{\boldsymbol{\vartheta}}(t)&=t+2\vartheta_RB_{k,R}(t),\qquad t\geq0.
\end{align*}
At the population scales this equals $\mathcal G_k$ in
\eqref{eq:vanishing-brownian-process}.
For each fixed $\boldsymbol{\vartheta}$, define
\begin{align*}
U(\boldsymbol{\vartheta})
&=\operatorname*{arg\,min}_{u\in\R}\mathcal H_{\boldsymbol{\vartheta}}(u).
\end{align*}
This minimizer is almost surely finite and unique by the proof of
\Cref{thm:vanishing-localization-limit}. Its CDF and quantiles are
\begin{align*}
F_{\boldsymbol{\vartheta}}(x)&=\Pp\{U(\boldsymbol{\vartheta})\leq x\},
&q_a(\boldsymbol{\vartheta})&=\inf\{x:F_{\boldsymbol{\vartheta}}(x)\geq a\},\qquad a\in(0,1).
\end{align*}
The classical joint law of the depth and location of a drifted Brownian
minimum \citep{AbramsonEvans2014} gives independent minimum depths on
the two sides, with exponential rates $(2\vartheta_L^2)^{-1}$ and
$(2\vartheta_R^2)^{-1}$. Both depths are positive almost surely, so
$\Pp\{U(\boldsymbol{\vartheta})=0\}=0$.
For $t>0$, independence of the two halves yields the density
\begin{align*}
f_{\boldsymbol{\vartheta}}(t)
&=\int_0^\infty
\frac{m\exp\{-(m+t)^2/(8\vartheta_R^2t)\}}
{4\vartheta_R^3\sqrt{2\pi t^3}}
\left(1-e^{-m/(2\vartheta_L^2)}\right)\,dm>0.
\end{align*}
Here the fraction is the joint density of the right minimum's depth $m$
and location $t$; the second factor is the probability that the left
minimum has smaller depth. Exchanging $L$ and $R$ gives
$f_{\boldsymbol{\vartheta}}(-t)$.
Thus $U(\boldsymbol{\vartheta})$ has no atoms, and
\begin{align*}
F_{\boldsymbol{\vartheta}}(y)-F_{\boldsymbol{\vartheta}}(x)
&=\int_x^y f_{\boldsymbol{\vartheta}}(t)\,dt>0,\qquad x<y.
\end{align*}
Hence $F_{\boldsymbol{\vartheta}}$ is continuous and strictly increasing.

For $\boldsymbol{\vartheta}_j\to\boldsymbol{\vartheta}$, the common Brownian paths give,
almost surely, for every $T>0$,
\begin{align*}
\sup_{|u|\leq T}|\mathcal H_{\boldsymbol{\vartheta}_j}(u)
-\mathcal H_{\boldsymbol{\vartheta}}(u)|
&\leq2\|\boldsymbol{\vartheta}_j-\boldsymbol{\vartheta}\|_\infty
\max_{\ell\in\{L,R\}}\sup_{0\leq t\leq T}|B_{k,\ell}(t)|
\longrightarrow0.
\end{align*}
The coefficient vectors are bounded, and $B_{k,\ell}(t)/t\to0$
almost surely by the proof of \Cref{thm:vanishing-localization-limit}.
Thus, for some finite $C$ and almost surely finite $T_0$,
\begin{align*}
\mathcal H_{\boldsymbol d}(\pm t)
&\geq t-2C\max_{\ell\in\{L,R\}}|B_{k,\ell}(t)|\geq t/2,
\quad t\geq T_0,\quad
\boldsymbol d\in\{\boldsymbol{\vartheta},\boldsymbol{\vartheta}_1,\boldsymbol{\vartheta}_2,\ldots\}.
\end{align*}
Since $\mathcal H_{\boldsymbol d}(0)=0$, all minimizers lie in
$[-T_0,T_0]$. Continuity and uniqueness give, for every $\varepsilon>0$,
\begin{align*}
\inf_{\substack{|u|\leq T_0\\|u-U(\boldsymbol{\vartheta})|\geq\varepsilon}}
\{\mathcal H_{\boldsymbol{\vartheta}}(u)
-\mathcal H_{\boldsymbol{\vartheta}}(U(\boldsymbol{\vartheta}))\}&>0.
\end{align*}
Uniform convergence therefore implies $U(\boldsymbol{\vartheta}_j)\to
U(\boldsymbol{\vartheta})$ almost surely and
$F_{\boldsymbol{\vartheta}_j}(x)\to F_{\boldsymbol{\vartheta}}(x)$ for every $x$.
For $a\in(0,1)$ and $\varepsilon>0$, CDF convergence and strict increase
give brackets proving continuity of $q_a$:
\begin{align*}
F_{\boldsymbol{\vartheta}}(q_a(\boldsymbol{\vartheta})-\varepsilon)
&<a<F_{\boldsymbol{\vartheta}}(q_a(\boldsymbol{\vartheta})+\varepsilon),\\
q_a(\boldsymbol{\vartheta})-\varepsilon
&<q_a(\boldsymbol{\vartheta}_j)\leq q_a(\boldsymbol{\vartheta})+\varepsilon
\qquad\text{for all sufficiently large }j.
\end{align*}

\medskip
\noindent\textbf{Step 3: Coverage.}
Take $\boldsymbol{\vartheta}=(\omega_{k,L},\omega_{k,R})$.
The Brownian motions in \eqref{eq:ci-simulated-process} are independent
of the observations, so Steps~1--2 give
$\widehat q_{k,a}=q_a(\widehat{\boldsymbol{\vartheta}})
\xrightarrow{\Pp}q_a(\boldsymbol{\vartheta})$ for
$a\in\{\alpha/2,1-\alpha/2\}$.
Together with \Cref{thm:vanishing-localization-limit}, Slutsky's theorem yields
\begin{align*}
\bigl(\widehat\kappa_k^2(\widehat\eta_k-\eta_k),
\widehat q_{k,\alpha/2},\widehat q_{k,1-\alpha/2}\bigr)
&\xrightarrow{d}\bigl(U(\boldsymbol{\vartheta}),
q_{\alpha/2}(\boldsymbol{\vartheta}),q_{1-\alpha/2}(\boldsymbol{\vartheta})\bigr).
\end{align*}
On the restriction in Step~1, $\widehat\kappa_k>0$, so
\eqref{eq:ci-confidence-interval} gives
\begin{align*}
\eta_k\in\mathcal J_{k,\alpha}
&\quad\Longleftrightarrow\quad
\widehat q_{k,\alpha/2}
\leq\widehat\kappa_k^2(\widehat\eta_k-\eta_k)
\leq\widehat q_{k,1-\alpha/2}.
\end{align*}
Since $\Pp\{U(\boldsymbol{\vartheta})=q_a(\boldsymbol{\vartheta})\}=0$
and $F_{\boldsymbol{\vartheta}}(q_a(\boldsymbol{\vartheta}))=a$,
the limiting event has boundary probability zero. Therefore,
\begin{align*}
&\Pp\{\mathcal E_{\mathrm{unif}},\ \widehat{\boldsymbol{\vartheta}}\in(0,\infty)^2,\
\eta_k\in\mathcal J_{k,\alpha}\}\\
&\qquad\longrightarrow
F_{\boldsymbol{\vartheta}}(q_{1-\alpha/2}(\boldsymbol{\vartheta}))
-F_{\boldsymbol{\vartheta}}(q_{\alpha/2}(\boldsymbol{\vartheta}))=1-\alpha.
\end{align*}
The excluded probability is $o(1)$ by Step~1, proving the claim.
\end{proof}

\section{Auxiliary results}

\subsection{Uniform empirical-moment and tensor CUSUM bounds}

The following maximal deviation bound is a direct consequence of
Theorem~1 and Remarks~1--3 of \citet{MerlevedePeligradRio2011}.

\begin{proposition}[Sub-Weibull deviation under generalized-exponential
$\alpha$-mixing]
\label{prop:subweibull-sum}
Let $N\geq1$, and let $\{V_i\}_{i = 1}^N$ be centered random variables.  Suppose
the $\alpha$-mixing coefficients
\(\alpha_V(\ell)\) of $\{V_i\}_{i = 1}^N$, defined as in
\eqref{eq:alpha-mixing}, satisfy
\begin{align*}
\alpha_V(\ell)
&\leq
b_0\exp(-b_1\ell^{\gamma_1}),
\qquad 1\leq \ell<N,
\end{align*}
for some \(b_0,b_1>0\) and \(\gamma_1\in(0,\infty]\). When
$\gamma_1=\infty$, this bound means that the variables are independent.
Let
$\gamma_2\in(0,\infty]$ and
define the uniform tail envelope by
\begin{align*}
\sigma_{V,\gamma_2}
&=
\begin{cases}
\displaystyle
\max_{1\leq t\leq N}\norm{V_t}_{\psi_{\gamma_2}},
&\gamma_2<\infty,\\[1ex]
\displaystyle
\max_{1\leq t\leq N}\norm{V_t}_{\infty},
&\gamma_2=\infty,
\end{cases}
<\infty.
\end{align*}
Assume that
\begin{align*}
\bar\gamma
&=
\left(
\frac1{\gamma_1}+\frac1{\gamma_2}
\right)^{-1}
<1,
\end{align*}
where $1/\infty=0$; the boundary case
$(\gamma_1,\gamma_2)=(\infty,1)$ is also allowed.
Then, for every \(x\geq1\), with
probability at least \(1-e^{-x}\),
\begin{align}
\max_{1\leq m\leq N}
\abs{\sum_{t=1}^m V_t}
&\leq
C\sigma_{V,\gamma_2}
\left\{
\sqrt{Nx}
+
(x+\log N)^{1/\gamma_1+1/\gamma_2}
\right\}.
\label{eq:dependent-subweibull-sum-tail}
\end{align}
Here $C>0$ depends only on $\gamma_1,\gamma_2,b_0,b_1$; when
$\gamma_2=\infty$, the dependence on $\gamma_2$ is omitted, and when
$\gamma_1=\infty$, the dependence on $\gamma_1,b_0,b_1$ is omitted.
\end{proposition}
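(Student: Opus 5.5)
The statement is described as a direct consequence of Theorem~1 and Remarks~1--3 of \citet{MerlevedePeligradRio2011}, so the plan is to reduce to that Bernstein-type inequality for $\alpha$-mixing sequences with sub-Weibull tails. Then convert its tail form into the high-probability form \eqref{eq:dependent-subweibull-sum-tail}. First I normalize by $\sigma_{V,\gamma_2}$, so that $\max_t\norm{V_t}_{\psi_{\gamma_2}}\leq1$. The $\psi_\beta$ norm used here, defined through $L^q$ growth, is equivalent up to constants to the tail condition $\Pp(\abs{V_t}>u)\leq\exp(1-u^{\gamma_2}/c)$ that Merlev\`ede--Peligrad--Rio assume. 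Their mixing hypothesis is exactly $\alpha(\ell)\leq b_0\exp(-b_1\ell^{\gamma_1})$, and the requirement $\bar\gamma<1$ is theirs.

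Their Theorem~1 gives, for all $u>0$,
\[
\Pp\Bigl(\max_{m\leq N}\abs{S_m}\geq u\Bigr)\leq N\exp\Bigl(-\tfrac{u^{\bar\gamma}}{C}\Bigr)+\exp\Bigl(-\tfrac{u^2}{C(1+NV)}\Bigr)+\exp\Bigl\{-\tfrac{u^2}{CN}\exp\Bigl(\tfrac{u^{\bar\gamma(1-\bar\gamma)}}{C(\log u)^{\bar\gamma}}\Bigr)\Bigr\}.
\]
Here $V$ is bounded by a constant, using Davydov's inequality and the summability of $\alpha(\ell)^{1/2}$ together with the uniform $L^q$ bounds. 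I take $u=C'\{\sqrt{Nx}+(x+\log N)^{1/\bar\gamma}\}$, with $1/\bar\gamma=1/\gamma_1+1/\gamma_2$.

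The first term is at most $N\exp(-(C')^{\bar\gamma}(x+\log N)/C)\leq e^{-x}/3$ for $C'$ large. The second term is at most $e^{-x}/3$ because $u^2\geq C'^2Nx$. The third term is dominated by the second whenever the inner exponential is at least one, which holds once $u$ exceeds a constant. For small $u$ one can enlarge $C'$, since $x\geq1$. Summing gives the claim with probability at least $1-e^{-x}$.

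The boundary cases need separate treatment.

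When $\gamma_1=\infty$ (independence) and $\gamma_2\leq1$, I instead use the maximal Bernstein/sub-Weibull inequality for independent sums: a Lévy--Ottaviani maximal reduction plus the Kuchibhotla--Chakrabortty concentration. This gives $\sqrt{Nx}+(x+\log N)^{1/\gamma_2}$, with the $\log N$ coming from a union bound on the largest summand. The allowed boundary $(\infty,1)$ is the usual Bernstein inequality.

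When $\gamma_2=\infty$ (bounded summands), the remarks of Merlev\`ede--Peligrad--Rio cover $\bar\gamma=\gamma_1<1$ directly.

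Independence of constants from $N$ follows because all parameters are fixed.

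The main obstacle is bookkeeping: checking that the third term of Theorem~1 is genuinely absorbed for every $x\geq1$ and $N\geq1$, including small $N$, where $\log u$ may be nonpositive. I would handle this by treating $u\leq u_0$ trivially via enlarging $C'$, since the probability bound is vacuous there.
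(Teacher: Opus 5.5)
Your argument is correct, and in the main case $\gamma_1<\infty$ it is the paper's proof. You normalize by $\sigma_{V,\gamma_2}$, convert the moment-defined $\psi_{\gamma_2}$ bound into the tail condition of \citet{MerlevedePeligradRio2011}, and bound their variance proxy by a constant. You then substitute $u=C'\{\sqrt{Nx}+(x+\log N)^{1/\bar\gamma}\}$ into their Theorem~1 and absorb each of the three terms into $e^{-x}/3$, exactly as the paper does.

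The remaining differences are minor and both routes work.
\begin{itemize}
\item You bound the variance proxy with Davydov's inequality; the paper uses the quantile-integral bound from their Remark~3.
\item For $\gamma_1=\infty$ you use the Kuchibhotla--Chakrabortty inequality with a L\'evy--Ottaviani maximal reduction. The paper instead integrates the independent Bernstein tails (1.2)--(1.3) of the same reference, applies Doob's inequality to get an $L^q$ maximal bound for $q\geq2$, and uses Markov's inequality at $q=2\vee x$. Your route is equally valid and does not even need the $\log N$.
\end{itemize}

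There are two small bookkeeping points.
\begin{itemize}
\item As the paper notes, Theorem~1 there is formulated for an infinite sequence under $\tau$-mixing, with $\alpha$-mixing entering through their Remark~1. So the decay $b_0e^{-b_1\ell^{\gamma_1}}$ has to be transferred (after zero-extending the sequence); it is not ``exactly'' their hypothesis.
\item That theorem is stated only for $N\geq4$, so $N\leq3$ needs a separate trivial argument. The paper uses a union bound over at most three summands; zero-padding to length four also works. Your ``small $u$'' device does not address this, and it is unnecessary anyway, since $u\geq C'\sqrt{Nx}\geq C'$.
\end{itemize}
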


\begin{proof}[Proof of \Cref{prop:subweibull-sum}]
Theorem~1 of \citet{MerlevedePeligradRio2011} is stated for $N\geq4$.
To cover the remaining cases, note that, for $1\leq N\leq3$,
the tail envelope $\sigma_{V,\gamma_2}$ and a union bound over at most
three variables give the stated inequality.
Hence assume $N\geq4$ below.

If $\gamma_1=\infty$, $V_t$ are independent and
$\gamma_2\leq1$. The independent Bernstein bounds in equations
(1.2)--(1.3) of \citet{MerlevedePeligradRio2011}, integrated over the
tail and combined with Doob's maximal inequality for the partial-sum
martingale, give for every $q\geq2$
\[
\left\|\max_{1\leq m\leq N}
\left|\sum_{t=1}^m V_t\right|\right\|_{L^q}
\leq C\sigma_{V,\gamma_2}
\left\{\sqrt{Nq}+(q+\log N)^{1/\gamma_2}\right\}.
\]
Taking $q=2\vee x$ and applying Markov's inequality proves
\eqref{eq:dependent-subweibull-sum-tail}, with $1/\gamma_1=0$.
Hence assume $\gamma_1<\infty$ below.

If $\sigma_{V,\gamma_2}=0$, all $V_t$ vanish almost surely and the
conclusion is immediate. Otherwise, normalize by $\sigma_{V,\gamma_2}$
and extend the sequence by zero outside $\{1,\ldots,N\}$.
For $\gamma_2<\infty$, the moment definition in
Section~\ref{sec:notation} gives
\begin{align*}
\max_{1\leq t\leq N}
\left\|V_t/\sigma_{V,\gamma_2}\right\|_{L^q}
&\leq q^{1/\gamma_2},\qquad q\geq1.
\end{align*}
For $z\geq1$, Markov's inequality with $q=z^{\gamma_2}$ yields
\begin{align*}
\max_{1\leq t\leq N}
\Pp\bigl(|V_t|>e\,\sigma_{V,\gamma_2}z\bigr)
&\leq\left(\frac{q^{1/\gamma_2}}{ez}\right)^q
=e^{-z^{\gamma_2}}.
\end{align*}
For $0\leq z<1$, use the trivial bound by one. Thus, for all $z\geq0$,
\begin{align*}
\max_{1\leq t\leq N}
\Pp\bigl(|V_t|>e\,\sigma_{V,\gamma_2}z\bigr)
&\leq\exp(1-z^{\gamma_2}),
\end{align*}
which is the required uniform generalized exponential tail bound after a fixed
rescaling. If $\gamma_2=\infty$, then
$|V_t|/\sigma_{V,\infty}\leq1$ almost surely, so the normalized tail
probability is zero for $z\geq1$.

Theorem~1 of \citet{MerlevedePeligradRio2011} is stated
under $\tau$-mixing. As explained in their Remark~1, its dependence
condition also follows from $\alpha$-mixing: the comparison between
the two coefficients, together with the uniformly bounded second
moments of the normalized variables, gives
$\tau(\ell)\leq C\alpha_V(\ell)^{1/2}\leq C\exp(-c\ell^{\gamma_1})$.
Thus the required $\tau$-mixing decay holds with the same exponent
$\gamma_1$, after adjusting constants. The zero extension preserves
these bounds and has $\alpha_V(\ell)=0$ for $\ell\geq N$.

We next bound the variance proxy in Theorem~1 of
\citet{MerlevedePeligradRio2011}.
Let $Q(v)$ be the envelope of the upper quantile functions of the
absolute normalized variables. The tail condition gives
$Q(v)\leq C\{\log(e/v)\}^{1/\gamma_2}$, with $1/\infty=0$.
By Remark~3 of \citet{MerlevedePeligradRio2011}, the $\alpha$-mixing
bound therefore yields
\begin{align*}
V \leq C\left\{1+
\sum_{\ell\geq1}\int_0^{2\alpha_V(\ell)}Q(v)^2\,dv\right\} \leq C\left\{1+\sum_{\ell\geq1}
\alpha_V(\ell)^{1/2}\right\}
\leq C^{\prime}.
\end{align*}
Thus Theorem~1 of the cited paper applies with
$\bar\gamma=(1/\gamma_1+1/\gamma_2)^{-1}$.

For $x\geq1$, substitute
\[
u=C_*\left\{\sqrt{Nx}+
(x+\log N)^{1/\bar\gamma}\right\}
\]
into that theorem. For a sufficiently large $C_*$, its first
probability term is at most $e^{-x}/3$ because
$u^{\bar\gamma}\geq C(x+\log N)$; the second and third are each at
most $e^{-x}/3$ because $V\leq C^{\prime}$ and $u^2/N\geq Cx$.
Rescaling by $\sigma_{V,\gamma_2}$ and using
$1/\bar\gamma=1/\gamma_1+1/\gamma_2$ proves
\eqref{eq:dependent-subweibull-sum-tail}.
\end{proof}

The concentration arguments below use
Proposition~\ref{prop:subweibull-sum}. We record the tensor-product tail
and product-net reductions needed in the later proofs.

\begin{lemma}[Tensor-product tails]
\label{lem:tensor-product-tail}
Under Assumption~\ref{ass:data-conditions}, fix
$u_1,\ldots,u_p\in\Sph^D$. The variables
\begin{align*}
V_t=\prod_{a=1}^pu_a^\top Y_t-
\E\prod_{a=1}^pu_a^\top Y_t
\end{align*}
are centered and satisfy
$\sup_t\norm{V_t}_{\psi_{2/p}}\leq C_p\sigma^p$.
Their $\alpha$-mixing coefficients are at most
$b_0\exp(-b_1\ell^\gamma)$ for $1\leq\ell<n$.
\end{lemma}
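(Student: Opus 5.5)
The statement has three parts: centering, the $\psi_{2/p}$ tail bound, and the mixing bound. Centering holds by construction. The mixing claim is immediate: $Y_t=(1,X_t^\top)^\top$ is a deterministic measurable function of $X_t$, and so is $V_t$. Hence $\sigma(V_1,\ldots,V_s)\subseteq\sigma(X_1,\ldots,X_s)$, and likewise for the future blocks, so $\alpha_V(\ell)\leq\alpha_X(\ell)\leq b_0\exp(-b_1\ell^\gamma)$ directly from \eqref{eq:alpha-mixing} and \Cref{ass:data-conditions}(B).

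The substantive part is the tail bound, and I would first establish the augmented projection bound that the paper later cites. Write $u\in\Sph^D$ as $u=(u_0,w)$ with $w\in\R^D$ and $u_0^2+\norm w_2^2=1$. Then $u^\top Y_t=u_0+w^\top X_t$. For $w\neq0$, \Cref{ass:data-conditions}(A) gives $\norm{w^\top X_t}_{\psi_2}\leq\norm w_2\sigma\leq\sigma$. The constant satisfies $\norm{u_0}_{L^q}\leq1\leq\sigma$, so $q^{-1/2}\norm{u_0}_{L^q}\leq1$. Minkowski's inequality then gives $\norm{u^\top Y_t}_{L^q}\leq 2\sigma\sqrt q$ for all $q\geq1$, using $\sigma\geq1$. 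In particular $\norm{u^\top Y_t}_{L^p}\leq\sqrt{2p}\,\sigma$ up to the stated constant, which matches later uses.

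Next I would bound the product. By the generalized H\"older inequality with $p$ equal exponents,
\begin{align*}
\norm{\textstyle\prod_{a=1}^p u_a^\top Y_t}_{L^q}
\leq\prod_{a=1}^p\norm{u_a^\top Y_t}_{L^{pq}}
\leq(2\sigma\sqrt{pq})^p=(2\sqrt p)^p\sigma^p q^{p/2}.
\end{align*}
Dividing by $q^{p/2}=q^{1/(2/p)}$ and taking the supremum over $q\geq1$ gives $\norm{\prod_a u_a^\top Y_t}_{\psi_{2/p}}\leq(2\sqrt p)^p\sigma^p$.

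Centering is then handled by Jensen's inequality: $\abs{\E W}\leq\norm W_{L^q}$ for every $q$, so $\norm{W-\E W}_{L^q}\leq 2\norm W_{L^q}$. Therefore $\norm{V_t}_{\psi_{2/p}}\leq 2(2\sqrt p)^p\sigma^p=:C_p\sigma^p$, uniformly in $t$ and in the unit vectors. This works because the $\psi_\beta$ quantity in the paper's notation is defined directly through $L^q$ norms, so no Orlicz-norm equivalence is needed.

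The only mildly delicate point is the augmented coordinate. $Y_t$ is not centered, and its first entry is deterministic, so the sub-Gaussian bound is not inherited verbatim from $X_t$. The split $u=(u_0,w)$ together with $\sigma\geq1$ resolves this, at the cost of a factor $2$ in the constant. Everything else is routine.
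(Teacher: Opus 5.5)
Your proof is correct and follows the paper's argument essentially step for step. You split $u_a=(c_a,w_a)$ to handle the deterministic augmented coordinate, bound the $L^q$ moments of each projection through the moment form of $\norm{\cdot}_{\psi_2}$, apply H\"older across the $p$ factors, pay a factor two for centering, and inherit the mixing bound because $V_t$ is a measurable function of $X_t$. The only difference is the constant. The paper uses Cauchy--Schwarz, $|c_a|+\sigma\norm{w_a}_2\leq\sqrt{1+\sigma^2}\leq\sqrt2\,\sigma$, which gives exactly the bound $\norm{u^\top Y_t}_{L^p}\leq\sqrt{2p}\,\sigma$ quoted later. Your estimate $1+\sigma\sqrt q\leq2\sigma\sqrt q$ only gives $2\sqrt p\,\sigma$, so your remark ``in particular $\sqrt{2p}\,\sigma$'' does not literally follow, although this is irrelevant for $C_p$ in the lemma.
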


\begin{proof}[Proof of \Cref{lem:tensor-product-tail}]
Write $u_a=(c_a,w_a^\top)^\top\in\Sph^D$, with $w_a\in\R^D$.
Assumption~\ref{ass:data-conditions}, the triangle inequality, and
Cauchy--Schwarz give
\begin{align*}
\norm{u_a^\top Y_t}_{\psi_2}
&\leq |c_a|+\sigma\norm{w_a}_2
\leq\sqrt{1+\sigma^2}\leq\sqrt2\,\sigma,
\end{align*}
where $c_a^2+\norm{w_a}_2^2=1$ and $\sigma\geq1$.
The moment definition of the $\psi_2$ norm in
Section~\ref{sec:notation} therefore gives, for $u\geq1$,
\begin{align*}
\norm{u_a^\top Y_t}_{L^{pu}}
&\leq\sqrt{pu}\,\norm{u_a^\top Y_t}_{\psi_2}
\leq\sigma\sqrt{2pu}.
\end{align*}
See also \citet[Proposition~2.5.2(ii)]{Vershynin2018} for this moment
characterization of sub-Gaussian variables.
H\"older's inequality now yields
\begin{align*}
\norm{\prod_{a=1}^pu_a^\top Y_t}_{L^u}
&\leq(\sigma\sqrt{2pu})^p.
\end{align*}
Centering costs at most a factor two.  The mixing assertion follows because
measurable transformations cannot increase $\alpha$-mixing coefficients.
\end{proof}

For $I=(s,e]$, recall the empirical moment tensor
$\overline{\mathcal M}_I^{(p)}$ and the CUSUM statistic
$\mathcal T_I(q)$ defined in
Section~\ref{sec:preliminary-seeded-detection}.
\begin{lemma}[Fixed interval-direction concentration]
\label{lem:fixed-direction}
Let $I = (s,e] \subseteq (0, n]$ be a fixed nonempty interval, and let
$p \ge 2$ be fixed. Under Assumption~\ref{ass:data-conditions}, fix
\(u_1,\ldots,u_p\in\Sph^D\).
Then, for every \(x\geq 1\), with
probability at least \(1-e^{-x}\),
\begin{align}
\abs{
\ip{
\overline{\mathcal M}_I^{(p)}-\E\overline{\mathcal M}_I^{(p)}
}{
 u_1\otimes\cdots\otimes u_p
}_{\F}
}
\leq
C\sigma^p
\left\{
\sqrt{\frac{x}{|I|}}
+
\frac{(x+\log |I|)^{p/2+1/\gamma}}{|I|}
\right\}.
\label{eq:fixed-direction-tail}
\end{align}
\end{lemma}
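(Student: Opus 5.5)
The approach is to reduce to a scalar partial-sum bound and apply \Cref{prop:subweibull-sum} directly. Set
\begin{align*}
V_t=\prod_{a=1}^p u_a^\top Y_t-\E\prod_{a=1}^p u_a^\top Y_t,\qquad t\in I.
\end{align*}
Then the inner product in \eqref{eq:fixed-direction-tail} equals $|I|^{-1}\sum_{t\in I}V_t$. The bound therefore follows once we control $|\sum_{t\in I}V_t|$ by $C\sigma^p\{\sqrt{|I|x}+(x+\log|I|)^{p/2+1/\gamma}\}$ and divide by $|I|$.

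\textbf{Step 1: check the hypotheses of \Cref{prop:subweibull-sum}.} By \Cref{lem:tensor-product-tail}, the $V_t$ are centered, their tail envelope is $\sigma_{V,2/p}\le C_p\sigma^p$ in the $\psi_{2/p}$ norm, and their $\alpha$-mixing coefficients satisfy $\alpha_V(\ell)\le b_0\exp(-b_1\ell^\gamma)$ (restriction to the subsequence indexed by $I$ cannot increase them). We therefore apply the proposition with $N=|I|$, $\gamma_1=\gamma$ and $\gamma_2=2/p$. This gives $1/\gamma_1+1/\gamma_2=1/\gamma+p/2$, which is exactly the exponent in the claim.

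\textbf{Step 2: the condition $\bar\gamma<1$.} Since $p\ge2$ and $\gamma\ge1$, we have $p/2+1/\gamma\ge1$, with equality only when $p=2$ and $\gamma=\infty$. That equality case is the permitted boundary case $(\gamma_1,\gamma_2)=(\infty,1)$. In every other case $\bar\gamma<1$ holds strictly, so the proposition applies throughout.

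\textbf{Step 3: conclude.} The proposition bounds the maximum over $m\le N$, which in particular controls the full sum $m=N$. With probability at least $1-e^{-x}$ it gives
\begin{align*}
\Bigl|\sum_{t\in I}V_t\Bigr|\le C C_p\sigma^p\bigl\{\sqrt{|I|x}+(x+\log|I|)^{p/2+1/\gamma}\bigr\}.
\end{align*}
Dividing by $|I|$ yields \eqref{eq:fixed-direction-tail}, with the constant depending only on $p,\gamma,b_0,b_1$.

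The only point needing care is the bookkeeping in Step~2, namely confirming that $\bar\gamma<1$ or that we are in the allowed boundary case. The probabilistic content sits entirely in \Cref{prop:subweibull-sum} and \Cref{lem:tensor-product-tail}. A minor point is that the mixing coefficients of $\{V_t\}_{t\in I}$ are indexed relative to $I$; since $\sigma$-fields over a sub-window are sub-$\sigma$-fields, they are bounded by those of $X$.
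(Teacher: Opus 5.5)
Your proposal is correct and follows the paper's proof essentially step for step. It uses the same centered product variables $V_t$, the tail and mixing bounds from \Cref{lem:tensor-product-tail}, and \Cref{prop:subweibull-sum} with $N=|I|$, $\gamma_1=\gamma$, $\gamma_2=2/p$, evaluated at $m=N$ and then divided by $|I|$. Your explicit treatment of the equality case $p=2$, $\gamma=\infty$ as the permitted boundary case $(\gamma_1,\gamma_2)=(\infty,1)$ is, if anything, slightly more careful than the paper's brief remark on $\bar\gamma$.
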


\begin{proof}[Proof of Lemma~\ref{lem:fixed-direction}]
Set
\begin{align*}
V_t
&=
\prod_{j=1}^p\ip{Y_t}{u_j}
-\E\prod_{j=1}^p\ip{Y_t}{u_j}.
\end{align*}
Lemma~\ref{lem:tensor-product-tail} gives
$\max_{t\in I}\norm{V_t}_{\psi_{2/p}}\leq C_p\sigma^p$ and bounds the
$\alpha$-mixing coefficients of $\{V_t\}_{t\in I}$ by the bound in
Assumption~\ref{ass:data-conditions}.
Moreover,
\begin{align}
\ip{
\overline{\mathcal M}_I^{(p)}-\E\overline{\mathcal M}_I^{(p)}
}{
 u_1\otimes\cdots\otimes u_p
}_{\F}
&=
\frac{1}{\abs I}\sum_{t\in I}V_t.
\label{eq:direction-as-scalar-sum}
\end{align}
After reindexing $I$, apply Proposition~\ref{prop:subweibull-sum} with
\begin{align*}
\gamma_1&=\gamma,
&
\gamma_2&=2/p,
&
N&=\abs I,
&
\sigma_{V,\gamma_2}
&=\max_{t\in I}\norm{V_t}_{\psi_{\gamma_2}}
\leq C_p\sigma^p.
\end{align*}
Since $p\geq2$ and $\gamma\in[1,\infty)$,
$1/\gamma_1+1/\gamma_2=p/2+1/\gamma > 1$. Note that the temporally independent case, i.e.~$\gamma = \infty$, is also included
in the proposition. Therefore the case $m=N$ of
Proposition~\ref{prop:subweibull-sum} and
\eqref{eq:direction-as-scalar-sum}, after division by $N$, give
\eqref{eq:fixed-direction-tail}.
\end{proof}

\begin{lemma}[Approximating the tensor operator norm by a product net]
\label{lem:tensor-net}
Let \(\mathcal{M}\in(\R^{D+1})^{\otimes p}\), let
\(\mathcal N_\varepsilon\subset\Sph^D\) be a
finite Euclidean \(\varepsilon\)-net, and
suppose \(p\varepsilon<1\). Then
\begin{align}
\norm{\mathcal{M}}_{\op}
&\leq
\frac{1}{1-p\varepsilon}
\max_{v_1,\ldots,v_p\in\mathcal N_\varepsilon}
\abs{\ip{\mathcal{M}}{v_1\otimes\cdots\otimes v_p}_{\F}}.
\label{eq:tensor-net-bound}
\end{align}
Moreover, such a net exists with cardinality at most
$(1+2/\varepsilon)^{D+1}$.
\end{lemma}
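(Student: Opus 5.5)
The proof has two parts: a net-reduction inequality for the operator norm, and a volumetric bound on the net size.

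\textbf{Step 1: one factor at a time.} Set $A=\norm{\mathcal M}_{\op}$. By compactness of $(\Sph^D)^p$ and continuity of the multilinear form, some $u_1,\ldots,u_p\in\Sph^D$ attain
\[
A=\abs{\ip{\mathcal M}{u_1\otimes\cdots\otimes u_p}_{\F}}.
\]
For each $a$, pick $v_a\in\mathcal N_\varepsilon$ with $\norm{u_a-v_a}_2\leq\varepsilon$. Telescope one factor at a time:
\[
u_1\otimes\cdots\otimes u_p-v_1\otimes\cdots\otimes v_p
=\sum_{a=1}^p v_1\otimes\cdots\otimes v_{a-1}\otimes(u_a-v_a)\otimes u_{a+1}\otimes\cdots\otimes u_p .
\]

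\textbf{Step 2: bound the telescoping terms.} In the $a$th term, every factor except $u_a-v_a$ is a unit vector. If $u_a\ne v_a$, write $u_a-v_a=\norm{u_a-v_a}_2 w$ with $w\in\Sph^D$. Multilinearity and the definition of $\norm{\cdot}_{\op}$ (a supremum over unit factors) bound the pairing of $\mathcal M$ with that term by $\norm{u_a-v_a}_2A\leq\varepsilon A$. Summing the $p$ terms gives
\[
A\leq\max_{v_1,\ldots,v_p\in\mathcal N_\varepsilon}\abs{\ip{\mathcal M}{v_1\otimes\cdots\otimes v_p}_{\F}}+p\varepsilon A .
\]
Since $p\varepsilon<1$, rearranging yields \eqref{eq:tensor-net-bound}. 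Finiteness of $A$ is automatic in finite dimensions, so the rearrangement is legitimate.

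\textbf{Step 3: cardinality.} Take $\mathcal N_\varepsilon$ to be a maximal $\varepsilon$-separated subset of $\Sph^D$; maximality makes it an $\varepsilon$-net. The Euclidean balls of radius $\varepsilon/2$ centred at its points are disjoint and lie in the ball of radius $1+\varepsilon/2$ in $\R^{D+1}$. Comparing volumes gives
\[
\abs{\mathcal N_\varepsilon}\leq\left(\frac{1+\varepsilon/2}{\varepsilon/2}\right)^{D+1}=\left(1+\frac{2}{\varepsilon}\right)^{D+1}.
\]
A maximal separated set exists because, by this same volume bound, every separated set is finite.

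\textbf{Main obstacle.} The only delicate point is the bookkeeping in Step 2. The mixed terms contain a non-unit difference vector, and they must be rescaled so that the bound is $A$ times $\norm{u_a-v_a}_2$. This uses multilinearity in each slot separately; no symmetry of $\mathcal M$ is needed, which is why the net bound applies to general tensors.
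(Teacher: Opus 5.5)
Your proposal is correct and follows the paper's proof essentially step for step: the same one-factor-at-a-time telescoping of $u_1\otimes\cdots\otimes u_p-v_1\otimes\cdots\otimes v_p$, the same bound of $\varepsilon\norm{\mathcal M}_{\op}$ on each term, and the same rearrangement using $p\varepsilon<1$. The differences are cosmetic. You pick an attaining tuple by compactness, whereas the paper bounds an arbitrary tuple and then takes the supremum. You also carry out the volumetric argument for the net's cardinality, whereas the paper simply cites the standard covering-number bound.
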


\begin{proof}[Proof of Lemma~\ref{lem:tensor-net}]
Fix \(u_1,\ldots,u_p\in\Sph^D\). For each \(j\), choose
\(v_j\in\mathcal N_\varepsilon\) satisfying
\begin{align}
\norm{u_j-v_j}_2
&\leq
\varepsilon.
\label{eq:net-approximation-each-mode}
\end{align}
A telescoping expansion gives
\begin{align}
&\ip{\mathcal{M}}{u_1\otimes\cdots\otimes u_p}_{\F}
-
\ip{\mathcal{M}}{v_1\otimes\cdots\otimes v_p}_{\F}
\nonumber\\
&\qquad=
\sum_{j=1}^p
\ip{\mathcal{M}}{
 v_1\otimes\cdots\otimes v_{j-1}
 \otimes(u_j-v_j)
 \otimes u_{j+1}\otimes\cdots\otimes u_p
}_{\F}.
\label{eq:tensor-telescoping}
\end{align}
By multilinearity and the definition of \(\norm{\mathcal{M}}_{\op}\),
\(\norm{u_i}_2=\norm{v_i}_2=1\), and
\eqref{eq:net-approximation-each-mode}, every summand in
\eqref{eq:tensor-telescoping} has absolute value at most
\begin{align*}
\norm{\mathcal{M}}_{\op}
\prod_{i<j}\norm{v_i}_2
\norm{u_j-v_j}_2
\prod_{i>j}\norm{u_i}_2
&\leq
\varepsilon\norm{\mathcal{M}}_{\op}.
\end{align*}
Hence
\begin{align}
\abs{\ip{\mathcal{M}}{u_1\otimes\cdots\otimes u_p}_{\F}}
&\leq
\abs{\ip{\mathcal{M}}{v_1\otimes\cdots\otimes v_p}_{\F}}
+
p\varepsilon\norm{\mathcal{M}}_{\op}.
\label{eq:tensor-net-intermediate}
\end{align}
Taking the supremum over \(u_1,\ldots,u_p\in\Sph^D\) in
\eqref{eq:tensor-net-intermediate} and rearranging proves
\eqref{eq:tensor-net-bound}.  The cardinality assertion follows from the standard covering number bound
\citep[see e.g.][Corollary~4.2.13]{Vershynin2018}.
\end{proof}

\begin{theorem}[Uniform empirical-moment and tensor CUSUM bounds]
\label{thm:cusum}
Let $n\geq2$. Suppose Assumption~\ref{ass:data-conditions} holds and $p\geq2$ is fixed.
There exist constants $C_1,C_2>0$, depending only on
$p,\gamma,b_0,b_1$, such that the following holds. With $I=(s,e]\subseteq(0,n]$
ranging over all nonempty index intervals, define events
\begin{align}
\mathcal E_{\mathrm{unif}}
&=\left\{
\sup_{\abs I\geq1}
\frac{\norm{\overline{\mathcal M}_I^{(p)}-
\E\overline{\mathcal M}_I^{(p)}}_{\op}}
{\sqrt{(D+\log n)/\abs I}
+(D+\log n)^{p/2+1/\gamma}/\abs I}
\leq C_1\sigma^p
\right\},
\label{eq:uniform-moment-event}\\
\mathcal E_{\mathrm{cusum}}
&=\left\{
\sup_{\substack{I=(s,e]\\1\leq h\leq\abs I/2}}
\max_{s+h\leq q\leq e-h}
\frac{\norm{\mathcal T_I(q)-\E\mathcal T_I(q)}_{\op}}
{\sqrt{D+\log n}
+(D+\log n)^{p/2+1/\gamma}/\sqrt h}
\leq C_2\sigma^p
\right\}.
\label{eq:uniform-cusum-event}
\end{align}
Then $\mathcal E_{\mathrm{unif}}\subseteq\mathcal E_{\mathrm{cusum}}$ and
\begin{align*}
\Pp(\mathcal E_{\mathrm{unif}}\cap\mathcal E_{\mathrm{cusum}})
=\Pp(\mathcal E_{\mathrm{unif}})
\geq1-n^{-5}.
\end{align*}
\end{theorem}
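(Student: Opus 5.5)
We prove \Cref{thm:cusum} in two stages. First we establish the moment event $\mathcal E_{\mathrm{unif}}$ with the stated probability, using a union bound over all intervals and a product net. Then we show deterministically that $\mathcal E_{\mathrm{unif}}\subseteq\mathcal E_{\mathrm{cusum}}$. The identity $\Pp(\mathcal E_{\mathrm{unif}}\cap\mathcal E_{\mathrm{cusum}})=\Pp(\mathcal E_{\mathrm{unif}})$ is then immediate from the inclusion.

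\textbf{Stage 1: the moment event.} Fix $\varepsilon=1/(2p)$ and let $\mathcal N_\varepsilon$ be the net of \Cref{lem:tensor-net}, so that $|\mathcal N_\varepsilon|^p\leq(1+4p)^{p(D+1)}$. By \Cref{lem:tensor-net},
\[
\norm{\overline{\mathcal M}_I^{(p)}-\E\overline{\mathcal M}_I^{(p)}}_{\op}\leq2\max_{v_1,\ldots,v_p\in\mathcal N_\varepsilon}\abs{\ip{\overline{\mathcal M}_I^{(p)}-\E\overline{\mathcal M}_I^{(p)}}{v_1\otimes\cdots\otimes v_p}_{\F}}.
\]
There are at most $n^2$ intervals $I$. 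For each interval and each net tuple we apply \Cref{lem:fixed-direction} with $x=c_p(D+1)+7\log n$, choosing $c_p$ so that $e^{-x}\,n^2(1+4p)^{p(D+1)}\leq n^{-5}$. Then $x\asymp D+\log n$, and since $\log|I|\leq\log n$, the second term satisfies $(x+\log|I|)^{p/2+1/\gamma}\leq C(D+\log n)^{p/2+1/\gamma}$. The first term becomes $\sqrt{(D+\log n)/|I|}$. Constants depend only on $p$ and the mixing parameters, because \Cref{lem:fixed-direction} inherits its constants from \Cref{prop:subweibull-sum}. This gives $\Pp(\mathcal E_{\mathrm{unif}})\geq1-n^{-5}$ with a suitable $C_1$.

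\textbf{Stage 2: $\mathcal E_{\mathrm{unif}}\subseteq\mathcal E_{\mathrm{cusum}}$.} Write $A=D+\log n$ and $\beta=p/2+1/\gamma$. For $I=(s,e]$ and $s+h\leq q\leq e-h$,
\[
\mathcal T_I(q)-\E\mathcal T_I(q)=\sqrt{\frac{(q-s)(e-q)}{e-s}}\Bigl\{(\overline{\mathcal M}_{(s,q]}^{(p)}-\E\overline{\mathcal M}_{(s,q]}^{(p)})-(\overline{\mathcal M}_{(q,e]}^{(p)}-\E\overline{\mathcal M}_{(q,e]}^{(p)})\Bigr\}.
\]
We bound each centered term by $\mathcal E_{\mathrm{unif}}$. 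Use $\sqrt{(q-s)(e-q)/(e-s)}\leq\min\{\sqrt{q-s},\sqrt{e-q}\}$. For the left term, $\sqrt{q-s}\,\{\sqrt{A/(q-s)}+A^\beta/(q-s)\}=\sqrt A+A^\beta/\sqrt{q-s}\leq\sqrt A+A^\beta/\sqrt h$, since $q-s\geq h$. The right term is handled symmetrically using $e-q\geq h$. Hence $C_2=2C_1$ works.

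\textbf{Main obstacle.} The only delicate step is the union bound in Stage 1. The deviation exponent must absorb simultaneously the net size $e^{O(pD)}$, the $n^2$ intervals, and the target $n^{-5}$. One must also check that substituting $x\asymp D+\log n$ into the sub-Weibull term of \Cref{lem:fixed-direction} produces exactly the $(D+\log n)^{p/2+1/\gamma}$ scaling, without an extra dependence on $|I|$.
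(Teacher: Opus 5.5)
Your proposal is correct and follows the paper's proof essentially step for step. It uses the same $\varepsilon=1/(2p)$ product net, the same union bound with $x=p(D+1)\log(1+4p)+7\log n$ over the at most $n^2$ intervals, and the same exact CUSUM identity for the deterministic inclusion. The only cosmetic difference is that you bound the CUSUM weight by $\min\{\sqrt{q-s},\sqrt{e-q}\}$ one term at a time, whereas the paper bounds the combined coefficients by $\sqrt2$ and $2/\sqrt h$; both routes give $C_2=2C_1$.
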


\begin{proof}[Proof of Theorem~\ref{thm:cusum}]
Set
$\nu=p/2+1/\gamma$ and $\varepsilon=(2p)^{-1}$, and choose a Euclidean
$\varepsilon$-net $\mathcal N_\varepsilon\subset\Sph^D$ with
$\abs{\mathcal N_\varepsilon}\leq(1+4p)^{D+1}$.  Apply
Lemma~\ref{lem:fixed-direction} with
\begin{align*}
y_0=7\log n+p(D+1)\log(1+4p)
\end{align*}
to every nonempty integer interval and every element of
$\mathcal N_\varepsilon^p$.  There are fewer than
$n^2(1+4p)^{p(D+1)}$ interval--direction pairs, so a union bound makes their total
failure probability at most $n^{-5}$.  On the complementary event,
Lemma~\ref{lem:tensor-net} gives, simultaneously for all nonempty intervals,
\begin{align*}
\norm{\overline{\mathcal M}_I^{(p)}-\E\overline{\mathcal M}_I^{(p)}}_{\op}
&\leq C\sigma^p\left\{
\sqrt{\frac{y_0}{\abs I}}
+\frac{(y_0+\log\abs I)^\nu}{\abs I}
\right\}\\
&\leq C\sigma^p\left\{
\sqrt{\frac{D+\log n}{\abs I}}
+\frac{(D+\log n)^\nu}{\abs I}
\right\}.
\end{align*}
The second inequality uses $\log\abs I\leq\log n$, $D+1\leq2D$,
and fixed $p$.
Thus, after enlarging $C_1$ if necessary,
$\Pp(\mathcal E_{\mathrm{unif}})\geq1-n^{-5}$.

Work on $\mathcal E_{\mathrm{unif}}$ and set $L=D+\log n$.  Fix an admissible
interval $I=(s,e]$, an integer $h$, and
$q\in\{s+h,\ldots,e-h\}$.  Put
\begin{align*}
\ell&=q-s,
&
r&=e-q.
\end{align*}
The CUSUM definition gives the exact identity
\begin{align*}
\mathcal T_I(q)-\E\mathcal T_I(q)
&=\sqrt{\frac{\ell r}{\ell+r}}
\left\{(\overline{\mathcal M}_{(s,q]}^{(p)}
-\E\overline{\mathcal M}_{(s,q]}^{(p)})-(\overline{\mathcal M}_{(q,e]}^{(p)}
-\E\overline{\mathcal M}_{(q,e]}^{(p)})\right\}.
\end{align*}
Because $\ell,r\geq h\geq1$, both child intervals are nonempty, so the
defining bound for $\mathcal E_{\mathrm{unif}}$ applies to each.
The triangle inequality therefore gives
\begin{align*}
\norm{\mathcal T_I(q)-\E\mathcal T_I(q)}_{\op}
&\leq C\sigma^p\sqrt{\frac{\ell r}{\ell+r}}
\left\{
\sqrt{\frac L\ell}+\sqrt{\frac Lr}
+\frac{L^\nu}{\ell}+\frac{L^\nu}{r}
\right\}.
\end{align*}
The Gaussian coefficients satisfy
\begin{align*}
\sqrt{\frac{\ell r}{\ell+r}}
\left(\frac1{\sqrt\ell}+\frac1{\sqrt r}\right)
&=
\sqrt{\frac r{\ell+r}}+\sqrt{\frac\ell{\ell+r}}
\leq\sqrt2,
\end{align*}
and the large-deviation coefficients satisfy
\begin{align*}
\sqrt{\frac{\ell r}{\ell+r}}
\left(\frac1\ell+\frac1r\right)
&\leq\frac1{\sqrt\ell}+\frac1{\sqrt r}
\leq\frac2{\sqrt h}.
\end{align*}
Since $I$, $h$, and $q$ were arbitrary, enlarging $C_2$ if necessary gives
$\mathcal E_{\mathrm{unif}}\subseteq\mathcal E_{\mathrm{cusum}}$.  Combining this
inclusion with the probability bound for $\mathcal E_{\mathrm{unif}}$ completes the
proof.
\end{proof}

\subsection{Results on seeded intervals}

\begin{lemma}[Size of the seeded interval collection]
\label{lem:seeded-collection-size}
The seeded interval collection \(\mathcal I\) in \eqref{eq:preliminary-seeded-collection} satisfies
\begin{align}
\abs{\mathcal I}
\leq \frac n4 \quad \text{and} \quad
\sum_{I\in\mathcal I}\abs I \leq Cn\log n.
\end{align}
\end{lemma}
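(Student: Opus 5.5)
The argument is a direct count over the dyadic scales. Recall that $\mathcal L_n=\{2^r:r=4,\ldots,\lfloor\log_2 n\rfloor\}$, so every scale satisfies $\ell\geq16$ and $\ell\leq n$. For a fixed scale $\ell$, the family $\mathcal I_\ell$ consists of the intervals indexed by $j=0,\ldots,\lfloor 2(n-\ell)/\ell\rfloor$, plus the one extra interval $(n-\ell,n]$. Hence
\begin{align*}
\abs{\mathcal I_\ell}\leq\frac{2(n-\ell)}{\ell}+2\leq\frac{2n}{\ell}.
\end{align*}
The last inequality is exactly the statement $-2\ell/\ell+2=0$, so it holds with equality in the slack term and needs no extra room.

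\textbf{Cardinality bound.} Summing the per-scale bound over the geometric sequence of scales gives
\begin{align*}
\abs{\mathcal I}\leq\sum_{r\geq4}\frac{2n}{2^r}=\frac{2n}{8}=\frac n4.
\end{align*}
This uses the fact that a union has at most as many elements as the sum of its parts, which absorbs any duplicates across or within scales (for example, when $(n-\ell,n]$ coincides with a grid interval). The only care needed is that the count $2(n-\ell)/\ell+2$ already includes both the $j=0$ term and the extra interval, so the "$+1$"s are not double-counted. When $\lfloor\log_2 n\rfloor<4$, the collection is empty and the bound is trivial.

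\textbf{Total-length bound.} Every interval in $\mathcal I_\ell$ has length exactly $\ell$. Therefore
\begin{align*}
\sum_{I\in\mathcal I_\ell}\abs I\leq\ell\cdot\frac{2n}{\ell}=2n.
\end{align*}
There are at most $\log_2 n$ scales, so summing gives
\begin{align*}
\sum_{I\in\mathcal I}\abs I\leq2n\log_2 n\leq Cn\log n.
\end{align*}

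I do not expect any real obstacle; the argument is bookkeeping. The only point to double-check is the per-scale count $\lfloor2(n-\ell)/\ell\rfloor+2\leq2n/\ell$, which follows from $\lfloor x\rfloor\leq x$.
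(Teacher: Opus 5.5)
Your proof is correct and follows the paper's argument essentially line by line. Both use the per-scale count $\lfloor 2(n-\ell)/\ell\rfloor+2\leq 2n/\ell$, the geometric sum $2n\sum_{r\geq4}2^{-r}=n/4$, and the bound $2n\abs{\mathcal L_n}\leq Cn\log n$ for the total length, with duplicate intervals only decreasing both quantities.
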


\begin{proof}[Proof of Lemma~\ref{lem:seeded-collection-size}]
Recall the definitions given in \Cref{sec:preliminary-seeded-detection} including the dyadic interval scale set $\mathcal L_n$ and the seeded interval $\mathcal I_{\ell}$ at scale $\ell$. At a dyadic interval scale \(\ell\in\mathcal L_n\), we have
\begin{align*}
\abs{\mathcal I_\ell}
&\leq
\left\lfloor\frac{2(n-\ell)}{\ell}\right\rfloor+2
\leq \frac{2n}{\ell}.
\end{align*}
Therefore
\begin{align*}
\abs{\mathcal I}
&\leq
\sum_{\ell\in\mathcal L_n}\frac{2n}{\ell}
\leq
2n\sum_{r=4}^{\infty}2^{-r}
=\frac n4,
\end{align*}
and
\begin{align*}
\sum_{I\in\mathcal I}\abs I
&\leq
\sum_{\ell\in\mathcal L_n}\ell\abs{\mathcal I_\ell}
\leq 2n\abs{\mathcal L_n}
\leq Cn\log n.
\end{align*}
Removing duplicate intervals can only decrease both quantities.
\end{proof}

\begin{lemma}[Isolating seeded interval]
\label{lem:isolating-seeded-interval}
If $\Delta_{\min}\geq256$, then, for every $k=1,\ldots,K$, there exists
\(I_k^*=(a_k^*,b_k^*]\in\mathcal I\) that contains \(\eta_k\) and no other
change point and satisfies
\begin{align}
\frac{\Delta_{\min}}{32}
<\abs{I_k^*}
\leq\frac{\Delta_{\min}}{16},
\label{eq:isolating-seeded-length}
\end{align}
and
\begin{align}
\min\{\eta_k-a_k^*,b_k^*-\eta_k\}
\geq\frac{\abs{I_k^*}}4.
\label{eq:isolating-seeded-margins}
\end{align}
\end{lemma}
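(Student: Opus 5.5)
The plan is to choose a single dyadic scale and then take the grid interval at that scale whose centre lies closest to $\eta_k$. Let $r^*=\lfloor\log_2(\Delta_{\min}/16)\rfloor$ and $\ell=2^{r^*}$. Then $\Delta_{\min}/32<\ell\leq\Delta_{\min}/16$. Because $\Delta_{\min}\geq256$, we get $\ell\geq16$, so $r^*\geq4$. Because $\Delta_{\min}\leq n$, we also get $\ell\leq n/16$, so $r^*\leq\lfloor\log_2 n\rfloor$. Hence $\ell\in\mathcal L_n$, and every interval in $\mathcal I_\ell$ has length $\ell$. This gives \eqref{eq:isolating-seeded-length} at once.

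To position the interval, consider the left endpoints $j\ell/2$ for $j=0,\ldots,\lfloor 2(n-\ell)/\ell\rfloor$, together with the extra endpoint $n-\ell$. Since $\ell$ is a power of two with $\ell\ge 16$, the number $\ell/2$ is an integer and all these endpoints are integers. The target is a left endpoint $a$ with
\begin{align*}
\eta_k-\tfrac{3\ell}{4}\leq a\leq\eta_k-\tfrac{\ell}{4}.
\end{align*}
This window has length $\ell/2$, which equals the grid spacing, so a grid point $j\ell/2$ always falls in it, provided the window lies inside the admissible range $[0,n-\ell]$. Take that grid point and set $(a_k^*,b_k^*]=(a,a+\ell]$. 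Then $\eta_k-a\geq\ell/4$ and $b_k^*-\eta_k=a+\ell-\eta_k\geq\ell/4$, which is \eqref{eq:isolating-seeded-margins}.

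Next check the range and the isolation property. From $\eta_k\geq\Delta_{\min}\geq16\ell$, the lower edge satisfies $\eta_k-3\ell/4\geq0$. From $n-\eta_k\geq\Delta_{\min}$, we have $\eta_k-\ell/4\leq n-\ell$. So the window meets $[0,n-\ell]$. The grid points run from $0$ up to the last multiple of $\ell/2$ not exceeding $n-\ell$; the appended endpoint $n-\ell$ handles any remainder near $n$. Since the whole window lies in $[0,n-\ell]$, an endpoint inside it exists. For isolation, the interval has length $\ell\leq\Delta_{\min}/16$ and contains $\eta_k$ in its interior. Any other change point lies at distance at least $\Delta_{\min}>\ell$ from $\eta_k$, so it cannot lie in $(a_k^*,b_k^*]$.

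The main obstacle is the boundary bookkeeping, namely the integrality of the grid and the edge case at the right end. I would handle it exactly as above: the window is centred well inside $[\ell,n-\ell]$ because $\Delta_{\min}\geq 16\ell$, so the final interval $(n-\ell,n]$ is never actually needed.
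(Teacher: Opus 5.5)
Your proof is correct and follows essentially the same route as the paper. You take the same dyadic scale $\ell$ with $\Delta_{\min}/32<\ell\leq\Delta_{\min}/16$ and a grid interval at that scale whose central half contains $\eta_k$; your closed window $[\eta_k-3\ell/4,\eta_k-\ell/4]$ for the left endpoint is just a rephrasing of the paper's covering of the line by the central halves $H_j$. You then check the range via $\Delta_{\min}\geq16\ell$ and isolation via $\ell<\Delta_{\min}$, exactly as the paper does, and your added remarks that $\ell/2$ is an integer and that $(n-\ell,n]$ is never needed are harmless extra bookkeeping.
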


\begin{proof}[Proof of Lemma~\ref{lem:isolating-seeded-interval}]
\noindent\textbf{Step 1: Choice of scale.}
Let $\ell$ be the largest dyadic length satisfying
$\ell\leq\Delta_{\min}/16$.
Because $256\leq\Delta_{\min}\leq n$, we have $16\leq\ell\leq n$, so
$\ell\in\mathcal L_n$.  The next larger dyadic length satisfies
$2\ell>\Delta_{\min}/16$, and therefore
\begin{align*}
\frac{\Delta_{\min}}{32}<\ell\leq\frac{\Delta_{\min}}{16}.
\end{align*}
This proves \eqref{eq:isolating-seeded-length} once an interval of length
$\ell$ is selected.

\noindent\textbf{Step 2: Choice of interval.}
For $j\in\mathbb Z$, the central half of the grid interval
$(j\ell/2,j\ell/2+\ell]$ is
\begin{align*}
H_j
&=\left(j\ell/2+\ell/4,\,j\ell/2+3\ell/4\right].
\end{align*}
The intervals $\{H_j:j\in\mathbb Z\}$ cover the real line, so choose $j$ such
that $\eta_k\in H_j$, and set
$a_k^*=j\ell/2$, $b_k^*=a_k^*+\ell$, and $I_k^*=(a_k^*,b_k^*]$.
It remains to check that this grid interval belongs to the finite collection
$\mathcal I_\ell$.  Since
$\eta_k\geq\Delta_{k-1}\geq\Delta_{\min}$ and
$n-\eta_k\geq\Delta_k\geq\Delta_{\min}$, while
$\Delta_{\min}\geq16\ell$, central-half membership
gives
\begin{align*}
a_k^*
\geq\eta_k-3\ell/4\geq0,
\quad \text{and} \quad
b_k^*
\leq\eta_k+3\ell/4\leq n.
\end{align*}
Thus $j\geq0$ and $j\leq\lfloor2(n-\ell)/\ell\rfloor$, which proves
$I_k^*\in\mathcal I_\ell\subseteq\mathcal I$.

\noindent\textbf{Step 3: Margins and isolation.}
The inclusion $\eta_k\in H_j$ yields
\begin{align*}
\eta_k-a_k^*
\geq\ell/4,
\quad \text{and} \quad
b_k^*-\eta_k
\geq\ell/4.
\end{align*}
This proves \eqref{eq:isolating-seeded-margins}.  Finally,
$\ell<\Delta_{\min}$ and $I_k^*$ contains $\eta_k$, whereas the adjacent
changes are at distances $\Delta_{k-1}\geq\Delta_{\min}$ and
$\Delta_k\geq\Delta_{\min}$ from $\eta_k$.  Hence
$I_k^*$ lies strictly between $\eta_{k-1}$ and $\eta_{k+1}$ and contains no
change other than $\eta_k$.
\end{proof}

\subsection{Estimating the most significant
projection direction}
\label{app:direction-setup}

We use the distance $d_\pm$ defined in Section~\ref{sec:notation}.
Throughout this subsection, we consider order-$p$ moment tensors and
omit their superscript $(p)$.

We impose the following eigengap type of assumption such that the leading projection direction is identifiable.
\begin{assumption}
\label{ass:local-refinement-separation}
There exist fixed constants $c_0,r_0>0$ such that, for every
$k=1,\ldots,K$ and $0<\rho\leq r_0$,
\begin{align*}
\kappa_k
-\sup_{\substack{u\in\Sph^D\\d_\pm(u,\nu_k)\geq\rho}}
\abs{\ip{\Theta_k}{u^{\otimes p}}_{\F}}
&\geq c_0\kappa_k\rho^2.
\end{align*}
\end{assumption}

Similar assumptions have been considered in the study of tensor
eigenvectors and best rank-one approximations by \citet{KoldaMayo2011},
\href{https://www.stat.uchicago.edu/~lekheng/work/nonneg.pdf}{Qi, Comon and Lim (2016)},
and \href{https://doi.org/10.1137/17M1133312}{Jaffe, Weiss and Nadler (2018)}.
\Cref{lem:cp-direction-separation} shows that
\Cref{ass:local-refinement-separation} holds for tensors admitting an
orthogonal symmetric CP decomposition with a suitable eigengap assumption.

\begin{lemma}
\label{lem:cp-direction-separation}
Suppose $2\leq R\leq D+1$ and the symmetric tensor
$\mathcal T\in(\R^{D+1})^{\otimes p}$ admits the orthogonal
symmetric CP decomposition
\begin{align*}
\mathcal T
&=
\sum_{j=1}^{R}\lambda_j \phi_j^{\otimes p},
\end{align*}
where $\phi_1,\ldots,\phi_R\in\R^{D+1}$ are orthonormal.
Define the strictly positive weight gap by
\begin{align}
\Delta
&=
\abs{\lambda_1}
-
\max_{2\leq j\leq R}\abs{\lambda_j}
>0.
\label{eq:cp-weight-gap}
\end{align}
Then $\phi_1$ is the unique maximizer, up to sign, of
\begin{align*}
u&\longmapsto\abs{\ip{\mathcal T}{u^{\otimes p}}_{\F}},
\qquad u\in\Sph^D.
\end{align*}
Moreover, \Cref{ass:local-refinement-separation} holds with
$r_0=1$ and $c_0=\Delta/(2\abs{\lambda_1})$.
\end{lemma}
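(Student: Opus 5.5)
The plan is to reduce everything to a scalar polynomial inequality by working in the orthonormal coordinates of the decomposition. For $u\in\Sph^D$, set $x_j=\ip{u}{\phi_j}$ for $j=1,\ldots,R$. Then $\sum_{j=1}^R x_j^2\leq1$ and
\begin{align*}
\ip{\mathcal T}{u^{\otimes p}}_{\F}=\sum_{j=1}^R\lambda_jx_j^p .
\end{align*}
Write $\lambda_*=\max_{j\geq2}\abs{\lambda_j}=\abs{\lambda_1}-\Delta$. Since $p\geq2$, we have $\abs{x_j}^p\leq x_j^2$ and $\sum_{j\geq2}\abs{x_j}^p\leq(1-x_1^2)^{p/2}\leq 1-x_1^2$. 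The triangle inequality then gives
\begin{align*}
\abs{\ip{\mathcal T}{u^{\otimes p}}_{\F}}
\leq\abs{\lambda_1}\abs{x_1}^p+\lambda_*(1-x_1^2).
\end{align*}
At $u=\pm\phi_1$ the value is exactly $\abs{\lambda_1}$, so $\norm{\mathcal T}$ restricted to symmetric rank-one directions equals $\abs{\lambda_1}$. The quantity $\kappa$ in \Cref{ass:local-refinement-separation} is $\norm{\Theta}_{\op}$. For the separation claim I will compare against $\abs{\lambda_1}$, using that this equals the symmetric maximum and hence $\nu=\pm\phi_1$.

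This identification is where I expect friction. If $\kappa$ is read as the full (nonsymmetric) operator norm, I would additionally need Banach's theorem that the operator norm of a symmetric tensor equals its symmetric spectral norm. That is a classical result, and I would cite or invoke it. Alternatively, for this orthogonally decomposable case it follows directly: by generalized Hölder,
\begin{align*}
\Bigl|\sum_j\lambda_j\prod_a\ip{u_a}{\phi_j}\Bigr|\leq\abs{\lambda_1}\sum_j\prod_a\abs{\ip{u_a}{\phi_j}}\leq\abs{\lambda_1},
\end{align*}
where the last step uses Cauchy--Schwarz on two factors and bounds the remaining factors by $1$.

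Next I relate $x_1$ to $d_\pm$. Choosing the sign so that $x_1\geq0$, we get $d_\pm(u,\phi_1)^2=\norm{u-\phi_1}_2^2=2(1-x_1)$, so $d_\pm(u,\phi_1)\geq\rho$ means $x_1\leq1-\rho^2/2$. Put $t=x_1\in[0,1]$. Using $t^p\leq t^2$,
\begin{align*}
\abs{\lambda_1}-\abs{\ip{\mathcal T}{u^{\otimes p}}_{\F}}
\geq\abs{\lambda_1}(1-t^2)-\lambda_*(1-t^2)=\Delta(1-t^2)\geq\Delta(1-t)\geq\Delta\rho^2/2 .
\end{align*}
Dividing by $\abs{\lambda_1}$ gives the bound $c_0\kappa\rho^2$ with $c_0=\Delta/(2\abs{\lambda_1})$, valid for every $\rho\in(0,1]$, and in particular with $r_0=1$. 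The bound holds for all $\rho$ up to $2$, and the supremum over an empty set is harmless.

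Uniqueness follows from the same display. If $\abs{\ip{\mathcal T}{u^{\otimes p}}_{\F}}=\abs{\lambda_1}$, then $\Delta(1-t^2)=0$, so $t=1$ and $u=\pm\phi_1$. The only delicate point is the operator-norm identification above; everything else is elementary.
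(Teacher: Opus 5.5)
Your proposal is correct and follows essentially the same route as the paper. It uses the orthonormal coordinates $x_j=\ip{u}{\phi_j}$, the bound $\abs{\ip{\mathcal T}{u^{\otimes p}}_{\F}}\leq\abs{\lambda_1}-\Delta(1-x_1^2)$, the identity $d_\pm(u,\phi_1)^2=2(1-\abs{x_1})$, and a direct verification that $\norm{\mathcal T}_{\op}=\abs{\lambda_1}$. For that last step the paper applies generalized H\"older over all $p$ factors, whereas you apply Cauchy--Schwarz to two factors; this works equally well and settles the operator-norm identification you flagged without needing Banach's theorem.
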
 
\begin{proof}[Proof of \Cref{lem:cp-direction-separation}]
For $v\in\Sph^D$, orthonormality and $p\geq2$ give
\begin{align*}
\sum_{j=1}^R\abs{\phi_j^\top v}^p
&\leq\sum_{j=1}^R\abs{\phi_j^\top v}^2\leq1.
\end{align*}
Consequently, generalized H\"older's inequality gives, for unit vectors
$v_1,\ldots,v_p$,
\begin{align*}
\abs{\ip{\mathcal T}{v_1\otimes\cdots\otimes v_p}_{\F}}
&\leq\abs{\lambda_1}
\prod_{\ell=1}^p
\left(\sum_{j=1}^R\abs{\phi_j^\top v_\ell}^p\right)^{1/p}
\leq\abs{\lambda_1}.
\end{align*}
Equality is attained at $v_1=\cdots=v_p=\phi_1$, so
$\norm{\mathcal T}_{\op}=\abs{\lambda_1}$.
For any $v\in\Sph^D$, the weight gap in \eqref{eq:cp-weight-gap} yields
\begin{align*}
\abs{\ip{\mathcal T}{v^{\otimes p}}_{\F}}
&\leq\abs{\lambda_1}\abs{\phi_1^\top v}^p
 +(\abs{\lambda_1}-\Delta)\sum_{j=2}^R\abs{\phi_j^\top v}^p\\
&\leq\abs{\lambda_1}-\Delta(1-\abs{\phi_1^\top v}^2).
\end{align*}
Since $d_\pm(v,\phi_1)^2=2(1-\abs{\phi_1^\top v})$, it follows that
\begin{align*}
\norm{\mathcal T}_{\op}
-\abs{\ip{\mathcal T}{v^{\otimes p}}_{\F}}
&\geq\frac\Delta2 d_\pm(v,\phi_1)^2.
\end{align*}
The right side is positive unless $v\in\{\phi_1,-\phi_1\}$.
Taking the infimum over $d_\pm(v,\phi_1)\geq\rho$ proves the separation
condition with $r_0=1$ and $c_0=\Delta/(2\abs{\lambda_1})$.
\end{proof}

The following deterministic perturbation bound applies to symmetric
matrices and to symmetric tensors of higher order.

\begin{lemma}[Perturbation of a maximizing direction]
\label{lem:stage-ii-perturbation}
Let $p\geq2$ and $D\geq1$, and let
$\mathcal A,\widehat{\mathcal A}\in(\R^{D+1})^{\otimes p}$ be real symmetric
tensors.  Let
\begin{align*}
\kappa&=\norm{\mathcal A}_{\op}>0 \quad \text{and} \quad
\nu \in\argmaxsmall_{u\in\Sph^D}
\abs{\ip{\mathcal A}{u^{\otimes p}}_{\F}}.
\end{align*}
Suppose that $c_0,r_0>0$ satisfy, for every $0<\rho\leq r_0$,
\begin{align}
\kappa
-\sup_{\substack{u\in\Sph^D\\d_\pm(u,\nu)\geq\rho}}
\abs{\ip{\mathcal A}{u^{\otimes p}}_{\F}}
&\geq c_0\kappa\rho^2.
\label{eq:stage-ii-lemma-separation}
\end{align}
Let $\widehat\nu$ be any exact maximizer of
$\abs{\ip{\widehat{\mathcal A}}{u^{\otimes p}}_{\F}}$ over
$u\in\Sph^D$, and define
$\varepsilon =\norm{\widehat{\mathcal A}-\mathcal A}_{\op}$.
If
\begin{align}
\frac{\varepsilon}{\kappa}
&\leq\min\left\{\frac14,\frac{c_0r_0^2}{4},\frac{c_0}{16p^2}\right\},
\label{eq:stage-ii-lemma-smallness}
\end{align}
then
\begin{align*}
d_\pm(\widehat\nu,\nu)
&\leq\frac{p\varepsilon}{c_0\kappa}.
\end{align*}
\end{lemma}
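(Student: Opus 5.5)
The plan is a standard ``basic inequality'' argument. Write $f(u)=\abs{\ip{\mathcal A}{u^{\otimes p}}_{\F}}$ and $\widehat f(u)=\abs{\ip{\widehat{\mathcal A}}{u^{\otimes p}}_{\F}}$. Since $u^{\otimes p}$ is a rank-one tensor with unit factors, $\sup_{u\in\Sph^D}|\widehat f(u)-f(u)|\leq\varepsilon$. Exact maximization of $\widehat f$ at $\widehat\nu$ then gives
\begin{align*}
f(\widehat\nu)\geq\widehat f(\widehat\nu)-\varepsilon\geq\widehat f(\nu)-\varepsilon\geq f(\nu)-2\varepsilon=\kappa-2\varepsilon.
\end{align*}
Here $f(\nu)=\kappa$ is used; this identity follows from \eqref{eq:stage-ii-lemma-separation} together with symmetry, since for symmetric tensors the operator norm is attained on the diagonal (Banach's theorem). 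If one prefers to avoid citing Banach's theorem, I would simply read $\kappa$ as $f(\nu)$, which is how it is used in the paper anyway.

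\textbf{Step 1: coarse localization.} Set $\rho_1=\sqrt{3\varepsilon/(c_0\kappa)}$. By \eqref{eq:stage-ii-lemma-smallness}, $\rho_1\leq\sqrt{3/4}\,r_0\leq r_0$. If $d_\pm(\widehat\nu,\nu)\geq\rho_1$, separation yields $f(\widehat\nu)\leq\kappa-c_0\kappa\rho_1^2=\kappa-3\varepsilon$, which contradicts the display above. Hence $d:=d_\pm(\widehat\nu,\nu)<\rho_1$, and applying separation at $\rho=d$ gives
\begin{align*}
c_0\kappa d^2\leq 2\varepsilon,
\end{align*}
that is, $d\lesssim\sqrt{\varepsilon/\kappa}$. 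This rate is too weak, and it has to be upgraded to the linear rate $p\varepsilon/(c_0\kappa)$.

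\textbf{Step 2: sharpening to the linear rate (the main obstacle).} The slack $2\varepsilon$ in the basic inequality must be replaced by something of order $\varepsilon d$. Align the sign so that $\norm{\widehat\nu-\nu}_2=d$. The perturbation $\mathcal E=\widehat{\mathcal A}-\mathcal A$ enters the comparison $\widehat f(\widehat\nu)\geq\widehat f(\nu)$ only through the difference $\ip{\mathcal E}{\widehat\nu^{\otimes p}-\nu^{\otimes p}}_{\F}$, up to the sign issues coming from the absolute values. Telescoping as in Lemma~\ref{lem:tensor-net} gives
\begin{align*}
\abs{\ip{\mathcal E}{\widehat\nu^{\otimes p}-\nu^{\otimes p}}_{\F}}\leq p\varepsilon d.
\end{align*}
To handle the absolute values, I will note that $\ip{\mathcal A}{\nu^{\otimes p}}$ has modulus $\kappa$. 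By Step 1 and smallness, $\ip{\mathcal A}{\widehat\nu^{\otimes p}}$ and both fitted values stay within $O(\varepsilon+\kappa d)$, which is well below $\kappa/2$, so all four inner products share the sign of $\ip{\mathcal A}{\nu^{\otimes p}}$. The absolute values can then be dropped consistently. Combining this with separation gives
\begin{align*}
c_0\kappa d^2\leq f(\nu)-f(\widehat\nu)\leq p\varepsilon d,
\end{align*}
so $d\leq p\varepsilon/(c_0\kappa)$. The sign-consistency check is where the constants $1/4$ and $c_0/(16p^2)$ in \eqref{eq:stage-ii-lemma-smallness} get used: they guarantee that Step 1 puts $d$ small enough, with $p d\leq 1/2$, for the sign argument and the Lipschitz estimate to apply. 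I will verify this inequality chain last.
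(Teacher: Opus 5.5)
Your proposal is correct and follows the paper's proof essentially step for step. The steps match: the basic inequality $f(\widehat\nu)\geq\kappa-2\varepsilon$; coarse localization $c_0\kappa d^2\leq 2\varepsilon$, which gives $pd<1/2$; a check that all four contractions share the sign of $\ip{\mathcal A}{\nu^{\otimes p}}_{\F}$; and the refined comparison $c_0\kappa d^2\leq f(\nu)-f(\widehat\nu)\leq p\varepsilon d$ via the telescoping Lipschitz bound. The identity $f(\nu)=\kappa$ comes from Banach's identity alone, as in the paper, not from the separation condition.

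The only difference is your auxiliary radius $\rho_1=\sqrt{3\varepsilon/(c_0\kappa)}$ in place of the paper's $\min\{d,r_0\}$, and this tacitly requires $\varepsilon>0$. The case $\varepsilon=0$ is immediate: then $\widehat f=f$, and separation forces every maximizer to be $\pm\nu$.
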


\begin{proof}[Proof of \Cref{lem:stage-ii-perturbation}]
For a real symmetric tensor, the operator norm equals the maximum absolute
diagonal contraction by Banach's identity
\citep[Eq.~(5.1)]{FriedlandLim2018}.
For every $v\in\Sph^D$, the definition of $\varepsilon$ gives
\begin{align}
\abs{\ip{\widehat{\mathcal A}-\mathcal A}{v^{\otimes p}}_{\F}}
&\leq\varepsilon.
\label{eq:stage-ii-lemma-diagonal-noise}
\end{align}
We write $d=d_\pm(\widehat\nu,\nu)$.

\medskip
\noindent\textbf{Step 1: localization.}
Comparison of the empirical objective at $\nu$ and $\widehat\nu$ yields
\begin{align*}
\kappa
-\abs{\ip{\mathcal A}{\widehat\nu^{\otimes p}}_{\F}}
\leq
\abs{\ip{\widehat{\mathcal A}}{\nu^{\otimes p}}_{\F}}
-\abs{\ip{\widehat{\mathcal A}}{\widehat\nu^{\otimes p}}_{\F}}
+2\varepsilon
\leq2\varepsilon.
\end{align*}
If $d>0$, separation at $\rho=\min\{d,r_0\}$ gives
$c_0\kappa\min\{d,r_0\}^2\leq2\varepsilon$.
Condition~\eqref{eq:stage-ii-lemma-smallness} therefore forces $d<r_0$
and $d^2\leq2\varepsilon/(c_0\kappa)\leq1/(8p^2)$.
The case $d=0$ is immediate.
Consequently,
\begin{align}
d&<\min\{r_0,1/(2p)\}.
\label{eq:stage-ii-lemma-localization}
\end{align}

\medskip
\noindent\textbf{Step 2: sign alignment.}
Choose $\widetilde\nu\in\{\widehat\nu,-\widehat\nu\}$ so that
$\norm{\widetilde\nu-\nu}_2=d$.  Its absolute population and empirical
objectives agree with those of $\widehat\nu$ by homogeneity.  For any
order-$p$ tensor $\mathcal B$ and unit vectors $u,v$, telescoping gives
\begin{align}
\abs{\ip{\mathcal B}{u^{\otimes p}-v^{\otimes p}}_{\F}}
=\left|\sum_{a=1}^p
\ip{\mathcal B}{u^{\otimes(a-1)}\otimes(u-v)
\otimes v^{\otimes(p-a)}}_{\F}\right|\leq p\norm{\mathcal B}_{\op}\norm{u-v}_2.
\label{eq:stage-ii-lemma-lipschitz}
\end{align}
It follows from \eqref{eq:stage-ii-lemma-localization} that
\begin{align*}
\abs{\ip{\mathcal A}{\widetilde\nu^{\otimes p}-\nu^{\otimes p}}_{\F}}
&\leq p\kappa d<\kappa/2.
\end{align*}
Let $s=\operatorname{sign}\{\ip{\mathcal A}{\nu^{\otimes p}}_{\F}\}$.
Then $s\in\{-1,1\}$ and
\begin{align*}
s\ip{\mathcal A}{\widetilde\nu^{\otimes p}}_{\F}
&>\kappa/2,
&
s\ip{\mathcal A}{\nu^{\otimes p}}_{\F}
&=\kappa.
\end{align*}
By \eqref{eq:stage-ii-lemma-diagonal-noise} and
$\varepsilon\leq\kappa/4$, the corresponding empirical
contractions satisfy
\begin{align*}
s\ip{\widehat{\mathcal A}}{\widetilde\nu^{\otimes p}}_{\F}
&>\kappa/4,
&
s\ip{\widehat{\mathcal A}}{\nu^{\otimes p}}_{\F}
&\geq3\kappa/4.
\end{align*}
All four contractions therefore have sign $s$.

\medskip
\noindent\textbf{Step 3: the refined objective comparison.}
Exact maximization and the common empirical sign imply
\begin{align*}
s\ip{\widehat{\mathcal A}}{\widetilde\nu^{\otimes p}}_{\F}
&\geq
s\ip{\widehat{\mathcal A}}{\nu^{\otimes p}}_{\F}.
\end{align*}
Using the
common population sign, we obtain
\begin{align*}
\kappa
-\abs{\ip{\mathcal A}{\widetilde\nu^{\otimes p}}_{\F}}
&=s\ip{\mathcal A}{\nu^{\otimes p}-\widetilde\nu^{\otimes p}}_{\F}
\\
&\leq s\ip{\widehat{\mathcal A}-\mathcal A}{\widetilde\nu^{\otimes p}-\nu^{\otimes p}}_{\F}
\\
&\leq p\varepsilon d,
\end{align*}
where the last inequality uses \eqref{eq:stage-ii-lemma-lipschitz}.

\medskip
\noindent\textbf{Step 4: the direction bound.}
If $d=0$, there is nothing to prove.  Otherwise $0<d<r_0$, and applying
\eqref{eq:stage-ii-lemma-separation} at $\rho=d$ to $\widetilde\nu$ gives
\begin{align*}
c_0\kappa d^2
&\leq\kappa
-\abs{\ip{\mathcal A}{\widetilde\nu^{\otimes p}}_{\F}}
\leq p\varepsilon d.
\end{align*}
Dividing by $c_0\kappa d$ proves
$d\leq p\varepsilon/(c_0\kappa)$.
\end{proof}

\begin{theorem}[Accuracy of the estimated direction]
\label{thm:stage-ii-direction}
Let $p\geq2$ and $D\geq1$ be integers. Let $\mathcal M_L,\mathcal M_R$ be real
symmetric order-$p$ tensors on $\R^{D+1}$, with symmetric estimators
$\widehat{\mathcal M}_L,\widehat{\mathcal M}_R$. Define
\begin{align*}
\Theta&=\mathcal M_L-\mathcal M_R,
&\widehat\Theta&=\widehat{\mathcal M}_L-\widehat{\mathcal M}_R,
&\kappa&=\norm{\Theta}_{\op}>0,
\end{align*}
and choose exact maximizing directions on the unit sphere $\Sph^D$
\begin{align*}
\nu&\in\argmaxsmall_{u\in\Sph^D}
\abs{\ip{\Theta}{u^{\otimes p}}_{\F}},
&\widehat\nu&\in\argmaxsmall_{u\in\Sph^D}
\abs{\ip{\widehat\Theta}{u^{\otimes p}}_{\F}}.
\end{align*}
For $u,v\in\Sph^D$, let
$d_\pm(u,v)=\min\{\norm{u-v}_2,\norm{u+v}_2\}$.
Suppose $c_0,r_0>0$ satisfy, for every $0<\rho\leq r_0$,
\begin{align*}
\kappa-\sup_{\substack{u\in\Sph^D\\d_\pm(u,\nu)\geq\rho}}
\abs{\ip{\Theta}{u^{\otimes p}}_{\F}}
&\geq c_0\kappa\rho^2.
\end{align*}
For a deterministic $\epsilon_n>0$, define
\begin{align}
\mathcal E_{\mathrm{mom}}
&=\left\{\max_{\ell\in\{L,R\}}
\norm{\widehat{\mathcal M}_\ell-\mathcal M_\ell}_{\op}
\leq\epsilon_n\right\}.
\label{eq:stage-ii-event}
\end{align}
If
\begin{align}
\frac{\epsilon_n}{\kappa}
&\leq\min\left\{\frac18,\frac{c_0r_0^2}{8},\frac{c_0}{32p^2}\right\},
\label{eq:stage-ii-smallness}
\end{align}
then, on $\mathcal E_{\mathrm{mom}}$, every such $\widehat\nu$ satisfies
\begin{align}
d_\pm(\widehat\nu,\nu)
&\leq\frac{2p\epsilon_n}{c_0\kappa}.
\label{eq:stage-ii-direction-rate}
\end{align}
\end{theorem}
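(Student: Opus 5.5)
The plan is to reduce \Cref{thm:stage-ii-direction} to the deterministic perturbation bound in \Cref{lem:stage-ii-perturbation}, applied with $\mathcal A=\Theta$ and $\widehat{\mathcal A}=\widehat\Theta$. Three things need to be checked: the hypotheses of that lemma, the size of the perturbation $\varepsilon=\norm{\widehat\Theta-\Theta}_{\op}$ on $\mathcal E_{\mathrm{mom}}$, and that the resulting bound translates into \eqref{eq:stage-ii-direction-rate}.

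First, the perturbation. On $\mathcal E_{\mathrm{mom}}$, the triangle inequality for the operator norm gives
\begin{align*}
\varepsilon=\norm{(\widehat{\mathcal M}_L-\mathcal M_L)-(\widehat{\mathcal M}_R-\mathcal M_R)}_{\op}\leq2\epsilon_n.
\end{align*}

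Second, the hypotheses of \Cref{lem:stage-ii-perturbation}.
\begin{enumerate}[label=(\roman*)]
\item Both $\Theta$ and $\widehat\Theta$ are real symmetric, since they are differences of symmetric tensors. This is needed for Banach's identity inside that lemma.
\item $\kappa>0$ holds by assumption.
\item $\nu$ and $\widehat\nu$ are exact maximizers of the diagonal contractions, which is exactly the form the lemma requires.
\item The separation hypothesis \eqref{eq:stage-ii-lemma-separation} is assumed verbatim.
\item For the smallness condition \eqref{eq:stage-ii-lemma-smallness}, note that $\varepsilon/\kappa\leq2\epsilon_n/\kappa$. By \eqref{eq:stage-ii-smallness}, the right-hand side is at most twice $\min\{1/8,c_0r_0^2/8,c_0/(32p^2)\}$, which equals $\min\{1/4,c_0r_0^2/4,c_0/(16p^2)\}$.
\end{enumerate}

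Applying the lemma then gives
\begin{align*}
d_\pm(\widehat\nu,\nu)\leq\frac{p\varepsilon}{c_0\kappa}\leq\frac{2p\epsilon_n}{c_0\kappa},
\end{align*}
which is \eqref{eq:stage-ii-direction-rate}. Since the lemma applies to any exact maximizer, the bound holds for every such $\widehat\nu$.

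There is no substantial obstacle. The only point to watch is the factor of two from combining the two moment errors: the constants in \eqref{eq:stage-ii-smallness} are set to exactly half of those in \eqref{eq:stage-ii-lemma-smallness} to absorb it. Symmetry of the estimators is part of the hypotheses, so Banach's identity, which the lemma uses to pass from diagonal contractions to the operator norm, is available.
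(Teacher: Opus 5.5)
Your proposal is correct and follows the paper's proof: on $\mathcal E_{\mathrm{mom}}$ you bound $\norm{\widehat\Theta-\Theta}_{\op}\leq2\epsilon_n$, observe that \eqref{eq:stage-ii-smallness} is exactly half of \eqref{eq:stage-ii-lemma-smallness} so that it absorbs the factor two, and then apply \Cref{lem:stage-ii-perturbation} with $\mathcal A=\Theta$ and $\widehat{\mathcal A}=\widehat\Theta$. A minor point: inside that lemma, Banach's identity is only needed for $\Theta$, to get $\kappa=\abs{\ip{\Theta}{\nu^{\otimes p}}_{\F}}$. The diagonal noise bound and the Lipschitz estimate hold for any perturbation tensor, so the symmetry of $\widehat\Theta$ is not actually used.
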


\begin{proof}[Proof of \Cref{thm:stage-ii-direction}]
On $\mathcal E_{\mathrm{mom}}$,
\begin{align*}
\norm{\widehat\Theta-\Theta}_{\op}
&\leq\sum_{\ell\in\{L,R\}}
\norm{\widehat{\mathcal M}_\ell-\mathcal M_\ell}_{\op}
\leq2\epsilon_n.
\end{align*}
Apply \Cref{lem:stage-ii-perturbation} with
$\mathcal A=\Theta$, $\widehat{\mathcal A}=\widehat\Theta$, and
$\varepsilon\leq2\epsilon_n$.
Condition~\eqref{eq:stage-ii-smallness} implies
\eqref{eq:stage-ii-lemma-smallness}, giving
\eqref{eq:stage-ii-direction-rate}.
\end{proof}

\subsection{Maximal bounds for projected moment sums}

\begin{lemma}[Fourth-moment maximal bound]
\label{lem:mixing-fourth-maximal}
Let $V_1,\ldots,V_m$ be centered real random variables with
$\max_t\norm{V_t}_{L^8}\leq B$. Suppose their $\alpha$-mixing coefficients
satisfy the bound in \Cref{ass:data-conditions}, including independence
when $\gamma=\infty$. Then
\begin{align}
\left\|\max_{0\leq r\leq m}
\abs{\sum_{t=1}^rV_t}\right\|_{L^4}
&\leq CB\sqrt m,
\end{align}
where $C > 0$ depends only on the fixed mixing parameters.
\end{lemma}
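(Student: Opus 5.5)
I would prove the bound by showing a fourth-moment Rosenthal-type inequality for fixed partial sums and then upgrading it to a maximal inequality by chaining. Under independence ($\gamma=\infty$), the first step is immediate. Rosenthal gives $\E|\sum_{t\in J}V_t|^4\le C(|J|^2B^4+|J|B^4)\le C|J|^2B^4$ for any block $J$. For $\gamma<\infty$, I would bound the fourth moment directly by expanding
\[
\E\Big(\sum_{t\in J}V_t\Big)^4
=\sum_{t_1\le t_2\le t_3\le t_4}c\,\E V_{t_1}V_{t_2}V_{t_3}V_{t_4}.
\]
The standard argument controls ordered quadruples by their largest gap. If the largest gap $g$ is between $t_1$ and $t_2$, or between $t_3$ and $t_4$, then Davydov's covariance inequality with exponents $8$ gives a bound of $CB^4\alpha_X(g)^{1/2}$, since the isolated factor is centered. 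If the largest gap is the middle one, $t_2\to t_3$, then
\[
\E V_{t_1}V_{t_2}V_{t_3}V_{t_4}=\E(V_{t_1}V_{t_2})\,\E(V_{t_3}V_{t_4})+O\big(B^4\alpha_X(g)^{1/2}\big).
\]
Summing the product terms gives at most $(\sum_{s,t}|\operatorname{Cov}(V_s,V_t)|)^2\le C B^4|J|^2$, because $\sum_h\alpha_X(h)^{1/2}<\infty$. The remaining terms number at most $|J|\cdot g^2$ for each gap $g$ and are weighted by $\alpha_X(g)^{1/2}$. Since $\sum_g g^2 e^{-cg^\gamma/2}<\infty$, they contribute $C B^4|J|$. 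Hence $\|\sum_{t\in J}V_t\|_{L^4}\le C B\sqrt{|J|}$ uniformly over blocks $J\subseteq\{1,\dots,m\}$. The $L^8$ assumption is exactly what Davydov's inequality requires when a product of up to three $V$'s is split against the rest: $L^{8/3}$ against $L^8$ suffices, with exponent $1-3/8-1/8=1/2$.

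Next I would pass to the maximum. A moment bound of the form $\E|S_{(a,b]}|^4\le C B^4 (b-a)^2$ is the hypothesis of Móricz's maximal inequality (equivalently, Billingsley's Theorem 10.2 / Serfling), using the superadditive function $g(J)=|J|$ and exponent $2>1$. That inequality yields $\E\max_{r\le m}|S_r|^4\le C' C B^4 m^2$ directly. Alternatively, I could give a dyadic chaining proof. Write $\max_r|S_r|\le\sum_{j\ge0}\max_{i}|S_{(i2^{-j}m,(i+1)2^{-j}m]}|$ over dyadic levels. At level $j$, bound $\|\max_i|\cdot|\|_{L^4}\le(2^j\cdot CB^4(m2^{-j})^2)^{1/4}=CB\sqrt m\,2^{-j/4}$, which is summable in $j$. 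Integer rounding is handled by stopping at blocks of length one.

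The main obstacle is the fourth-moment bound under $\alpha$-mixing, specifically the bookkeeping of the middle-gap case and the combinatorial count of quadruples with a given largest gap. I would cite this as the standard Yokoyama/Doukhan moment inequality if a self-contained proof is not required. The constant $C$ depends only on $b_0,b_1,\gamma$ through $\sum_g g^2\alpha_X(g)^{1/2}$.
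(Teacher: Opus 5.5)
Your proof is correct and follows the paper's two-step architecture. First you prove a Davydov-based block bound $\E|\sum_{t\in I}V_t|^4\leq CB^4|I|^2$; then you apply dyadic chaining. For the chaining step, the paper pads with zeros to length $2^J$ with $m\leq2^J<2m$. It writes each prefix as a disjoint union of at most one dyadic block per level and sums $CB\,2^{J/2}2^{-q/4}$ over levels $q$. This is exactly your chaining alternative, and citing M\'oricz's inequality is an equally valid shortcut.

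The genuine difference is the bookkeeping in the block bound. You classify ordered quadruples by the largest of the three gaps. This forces you to treat the middle-gap case via the split $\E(V_{t_1}V_{t_2})\,\E(V_{t_3}V_{t_4})+\operatorname{Cov}(V_{t_1}V_{t_2},V_{t_3}V_{t_4})$. The paper instead classifies quadruples only by the outer gap $d=\max\{t_2-t_1,t_4-t_3\}$:
\begin{itemize}
\item If $d\geq1$, a centered endpoint factor is separated by $d$ from the other three. One Davydov step (exponents $8$ and $8/3$) then gives $CB^4\alpha_V(d)^{1/2}$, whatever the middle gap is.
\item Letting $t_1$ and $t_3-t_2$ range freely costs at most $\ell^2$ quadruples per outer-gap pair. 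The total is $CB^4\ell^2\sum_{d\geq0}(2d+1)e^{-cd^\gamma}$.
\end{itemize}
Because the target is already of order $\ell^2$, this crude count loses nothing. The step you identified as the main obstacle never arises. Under independence, the paper likewise just counts the surviving quadruples $t_1=t_2$, $t_3=t_4$ instead of invoking Rosenthal.

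Your finer decomposition costs more work but shows the true structure: a squared covariance sum plus an $O(|J|)$ remainder. It is also the version that extends to higher even moments $2q$. There, leaving all interior gaps free would give $\ell^{2q-2}$ rather than $\ell^q$. The paper's shortcut works precisely because $q=2$.

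One notational point: the lemma concerns the mixing coefficients of the $V_t$ themselves, so you should write $\alpha_V$ rather than $\alpha_X$.
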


\begin{proof}[Proof of \Cref{lem:mixing-fourth-maximal}]
\noindent\textbf{Step 1: Fourth moments of interval sums.}
For ordered indices $t_1\leq t_2\leq t_3\leq t_4$, let
$d=\max\{t_2-t_1,t_4-t_3\}$. If $d=t_2-t_1\geq1$, centering,
Davydov's covariance inequality \citep{Davydov1968} with exponents
$8$ and $8/3$, and H\"older's inequality give
\begin{align*}
\abs{\E(V_{t_1}V_{t_2}V_{t_3}V_{t_4})}
&=\abs{\operatorname{Cov}(V_{t_1},V_{t_2}V_{t_3}V_{t_4})}
\leq CB^4\alpha_V(d)^{1/2}.
\end{align*}
If $d=t_4-t_3\geq1$, isolate $V_{t_4}$ instead. For $d=0$,
H\"older's inequality bounds the product moment by $B^4$. Hence, for
finite $\gamma$,
\begin{align*}
\abs{\E(V_{t_1}V_{t_2}V_{t_3}V_{t_4})}
&\leq CB^4\exp(-c d^\gamma).
\end{align*}
For a consecutive interval $I$ of $\ell$ indices,
\begin{align}
\E\abs{\sum_{t\in I}V_t}^4
&\leq24\sum_{\substack{t_1,\ldots,t_4\in I\\
t_1\leq t_2\leq t_3\leq t_4}}
\abs{\E(V_{t_1}V_{t_2}V_{t_3}V_{t_4})} \leq CB^4\ell^2\sum_{d=0}^\infty(2d+1)e^{-cd^\gamma}
\leq CB^4\ell^2.
\label{eq:improved-mixing-block-fourth}
\end{align}
The first inequality uses the fourth-power expansion, the triangle
inequality, and at most $4!$ permutations per ordered quadruple. The second uses the preceding moment
bound, at most $\ell^2$ choices of $t_1$ and $t_3-t_2$ for each outer-gap
pair $(a,b)=(t_2-t_1,t_4-t_3)$, and $2d+1$ pairs with
$\max\{a,b\}=d$. The last inequality follows from convergence of the
series for finite $\gamma\geq1$.
Under independence, a product
expectation can be nonzero only if $t_1=t_2$ and $t_3=t_4$.
There are at most $\ell^2$ such ordered quadruples, each with absolute
product expectation at most $B^4$, so the same bound holds.

\medskip
\noindent\textbf{Step 2: Dyadic decomposition.}
Choose $J\geq0$ with $m\leq2^J<2m$, and set $V_t=0$ for
$m<t\leq2^J$. For every interval $I\subseteq\{1,\ldots,2^J\}$,
\begin{align*}
\sum_{t\in I}V_t&=\sum_{t\in I\cap\{1,\ldots,m\}}V_t.
\end{align*}
The intersection is empty or an interval of length at most $|I|$;
hence \eqref{eq:improved-mixing-block-fourth} also holds for the extended
sequence.

For $q=0,\ldots,J$ and $b=0,\ldots,2^q-1$, define
\begin{align*}
I_{q,b}&=\{b2^{J-q}+1,\ldots,(b+1)2^{J-q}\}.
\end{align*}
Every prefix $\{1,\ldots,r\}$ is a disjoint union of at most one
such interval at each level $q$. Thus
\begin{align*}
\max_{0\leq r\leq m}\abs{\sum_{t=1}^rV_t}
&\leq\sum_{q=0}^J\max_{0\leq b<2^q}
\abs{\sum_{t\in I_{q,b}}V_t}.
\end{align*}
Minkowski's inequality, followed by bounding a maximum of fourth
powers by their sum, gives
\begin{align*}
\left\|\max_{0\leq r\leq m}\abs{\sum_{t=1}^rV_t}\right\|_{L^4}
&\leq\sum_{q=0}^J
\left\{\sum_{b=0}^{2^q-1}\E\abs{\sum_{t\in I_{q,b}}V_t}^4\right\}^{1/4}\\
&\leq CB\,2^{J/2}\sum_{q=0}^J2^{-q/4}
\leq CB\sqrt m.
\end{align*}
The second inequality uses \eqref{eq:improved-mixing-block-fourth}
and $|I_{q,b}|=2^{J-q}$; the last uses $2^J<2m$ and
$\sum_{q=0}^\infty2^{-q/4}<\infty$.
\end{proof}

The next lemma controls projected moment sums uniformly over directions
near the population direction, using forward and reverse sums for candidate
locations to the right and left of a true change, respectively.

\begin{lemma}[Maximal bounds near a fixed direction]
\label{lem:local-direction-maximal}
Suppose \Cref{ass:data-conditions} holds and $p\geq2$ is fixed.
Fix deterministic integers $0\leq a<b\leq n$ and take either the forward
list $(Y_{a+1},\ldots,Y_b)$ or the reverse list
$(Y_b,\ldots,Y_{a+1})$. Within this lemma, relabel the chosen list as
$Y_1,\ldots,Y_{b-a}$ and define
\begin{align*}
S_r&=\sum_{t=1}^r\{Y_t^{\otimes p}-\E Y_t^{\otimes p}\},
\qquad 1\leq r\leq m.
\end{align*}
Zero centered tensor summands may be included at either end, with
$m\geq1$ denoting the total number of terms.
For every fixed $\nu\in\Sph^D$ and $\rho\geq0$,
\begin{align}
\left\|\max_{1\leq r\leq m}
\abs{\ip{S_r}{\nu^{\otimes p}}_{\F}}\right\|_{L^4}
&\leq C\sigma^p\sqrt m,
\label{eq:improved-fixed-maximal}\\
\left\|\sup_{\norm{h}_2\leq\rho}\max_{1\leq r\leq m}
\abs{\ip{S_r}{(\nu+h)^{\otimes p}-\nu^{\otimes p}}_{\F}}
\right\|_{L^4}
&\leq C\sigma^p\sqrt m
\sum_{j=1}^p\{\rho\sqrt{D+1}\}^{j}.
\label{eq:improved-local-maximal}
\end{align}
Here $C>0$ depends only on $p$ and the fixed mixing parameters.
\end{lemma}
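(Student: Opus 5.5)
The plan is to reduce both bounds to scalar partial sums of $\alpha$-mixing variables and then apply \Cref{lem:mixing-fourth-maximal}.

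For \eqref{eq:improved-fixed-maximal}, set
\[
V_t=(\nu^\top Y_t)^p-\E(\nu^\top Y_t)^p .
\]
By \Cref{lem:tensor-product-tail}, $\norm{V_t}_{\psi_{2/p}}\le C_p\sigma^p$, so $\norm{V_t}_{L^8}\le C\sigma^p$. The same lemma shows the mixing coefficients are inherited. Reversing the list does not change the $\alpha$-mixing bound, because the definition \eqref{eq:alpha-mixing} is symmetric in past and future. Appending zero summands at the ends also preserves it. Since $\ip{S_r}{\nu^{\otimes p}}_{\F}=\sum_{t\le r}V_t$, \Cref{lem:mixing-fourth-maximal} gives \eqref{eq:improved-fixed-maximal} directly.

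For \eqref{eq:improved-local-maximal}, expand multilinearly. Since $S_r$ is symmetric,
\[
\ip{S_r}{(\nu+h)^{\otimes p}-\nu^{\otimes p}}_{\F}=\sum_{j=1}^p\binom pj\ip{S_r}{h^{\otimes j}\otimes\nu^{\otimes(p-j)}}_{\F}.
\]
It therefore suffices to bound, for each $j$,
\[
\Bigl\|\sup_{\norm h_2\le\rho}\max_r\abs{\ip{S_r}{h^{\otimes j}\otimes\nu^{\otimes(p-j)}}_{\F}}\Bigr\|_{L^4}\le C\sigma^p\sqrt m\,\bigl(\rho\sqrt{D+1}\bigr)^j .
\]
The idea is to trade the supremum over $h$ for a sum over coordinates rather than using a net. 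A net would cost $\sqrt{D}$ times logarithmic factors, or worse for the sub-Weibull tails, not the clean $(\sqrt{D+1})^j$. Write $h=\sum_i h_ie_i$ and use $\norm h_1\le\sqrt{D+1}\,\rho$. Then
\[
\sup_{\norm h_2\le\rho}\abs{\ip{S_r}{h^{\otimes j}\otimes\nu^{\otimes(p-j)}}_{\F}}\le \bigl(\rho\sqrt{D+1}\bigr)^j\max_{i_1,\dots,i_j}\abs{\ip{S_r}{e_{i_1}\otimes\cdots\otimes e_{i_j}\otimes\nu^{\otimes(p-j)}}_{\F}} .
\]
Bounding the maximum over multi-indices crudely by a sum would cost $(D+1)^{j/4}$ in $L^4$, which is too much. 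A cleaner route is Cauchy--Schwarz in each of the $j$ modes:
\[
\abs{\ip{S_r}{h^{\otimes j}\otimes\nu^{\otimes(p-j)}}_{\F}}\le\rho^j\,\norm{S_r\times_{j+1}\nu\cdots\times_p\nu}_{\F},
\]
where the right side is the Frobenius norm of the order-$j$ contracted tensor. Its square is $\sum_{i_1,\dots,i_j}W_{r,i}^2$, with $W_{r,i}=\sum_{t\le r}\bigl[Y_{t,i_1}\cdots Y_{t,i_j}(\nu^\top Y_t)^{p-j}-\E\bigr]$.

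Each $W_{\cdot,i}$ is a scalar mixing partial sum. By \Cref{lem:tensor-product-tail}, applied with coordinate directions, its summands have uniformly bounded $L^8$ norms. Hence $\norm{\max_r\abs{W_{r,i}}}_{L^4}\le C\sigma^p\sqrt m$ by \Cref{lem:mixing-fourth-maximal}. Minkowski's inequality in $L^2$ then gives
\[
\Bigl\|\max_r\sum_iW_{r,i}^2\Bigr\|_{L^2}\le\sum_i\norm{\max_r W_{r,i}^2}_{L^2}\le C\sigma^{2p}m(D+1)^j .
\]
Taking square roots gives an $L^4$ bound of $C\sigma^p\sqrt m\,(D+1)^{j/2}\rho^j$, which is exactly the $j$-th term. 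Summing over $j$ with the fixed binomial constants yields \eqref{eq:improved-local-maximal}.

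The main obstacle is this dimension bookkeeping: getting exactly $(\sqrt{D+1})^j$ with no extra logarithmic or net factors. The Frobenius–Cauchy–Schwarz plus Minkowski route handles this because it uses only second moments summed over coordinates. A subtlety is that $\max_r$ is taken inside the sum over coordinates; this is fine because $\max_r\sum_i\le\sum_i\max_r$. One also needs uniform $L^8$ control of the mixed coordinate–direction products, which \Cref{lem:tensor-product-tail} supplies through its augmented projection bound.
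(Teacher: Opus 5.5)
Your proposal is correct and follows the paper's proof essentially step for step. For \eqref{eq:improved-fixed-maximal} you use the same scalar reduction plus \Cref{lem:mixing-fourth-maximal}. For \eqref{eq:improved-local-maximal} you use the same symmetric binomial expansion and Cauchy--Schwarz against the contracted tensor $A_{r,j}$ (i.e.\ $S_r$ contracted with $\nu^{\otimes(p-j)}$), then coordinatewise maximal bounds and Minkowski in $L^2$ to obtain the $(D+1)^{j}$ factor; the abandoned $\ell_1$/max-over-coordinates detour is unnecessary but harmless.
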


\begin{proof}[Proof of \Cref{lem:local-direction-maximal}]
\textbf{Step 1: bounds for contracted tensor sums.}
For $j=1,\ldots,p$, define the order-$j$ tensor
$A_{r,j}\in(\R^{D+1})^{\otimes j}$ by
\begin{align*}
A_{r,j}
&=\sum_{t=1}^r\left\{
(\nu^\top Y_t)^{p-j}Y_t^{\otimes j}
-\E[(\nu^\top Y_t)^{p-j}Y_t^{\otimes j}]
\right\},
\end{align*}
with added summands set to zero, and let
$A_{r,0}=\ip{S_r}{\nu^{\otimes p}}_{\F}$.
Each coordinate summand is a centered product of $p$ unit-vector
projections, so its $L^8$ norm is at most $C_p\sigma^p$ by
\Cref{lem:tensor-product-tail}. The mixing bound is inherited under
coordinate maps and restriction to a block. Reversal uses
$\alpha(\mathcal A,\mathcal B)=\alpha(\mathcal B,\mathcal A)$;
added zeros generate trivial sigma-fields. Thus
\Cref{lem:mixing-fourth-maximal} yields
\begin{align*}
\left\|\max_{1\leq r\leq m}|(A_{r,j})_{\boldsymbol i}|\right\|_{L^4}
&\leq C\sigma^p\sqrt m
\end{align*}
for every coordinate $\boldsymbol i$. For $j=0$, the coordinate index
is omitted, giving \eqref{eq:improved-fixed-maximal}.
For $j\geq1$,
\begin{align}
\left\|\max_{1\leq r\leq m}\norm{A_{r,j}}_{\F}\right\|_{L^4}^{2}
&\leq\left\|\sum_{\boldsymbol i\in\{1,\ldots,D+1\}^j}
\max_{1\leq r\leq m}|(A_{r,j})_{\boldsymbol i}|^2\right\|_{L^2}
\nonumber\\
&\leq\sum_{\boldsymbol i\in\{1,\ldots,D+1\}^{j}}
\left\|\max_{1\leq r\leq m}|(A_{r,j})_{\boldsymbol i}|\right\|_{L^4}^{2}
\nonumber\\
&\leq Cm\sigma^{2p}(D+1)^j.
\label{eq:improved-frobenius-maximal}
\end{align}
The first inequality takes the maximum coordinatewise, the second is
Minkowski's inequality in $L^2$, and the last sums the coordinate
bounds over $(D+1)^j$ entries.

\noindent\textbf{Step 2: expansion around the fixed direction.}
Symmetry gives, for every $h\in\R^{D+1}$,
\begin{align}
\ip{S_r}{(\nu+h)^{\otimes p}-\nu^{\otimes p}}_{\F}
&=\sum_{j=1}^p\binom pj\ip{A_{r,j}}{h^{\otimes j}}_{\F}.
\end{align}
Since $\norm{h^{\otimes j}}_{\F}=\norm h_2^j$, Cauchy--Schwarz implies
\begin{align*}
\sup_{\norm h_2\leq\rho}\max_{1\leq r\leq m}
\abs{\ip{S_r}{(\nu+h)^{\otimes p}-\nu^{\otimes p}}_{\F}}
&\leq\sum_{j=1}^p\binom pj\rho^j
\max_{1\leq r\leq m}\norm{A_{r,j}}_{\F}.
\end{align*}
Taking $L^4$ norms gives
\begin{align*}
\left\|\sup_{\norm h_2\leq\rho}\max_{1\leq r\leq m}
\abs{\ip{S_r}{(\nu+h)^{\otimes p}-\nu^{\otimes p}}_{\F}}
\right\|_{L^4}
&\leq\sum_{j=1}^p\binom pj\rho^j
\left\|\max_{1\leq r\leq m}\norm{A_{r,j}}_{\F}\right\|_{L^4}\\
&\leq C\sigma^p\sqrt m\sum_{j=1}^p\binom pj
\{\rho\sqrt{D+1}\}^{j}\\
&\leq 2^pC\sigma^p\sqrt m\sum_{j=1}^p
\{\rho\sqrt{D+1}\}^{j}.
\end{align*}
The first inequality is Minkowski's inequality, the second follows from
\eqref{eq:improved-frobenius-maximal}, and the third uses
$\binom pj\leq2^p$. Since $p$ is fixed, absorbing $2^p$ into $C$
proves \eqref{eq:improved-local-maximal}.
\end{proof}

\end{document}